\newtheorem{conjecture}{Conjecture}
\newtheorem{definition}{Définition}
\newtheorem{example}{Exemple}
\newtheorem{lemma}{Lemma}
\newtheorem{proposition}{Proposition}
\newtheorem{property}{Propriété}
\newtheorem{proof}{Preuve}
\begin{document}

\begin{titlepage}
	\center\footnotesize RÉPUBLIQUE TUNISIENNE\\
	\vspace{0.1em}\footnotesize MINISTÈRE DE L'ENSEIGNEMENT SUPÉRIEUR ET DE LA RECHERCHE SCIENTIFIQUE\\
	\vspace{0.1em}\footnotesize UNIVERSITÉ DE TUNIS EL MANAR\\
	\vspace{0.1em}\footnotesize FACULTÉ DES SCIENCES DE TUNIS\vspace{0.1em}

	{\bfseries\huge Thèse} \\
	\vspace{0.25em}
	{\slshape\normalsize Présentée en vue de l’obtention du diplôme de } \\
	\vspace{0.25em}
	{\textrm{\normalsize\textbf{Docteur en Informatique}}} \\ 
	\vspace{1em}
	{\slshape\normalsize Par } \\
	\vspace{1em}
	{\bfseries\large Mohamed Nidhal \textsc{Jelassi}} \\
	\vspace{3em}\hrule
	
	{\textrm{\Huge\textbf{Etude, représentation et applications des traverses minimales d'un hypergraphe}}}
	
	\hrule
	
	\vspace{3em}
	\textbf{Soutenue publiquement le 08/12/2014 } \\ \vspace{1em}
	
	\textbf{Au sein du laboratoire : LIPAH} \\ \vspace{1em}
		\textbf{Sous la direction de Pr. Sadok Ben Yahia et Pr. Christine Largeron}
		
		\vspace{3.92em}
		
		\center\large{2013-2014}
	\end{titlepage}

\markboth{Remerciements}{Remerciements}
\fancyhead[R]{}
\footnotesize{
Je tiens tout d'abord à remercier Monsieur Jean-Marc \textsc{Petit}, Professeur à l'Institut National des Sciences Appliquées de Lyon, et Madame Sawssen \textsc{Krichen}, Maître de Conférences HDR à l'Institut Supérieur de Gestion, d'avoir accepté d'être les rapporteurs de mon travail de thèse. Leurs lectures minutieuses et leurs remarques pertinentes m'ont permis d'améliorer la qualité de ce manuscrit. Plus généralement, je remercie l'ensemble du jury, notamment Messieurs Faouzi \textsc{Ben Charrada}, Maître de Conférences HDR à la Faculté des Sciences de Tunis, François \textsc{Jacquenet}, Professeur à l'Université Jean Monnet de Saint-Etienne, et Mohamed \textsc{Quafafou}, Professeur à l'Université  d'Aix-Marseille, qui ont tout de suite accepté d'être examinateurs et juger mon travail.\\

Je  tiens  à  adresser  mes  vifs  remerciements  et  ma  profonde  reconnaissance  à  Madame Christine \textsc{Largeron}, Professeur a l'Université Jean Monnet de Saint-Etienne, et Monsieur Sadok \textsc{Ben Yahia}, Professeur à la Faculté des Sciences de Tunis, pour avoir dirigé mon travail, pour leurs conseils et disponibilités, pour leurs encouragements et leurs soutiens continus. L'intérêt qu'ils ont manifesté pour mon travail, leurs suggestions, et leurs remarques ont été d'une importance capitale. Sous leur direction, j'ai appris davantage de rigueur, de sens critique et de discipline.\\

Un merci spécial également à mon frère Nader \textsc{Jelassi}. Ta présence à mes côtés a été précieuse et vitale, notamment lors de nos séjours en France. Nos longues discussions, en arpentant les rues de Clermont et Saint-Etienne, ont été importantes pour l'avancement de cette thèse. Merci d'avoir toujours été là pour moi.\\

Un immense merci aussi pour mon compagnon de route, mon ami et mon frère Aymen \textsc{Sellaouti}. Nous nous sommes rencontrés lors de notre première année universitaire et nous ne nous sommes plus quittés, gravissant les échelons ensemble. Je te remercie pour tes encouragements, tes conseils et le soutien que tu m'as apporté tout au long de ces années.\\

Enfin, je remercie tous les membres du laboratoire \textsc{Lipah} pour ces années passées ensemble ainsi que ceux du laboratoire Hubert Curien pour l'accueil chaleureux qui m'a été réservé lors de mes différents séjours à Saint-Etienne.
}
\normalsize{
}

\newpage
 \pagenumbering{roman}
\thispagestyle{empty} \tableofcontents
\newpage
\thispagestyle{empty} \listoffigures
\newpage
\thispagestyle{empty} \listoftables
\newpage

\pagenumbering{arabic}
\chapter*{Introduction générale} \markboth{Introduction
générale}{\textsc{Introduction générale}}

La théorie des hypergraphes se propose de généraliser la théorie des graphes en introduisant le concept d'hyperarête. Une traverse est définie comme un ensemble de sommets qui intersecte toutes les hyperarêtes d'un hypergraphe et elle est dite minimale si elle l'est au sens de l'inclusion. Le problème de l'extraction des traverses minimales d'un hypergraphe est connu comme étant particulièrement difficile dans la mesure où premièrement, il est connu pour être coNP-complet malgré que sa complexité exacte est une question qui reste toujours ouverte, et deuxièmement de tous les algorithmes qui se sont attachés à calculer les traverses minimales, aucun n'a, à ce jour, une complexité théorique polynomiale en la taille de l'entrée et de la sortie, sauf pour des hypergraphes bien particuliers. De plus, l'un des verrous scientifiques les plus difficiles à affronter, en matière d'extraction des traverses minimales, est le nombre de traverses minimales à explorer qui peut être très élevé même pour des hypergraphes simples.

Pour autant, l'intérêt pour l'extraction des traverses minimales est en nette croissance due principalement aux solutions qu'elles offrent dans divers domaines d'application comme les bases de données, l'intelligence artificielle, l'e-commerce, le web sémantique, etc. Par ailleurs, la fouille de données arrive maintenant à maturité sur les contextes d'extraction classiques pour lesquels les algorithmes ont été mis au point. Ainsi, par exemple, les bases de données commerciales, qui décrivent les achats réalisés par des millions de clients sur des milliers de références sont désormais parfaitement exploitées par les spécialistes à l'aide de méthodes fondées sur les règles d'association. Ces techniques se popularisent aussi dans les domaines médicaux, économiques ou industriels.

Dans ce travail de thèse, nous focalisons notre intérêt sur les liens, déjà prouvés dans la littérature, entre la fouille de données et la théorie des hypergraphes pour redéfinir les notions d'hypergraphe et de traverse minimale. L'adaptation des techniques de la fouille de données, conjuguée à l'exploitation de certaines propriétés des hypergraphes, présente une voie intéressante pour la mise en place d'un cadre méthodologique pour l'optimisation de l'extraction des traverses minimales.

Avec l'emergence du Web 2.0 et des réseaux sociaux, nous avons tout d'abord été amenés à mettre à profit les traverses minimales pour la recherche d'information au sein ces systèmes communautaires. Ceci nous a conduit à proposer une modélisation originale d'un réseau social sous forme d'hypergraphe, particulièrement utile et adapté au cas où on ne dispose pas de toutes les relations entre les individus du réseau social considéré. A partir de cet hypergraphe, nous nous sommes intéressés à une classe particulière de traverses minimales, appelée \emph{traverse minimale multi-membres}, dans le contexte des systèmes communautaires. Un protocole expérimental basé sur des jeux de données du monde réel a confirmé l'intérêt de cette approche. Ce faisant, nous avons été confrontés au problème du nombre important de traverses minimales à extraire même pour un hypergraphe simple. Pour le résoudre, nous préconisons de représenter cet ensemble de manière concise et exacte en exploitant l'irrédondance de l'information dans les hypergraphes. De ce fait, notre travail consiste en la representation de l'ensemble des traverses minimales par un sous-ensemble succinct, composé de traverses minimales irrédondantes. L'introduction d'une mesure d'évaluation, appelée \emph{taux de compacité}, nous permettra de calculer le pourcentage de traverses minimales pouvant être déduite directement à partir de l'ensemble des traverses minimales irrédondantes. Nous avons illustré l'intérêt de cette représentation concise et exacte des traverses minimales pour résoudre le problème de l'inférence des dépendances fonctionnelles dans le but de calculer la couverture minimale d'une relation donnée.

Par ailleurs, afin d'optimiser le calcul des traverses minimales d'un hypergraphe, nous avons proposé de décomposer l'hypergraphe d'entrée en un nombre d'hypergraphes partiels égal au nombre de transversalité de l'hypergraphe initial. A partir de ces hypergraphes partiels, nous calculons leurs \emph{traverses minimales locales} respectives, dont le produit cartésien nous fournira un ensemble de traverses de l'hypergraphe. Les tests de minimalités permettent, ensuite, de ne garder que les traverses minimales. Le principal intérêt de cette approche est que ces tests sont inutiles pour les traverses dont la taille est égal au nombre de transversalité de l'hypergraphe d'entrée et dont nous sommes sûrs qu'ils sont minimales.

\bigskip

Le présent mémoire, décrivant le travail réalisé au cours de cette thèse, est composé de cinq chapitres.\\

Le \textbf{premier chapitre} introduit le contexte de nos recherches et présente les concepts de base que nous utiliserons dans la suite. La diversité des domaines d'application des traverses minimales est aussi mise en avant dans ce chapitre avec des points d'orgue pour les domaines ayant fait l'objet de nombreux travaux dans la littérature.\\

Le \textbf{deuxième chapitre} présente l'état de l'art et notamment les différents algorithmes, proposés dans la littérature, pour l'extraction des traverses minimales d'un hypergraphe. Ce chapitre mettra en évidence les différentes approches, techniques et autres stratégies utilisés pour présenter des solutions à cette problématique ainsi qu'une étude comparative de ces algorithmes. Cette synthèse permet aussi de situer nos contributions par rapport aux travaux antérieurs.\\

Notre première contribution est introduite dans le \textbf{troisième chapitre}. Nous présentons les \emph{traverses minimales multi-membres} (\textsc{tmm}), qui représentent une "sous-classe" des traverses minimales d'un hypergraphe. Ces \textsc{tmm} sont les plus petites traverses minimales, vérifiant une propriété de recouvrement, d'un hypergraphe d'entrée dont chaque hyperarête représente une communauté d'un réseau social donné. Un algorithme performant d'extraction de ces \textsc{tmm}, décrit dans ce chapitre, repose sur le calcul du nombre de transversalité de l'hypergraphe d'entrée. De plus, une application sur des jeux de données du monde réel est décrite et interprétée pour mettre en exergue l'intérêt des \textsc{tmm}.\\

Dans le \textbf{quatrième chapitre}, nous présentons notre deuxième contribution qui consiste en la representation concise et exacte de l'ensemble des traverses minimales. Cette représentation exploite l'irrédondance de l'information dans les hypergraphes, qui nous a conduit à définir et à mettre en place un cadre méthodologique pour le calcul des traverses minimales irrédondantes. Nous montrons, ensuite, l'intérêt de notre representation concise et exacte des traverses minimales à travers la résolution du problème de l'inférence des dépendances fonctionnelles. Les traverses minimales ayant déjà été utilisées pour optimiser le processus de calcul de la couverture minimale de toutes les dépendances fonctionnelles d'une relation donnée, nous proposons d'utiliser les traverses minimales irrédondantes pour réduire la taille de cette couverture.\\

Dans le \textbf{cinquième chapitre}, nous nous intéressons à l'extraction de toutes les traverses minimales. Étant donné que le nombre de ces dernières peu être exponentiel en la taille de l'hypergraphe, nous proposons d'optimiser leur calcul en décomposant l'hypergraphe d'entrée en des hypergraphes partiels. Nous choisissons un nombre d'hypergraphes partiels égal au nombre de transversalité de l'hypergraphe d'entrée afin d'éliminer des tests de minimalité et d'optimiser les temps de traitement nécessaires au calcul de toutes les traverses minimales. A partir de chaque hypergraphe partiel, un ensemble de traverses minimales (dites \emph{locales}) est calculé et un produit cartésien de toutes les traverses minimales locales permet de retrouver les traverses minimales de l'hypergraphe initial, suivant une stratégie "\emph{diviser pour régner}". \\ 
\chapter{Contexte et concepts de base}\label{chap1}
\section{Introduction}
Les hypergraphes généralisent la notion de graphe en définissant des hyperarêtes qui contiennent des familles de sommets, contrairement aux arêtes classiques qui ne joignent que deux sommets. D'un point de vue théorique, les hypergraphes permettent de généraliser certains théorèmes de graphes, voire d'en factoriser plusieurs en un seul. D'un point de vue pratique, ils sont parfois préférés aux graphes puisqu'ils modélisent mieux certains types de contraintes. Dans ce chapitre, nous présentons quelques définitions essentielles sur les hypergraphes et les traverses minimales nécessaires à l'introduction de la problématique de l'extraction des traverses minimales, en se basant essentiellement sur les définitions proposées par Berge dans \cite{berge1989}. Ensuite, nous passons en revue le large éventail des domaines d'application des traverses minimales.
\section{Préliminaires}
La théorie des hypergraphes se propose de généraliser la théorie des graphes en introduisant le concept d'hyperarête. Dans un hypergraphe où chaque hyperarête peut contenir plusieurs sommets, une traverse minimale correspond à un sous-ensemble de sommets qui intersecte toutes les hyperarêtes d'un hypergraphe, en étant minimal au sens de l'inclusion.

\subsection{Hypergraphes}

Un hypergraphe $H = (\mathcal{X}, \xi)$ est donc constitué de deux ensembles $\mathcal{X}$ et $\xi$, et est défini suivant la Définition \ref{hypergraphe}.

\begin{definition} \label{hypergraphe}
\textsc{hypergraphe} \cite{berge1989}\\
Soit le couple $H = (\mathcal{X}, \xi)$ avec $\mathcal{X}$ = $\{$$x_1$, $x_2$, \ldots, $x_n$$\}$ un ensemble fini et
$\xi$ = \{$e_1$, $e_2$, \ldots, $e_m$ \} une famille de parties de $\mathcal{X}$. $H$ constitue un hypergraphe sur $\mathcal{X}$ si :
\begin{enumerate}
    \item $e_i \neq \emptyset, i \in \{1$,\ldots, $m$\};
    \item $\bigcup \limits_{i=1, \ldots, m}~ e_i = \mathcal{X}$.
\end{enumerate}
\end{definition}

Les éléments $x_i$ de $\mathcal{X}$ sont appelés sommets de l'hypergraphe et les éléments $e_i$ de $\xi$ sont appelés hyperarêtes de l'hypergraphe.

Un hypergraphe est dit d'ordre \emph{n} si $\mid \mathcal{X} \mid$ = $n$ et la taille d'un hypergraphe est égale au nombre d'occurrences des sommets dans ses hyperarêtes.

\begin{example}
\bigskip
La Figure \ref{hypchap1} illustre un hypergraphe  $H = (\mathcal{X}, \xi)$ d'ordre $8$ et de taille $15$ tel que $\mathcal{X} = \{1, 2, 3, 4, 5, 6, 7, 8\}$ et  $\xi = \{\{$1, 2$\}$, $\{$2, 3, 7$\}$, $\{$3, 4, 5$\}$, $\{$4, 6$\}$, $\{$6, 7, 8$\}$, $\{$7$\}$\}.
\end{example}

\begin{figure}[htbp]

\begin{center}
\includegraphics[scale=0.8]{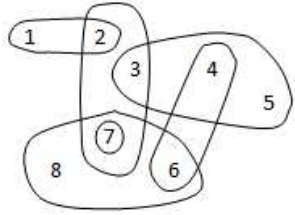}
\caption{Exemple d'un hypergraphe}\label{hypchap1}
\end{center}

\end{figure}

\begin{definition}\label{hsimple} \textsc{hypergraphe simple}\\
$H = (\mathcal{X}, \xi)$ est dit hypergraphe \emph{simple} si pour tout $e_i$ $\in$ $\xi$ et $e_j$ $\in$ $\xi$, alors $e_i$ $\subseteq$ $e_j$ $\Rightarrow$ i = j, i.e, aucune hyperarête de $H$ ne renferme une autre hyperarête, sinon, $H$ est dit hypergraphe multiple.
\end{definition}

Ainsi, la définition des hypergraphes englobe celle des graphes. En effet, un graphe simple est un hypergraphe simple dont toutes les hyperarêtes sont de cardinalité 2, i.e., $\mid$ $e_i$ $\mid$ = $2$ $\forall$ $e_i$ $\in$ $\xi$.

\begin{property}\label{sperner} \cite{berge1989}\\
Tout hypergraphe simple $H$ d'ordre $n$ vérifie :
\begin{center}
$ \displaystyle { \sum_{e \in \xi}^{}} {\begin{pmatrix}
   n \\
   \mid e \mid
\end{pmatrix}}^{-1}$ $\preceq$ 1.
\end{center}

De plus, le nombre d'arêtes vérifie :
\begin{center}
$\mid \xi \mid$ $\preceq$ ${\begin{pmatrix}
   n \\
   \lfloor n/2 \rfloor
\end{pmatrix}}$
\end{center}
\end{property}

\begin{definition} \textsc{sous-hypergraphe} \cite{berge1989}\\
Soit l'hypergraphe $H = (\mathcal{X}, \xi)$ et $\mathcal{Y}$ $\subseteq$ $\mathcal{X}$, nous appelons sous-hypergraphe de $H$ tout hypergraphe $H^\mathcal{Y}$ = ($\mathcal{Y}$, $\xi^{\mathcal{Y}}$) engendré par $\mathcal{Y}$, tel que $\xi^{\mathcal{Y}}$ = \{ ${e^{\mathcal{Y}}_i} = e_i \cap \mathcal{Y}$ $\mid$ $e_i \in$ $\xi$ et ${e^{\mathcal{Y}}_i}$ $\cap$ $\mathcal{Y}$ $\neq \emptyset$\}

\end{definition}

\begin{definition} \textsc{hypergraphe partiel} \cite{berge1989}\\
Soit l'hypergraphe $H = (\mathcal{X}, \xi)$ et $\xi' \subset \xi$, nous appelons hypergraphe partiel de $H$ tout hypergraphe $H' = (\mathcal{X'}, \xi')$ engendré par $\xi'$, tel que $\mathcal{X'}$ = $\bigcup_{e' \in \xi'} e'$. $H'$ est alors la restriction de l'hypergraphe $H$ à un sous-ensemble d'hyperarêtes $\xi'$ inclus dans $\xi$ et aux sommets inclus dans $\mathcal{X'}$.
\end{definition}

Le rang \emph{r($H$)} d'un hypergraphe $H$ est, le nombre maximum de sommets d'une hyperarête et est défini par \emph{r($H$)} = max\{$\mid$$e_i$$\mid$, $\forall$ $e_i$ $\in$ $\xi$\}. Inversement, l'anti-rang \emph{ar($H$)} désigne le nombre minimum de sommets d'une hyperarête, i.e, \emph{ar($H$)} = min\{$\mid$ $e_i$ $\mid$, $\forall$ $e_i$ $\in$ $\xi$\} \cite{berge1989} \cite{EiterG95}. Trivialement, \emph{ar($H$)} $\leq$ \emph{r($H$)}. Si le rang et l'anti-rang d'un hypergraphe $H$ sont égaux, alors $H$ est dit \emph{uniforme}. Tout hypergraphe uniforme est simple.

\begin{example}
Considérons l'hypergraphe $H$ de la Figure \ref{hypchap1}. Nous avons \emph{r($H$)} = 3 et \emph{ar($H$)} = 1. Par conséquent, $H$ n'est pas uniforme.
\end{example}

De plus, $H$ est dit \emph{n-uniforme} si $H$ est un hypergraphe simple uniforme de rang \emph{n} \cite{EiterG95}. En ce sens, tout graphe est un hypergraphe uniforme de rang égal à $2$.

Un hypergraphe est dit "intersectant" si aucun couple de ses hyperarêtes n'est disjoint, i.e., $\forall e_1$, $e_2 \in \xi$, $e_1$ $\cap$ $e_2$ $\neq \emptyset$ \cite{EiterG95}.

Dans un hypergraphe, deux sommets $x_i$ et $x_j$ sont dits \emph{adjacents} s'il existe une hyperarête $e_i$ qui les contient tous les deux ; deux hyperarêtes $e_i$ et $e_j$ sont dites adjacentes si leur intersection est non vide.\\

Un hypergraphe $H = (\mathcal{X}, \xi)$ peut être représenté par une matrice d'incidence, notée $IM_H$, définie comme suit : \\

\begin{center}
$IM_H[e_i, x_j] \left\{
    \begin{array}{lll}
        = 1 & si & x_j \in e_i\\
     = 0 & sinon &
    \end{array}
\right.$
\end{center}

\begin{definition}
\textsc{Hypergraphe dual}\\
A tout hypergraphe $H = (\mathcal{X}, \xi)$, nous pouvons faire correspondre un hypergraphe $H^*$ = ($\mathcal{X}^*$, $\xi^*$) tel que $\mathcal{X}^*$ = $\xi$ et $\xi^*$ = $\mathcal{X}$. Les sommets $x^*_1$, $x^*_2$, $\ldots$, $x^*_m$ représentent les hyperarêtes de $H$ et les hyperarêtes $e^*_1$, $e^*_2$, $\ldots$, $e^*_n$ représentent les sommets $x_1$, $x_2$, $\ldots$, $x_n$ de $H$, où : \\
\begin{center}
$X_j$ = \{$E_i$ $\mid$ i $\preceq$ m, $e_i$ $\ni$ $x_j$\} \bigskip (j = 1, 2, $\ldots$, n).
\end{center}
On a $X_j$ $\neq$ $\emptyset$, $\bigcup_{j} X_i$ = $\xi$, donc $H^*$ est bien un hypergraphe.
\end{definition}

\begin{figure}
\begin{center}
\includegraphics[scale=0.7]{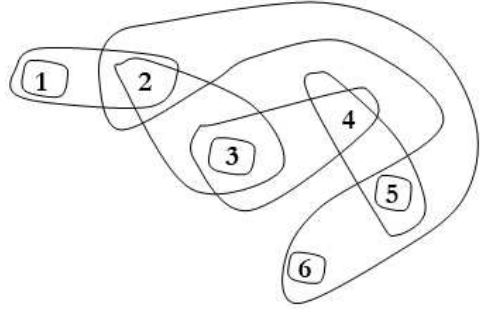}
\caption{Hypergraphe dual}\label{hypdual}
\end{center}
\end{figure}

\begin{example}
Reconsidérons l'hypergraphe $H$ de la Figure \ref{hypchap1}. La Figure \ref{hypdual} illustre l'hypergraphe dual $H^*$ de $H$ tel que $H^*$ = ($\mathcal{X}^*$, $\xi^*$) $\mathcal{X}^* = \{1, 2, 3, 4, 5, 6\}$ et  $\xi^* = \{\{$1$\}, \{$1, 2$\}$, $\{$2, 3$\}$, $\{$3, 4$\}$, \{$3$\} $\{$4, 5$\}$, $\{$2, 5, 6$\}$, $\{$5$\}$, \{$6$\}\}.
\end{example}

$H^*$ est appelé l'\emph{hypergraphe dual} de $H$. La matrice d'incidence $IM_H$ de l'hypergraphe $H$, et la matrice d'incidence $IM_{H^*}$ de l'hypergraphe $H^*$, se déduisent l'une de l'autre par transposition ; on a donc en particulier $(H^*)^*$ = $H$. Si deux sommets $x_i$ et $x_j$ de $H$ sont adjacents, il leur correspond dans $H^*$ des hyperarêtes $e^*_i$ et $e^*_j$ adjacentes ; si deux hyperarêtes $e_i$ et $e_j$ de $H$ sont adjacentes, il leur correspond des sommets $x^*_i$ et $x^*_j$ adjacents de $H^*$.

\begin{definition}\label{TMberge}
\textsc{Traverse minimale}~\cite{berge1989}\\
 Soit un  hypergraphe $H = (\mathcal{X}$, $\xi)$. L'ensemble des traverses de $H$, noté $\gamma_H$, est égal à : $\gamma_H$ = $\{T \subset \mathcal{X}$ $|$ $T$ $\bigcap$ $e_i$ $\neq  \emptyset$, $\forall i = 1, \ldots, \mid \xi \mid\}$.

 Une traverse $ T $ de $\gamma_H$ est dite \emph{minimale} s'il n'existe pas une autre traverse S de $\gamma_H$ incluse dans $T$ :
$\nexists$ $S \in \gamma_H~s.t.~ S\subset T $.

\end{definition}

Nous noterons $\mathcal{M}_H$, l'ensemble des traverses minimales définies sur $H$. \\Dans l'exemple illustratif de la Figure \ref{hypchap1}, l'ensemble $\mathcal{M}_H$ des traverses minimales de l'hypergraphe est : $\{$ \{1, 4, 7\}, \{2, 4, 7\}, \{1, 3, 6, 7\}, \{1, 3, 6, 9\}, \{1, 5, 6, 7\}, \{2, 3, 6, 7\}, \{2, 3, 6, 9\}, \{2, 5, 6, 7\}, \{2, 4, 6, 9\}, \{2, 4, 8, 9\}, \{2, 5, 6, 9\}, \{1, 3, 4, 8, 9\}$\}$.

A partir d'un hypergraphe $H$ = $(\mathcal{X}$, $\xi)$, l'ensemble des traverses minimales $\mathcal{M}_H$ permet la construction de l'hypergraphe transversal, que nous avons noté $H^t$ = $(\mathcal{X}^t$, $\xi^t)$, tel que $\xi^t$ = $\mathcal{M}_H$ et $\mathcal{X}^t = \bigcup \limits_{i=1}^{\mid \xi^t \mid} e^t_i$ $\forall$ $e^t_i$ $\in \xi^t$ \cite{EiterIA}.




\begin{lemma}\cite{berge1989}
Pour tout hypergraphe simple $H$, nous avons $H^{(t)^{(t)}}$ = $H$.

\end{lemma}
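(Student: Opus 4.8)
The plan is to show the double inclusion $\mathcal{M}_{H^t} \subseteq \xi$ and $\xi \subseteq \mathcal{M}_{H^t}$, since once the edge families coincide the vertex sets coincide automatically (each is the union of its family, and both lie inside $\mathcal{X}$). Throughout I would use the standard duality between transversals and independent sets: a set $T \subseteq \mathcal{X}$ is a transversal of $H$ if and only if its complement $\mathcal{X} \setminus T$ contains no edge of $H$, equivalently if and only if $T$ is not contained in the complement of any edge. I would first record two elementary facts about $H^t$: its edges are exactly the minimal transversals of $H$, and — crucially — every edge $e \in \xi$ is itself a transversal of $H^t$, because $e$ meets every minimal transversal of $H$ (a minimal transversal disjoint from $e$ would still be a transversal of $H \setminus \{e\}$, but then it could be extended... no: more directly, if $T$ is a transversal and $T \cap e = \emptyset$ then $T$ fails to hit $e$, contradiction — so every transversal, in particular every minimal one, meets $e$).

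Next I would prove $\xi \subseteq \mathcal{M}_{H^t}$. Fix $e \in \xi$. By the remark above $e$ is a transversal of $H^t$; I must show it is \emph{minimal}. Suppose not, so some $e' \subsetneq e$ is also a transversal of $H^t$, i.e. $e'$ meets every minimal transversal of $H$. I claim this forces $e'$ to contain some edge of $H$: if $e'$ contained no edge of $H$, then $\mathcal{X} \setminus e'$ would be a transversal of $H$, hence would contain a minimal transversal $T$ of $H$, and $T \subseteq \mathcal{X}\setminus e'$ means $T \cap e' = \emptyset$, contradicting that $e'$ hits all minimal transversals. So $e'$ contains some $f \in \xi$ with $f \subseteq e' \subsetneq e$, contradicting that $H$ is \emph{simple} (no edge properly contains another). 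Hence $e$ is a minimal transversal of $H^t$, i.e. $e \in \mathcal{M}_{H^t} = \xi^{(t)^{(t)}}$.

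For the reverse inclusion $\mathcal{M}_{H^t} \subseteq \xi$, let $U$ be a minimal transversal of $H^t$. Since $U$ meets every minimal transversal of $H$, the argument just used shows $\mathcal{X}\setminus U$ contains no minimal transversal of $H$, hence contains no transversal of $H$ at all, hence $U$ itself... wait — I need that $U$ contains an edge of $H$: since $\mathcal{X} \setminus U$ is not a transversal of $H$, some edge $f \in \xi$ satisfies $f \cap (\mathcal{X}\setminus U) = \emptyset$, i.e. $f \subseteq U$. But by the first part $f \in \mathcal{M}_{H^t}$, so $f$ is a transversal of $H^t$ contained in $U$; minimality of $U$ forces $f = U$, so $U = f \in \xi$. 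This closes both inclusions and gives $\xi^{(t)^{(t)}} = \xi$, whence $H^{(t)^{(t)}} = H$.

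The main obstacle — and the only place the simplicity hypothesis is genuinely used — is the step in $\xi \subseteq \mathcal{M}_{H^t}$ where a proper subset $e' \subsetneq e$ that is still a transversal of $H^t$ is shown to contain a smaller edge of $H$, contradicting the Sperner (simplicity) property; without simplicity the conclusion fails, since the transversal hypergraph only sees the minimal edges of $H$. The rest is routine manipulation of the transversal/independent-set complementation identity, which I would state once as a preliminary observation and then reuse.
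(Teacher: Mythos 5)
Your proof is correct: the double-inclusion argument via the complementation identity (a set is a transversal iff its complement contains no edge), together with the observation that simplicity is exactly what prevents a proper subset $e'\subsetneq e$ that hits all minimal transversals from containing a smaller edge, is the standard proof of Berge's duality theorem, and your step showing $\mathcal{M}_{H^t}\subseteq\xi$ by extracting an edge $f\subseteq U$ and invoking minimality of $U$ is sound. The paper itself gives no proof (it only cites Berge), so there is nothing to compare against; your argument fills that gap correctly.
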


\begin{definition}\label{TMberge}
\textsc{Nombre de Transversalité}~\cite{berge1989}\\
 Soit un  hypergraphe $H = (\mathcal{X}$, $\xi)$, le nombre minimum de sommets d'un ensemble transversal est appelé le nombre de transversalité de l'hypergraphe $H$ et est désigné par :

 \begin{center}
 $\tau$($H$) = min \{$|T|$, s.t. $T \in \mathcal{M}_H$\}.
 \end{center}

\end{definition}

Ainsi, dans l'exemple illustratif de la Figure \ref{hypchap1}, le nombre de transversalité de l'hypergraphe $H$ est égal à $3$ car la plus petite traverse minimale de $\mathcal{M}_H$ renferme $3$ sommets.

La détermination d'un nombre de transversalité apparaît dans de nombreux problèmes combinatoires comme la détermination d'un ensemble stable maximum d'un graphe ou encore la détermination d'un ensemble absorbant minimum d'un $1$-graphe \cite{berge1989}.

\section{Problème de l'extraction des traverses minimales}
L'extraction des traverses minimales d'un hypergraphe est un des problèmes les plus importants en théorie des hypergraphes. C'est un problème algorithmique central et particulièrement difficile et la question de sa complexité exacte reste toujours ouverte. Plusieurs travaux se sont attachés à proposer diverses méthodes pour le traiter \cite{berge1989} \cite{KavvadiasS05} \cite{Boros03anefficient}. Fredman et Khachiyan ont proposé un algorithme avec une complexité quasi-polynomiale de $N(o^{log N})$ où $N$ représente la taille de l'entrée et de la sortie \cite{Fredkach96}. Ce résultat de Fredman et Khachiyan nous donne un algorithme d'extraction des traverses minimales dont la complexité est bornée par $\mid \mathcal{M}_H \mid$ $\times$ ${(\mid \mathcal{M}_H \mid + \mid H \mid)}^{o(\mid \mathcal{M}_H \mid + \mid H \mid)}$ \cite{thesearnaud}. Ce résultat relance le débat sur le fait que ce problème soit coNP-complet puisqu'à moins que tout problème coNP-complet admette un algorithme quasi-polynomial, le problème de l'extraction des traverses minimales n'est pas coNP-complet.

Trouver une traverse minimale d'un hypergraphe est une tâche aisée mais calculer l'ensemble de toutes les traverses minimales pose plusieurs problèmes dans la mesure où le nombre de sous-ensembles de sommets à tester est très important. Les travaux antérieurs, pour faire sauter les différents verrous scientifiques que posait l'extraction des traverses minimales d'un hypergraphe, se sont attachés à réduire l'espace de recherche. Néanmoins, le coût du calcul reste substantiellement élevé et les algorithmes existants se sont heurtés à des temps d'exécution conséquents et à l'incapacité de traitement lorsque le nombre de transversalité de l'hypergraphe d'entrée est grand.

Comme nous l'avons mentionné dans la section précédente, le problème de l'extraction des traverses minimales à partir d'un hypergraphe $H$ est équivalent à celui de la construction de l'hypergraphe transversal à $H$. Formellement, nous définissions ce problème comme suit: \\

\fbox{
   \begin{minipage}{0.9\textwidth}
      \textbf{Entrée:} Hypergraphe simple $H$ = $(\mathcal{X}$, $\xi)$.\\
      \textbf{Sortie:} Hypergraphe transversal $H^t$ = $(\mathcal{X}^t$, $\xi^t)$.
   \end{minipage}%
}
\\ \\
Même pour des hypergraphes simples, le nombre de traverses minimales d'un hypergraphe peut être exponentiel \cite{thesehagen}, comme le montre l'exemple \ref{expnb}.

\begin{example}\label{expnb}
Soit $H = (\mathcal{X}, \xi)$ un hypergraphe tel que $\mathcal{X}$ = ($x_1$, $x_2$, \ldots, $x_{2n}$) et $\xi = (\{$$x_1$, $x_2$$\}$, $\{$$x_3$, $x_4$$\}$, \ldots, $\{$$x_{2n-1}$, $x_{2n}$$\}$). $H$ est de taille $2n$ mais renferme $2^n$ traverses minimales.
\end{example}

La complexité des approches proposées dans la littérature, et décrites dans le chapitre suivant, est analysée en termes de la taille d'entrée et de sortie. Plus concrètement, si \emph{n} = $\mid \mathcal{X} \mid$, \emph{m} = $\mid \xi \mid$ et \emph{m'} = $\mid \mathcal{M}_H \mid$, nous disons qu'un algorithme d'extraction des traverses minimales est polynomial en la taille de l'entrée et de la sortie $N$ si sa complexité peut être bornée, de manière polynomiale, par $N$, qui désigne une fonction de \emph{n}, \emph{m} et \emph{m'}. En outre, un algorithme est incrémental s'il énumère une par une toutes traverses minimales de l'hypergraphe d'entrée de telle sorte que le temps nécessaire pour délivrer en sortie une nouvelle transverse minimale est polynomiale en \emph{n}, \emph{m} et \emph{k}, \emph{k} étant la taille de l'hypergraphe transversal.

La notion d'algorithme incrémental a ouvert la voie à une autre approche consistant à ne générer qu'un sous-ensemble de traverses minimales, i.e., une liste partielle de traverses minimales, à partir de l'hypergraphe d'entrée. Formellement, ce problème est défini comme suit :\\

\fbox{
   \begin{minipage}{0.9\textwidth}
      \textbf{Entrée:} Hypergraphe simple $H$ = $(\mathcal{X}$, $\xi)$ et un sous-ensemble de traverses minimales $S$ $\subseteq$ $\mathcal{M}_H$.\\
      \textbf{Sortie:} Vrai si $S$ $\subseteq$ $\mathcal{M}_{H}$, sinon retourner une traverse minimale de $\mathcal{M}_{H}$ $\backslash$ $S$.
   \end{minipage}%
}
\\ \\
Un troisième problème a été défini par \cite{Bioch} et qui se résume à vérifier si deux hypergraphes sont transversaux l'un par rapport à l'autre.
\\

\fbox{
   \begin{minipage}{0.9\textwidth}
      \textbf{Entrée:} Deux hypergraphes simples $H$ = $(\mathcal{X}$, $\xi)$ et $H'$ = $(\mathcal{X'}$, $\xi')$.\\
      \textbf{Sortie:} Vrai si $H'$ = $H^t$, Faux sinon.
   \end{minipage}%
}
\\ \\
Les trois problèmes sont liés et divers algorithmes ont été proposés pour les résoudre.

\section{Domaines d'application}\label{DomAppli}
L'intérêt pour l'extraction des traverses minimales s'est accru, ces dernières années, en raison de la diversité des domaines d'application où le recours aux traverses minimales peut constituer une solution. Le large éventail des domaines d'application, comme le résume la Figure \ref{appli} \cite{thesehagen}, donne ainsi une importance plus grande aux traverses minimales et motive l'intérêt qu'elles suscitent. Dans ce qui suit, nous en donnerons un aperçu et nous citerons les problèmes les plus connus, où les traverses minimales sont applicables.

\begin{figure}[h]
\begin{center}
\includegraphics[scale=0.5]{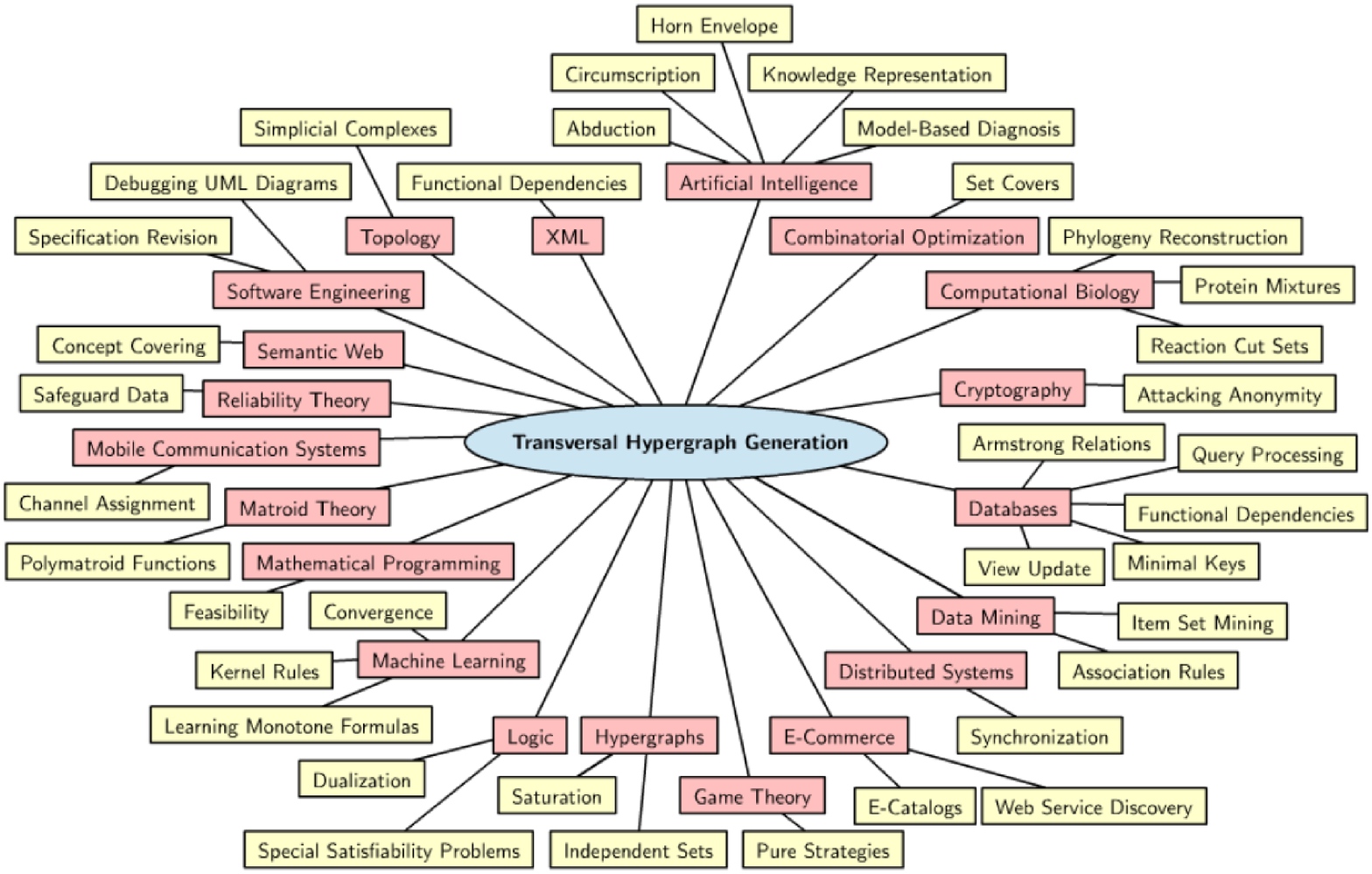}
\caption{Domaines d'application des traverses minimales \cite{thesehagen}}\label{appli}
\end{center}
\end{figure}

\subsection{Bases de données}
Plusieurs travaux se sont attachés à appliquer les traverses minimales pour résoudre des verrous scientifiques dans le domaine des bases de données. Jouant un rôle important dans l'identification, de façon minimale, des n-uplets que renferment les relations, l'identification des clés est fortement liée au problème du calcul des traverses minimales comme l'ont démontré les travaux de \cite{Janos99} et de \cite{DucVu}. Étant donné une relation et un ensemble de clés, décider de l'existence d'une autre clé est un problème, équivalent à celui de la recherche des traverses minimales. Les dépendances d'inclusion \cite{map03}, qui sont une généralisation des clés étrangères dans un modèle relationnel, peuvent ainsi être déduites en adaptant les techniques de calcul des traverses minimales d'un hypergraphe. Celles-ci peuvent, par ailleurs, présenter des solutions aux problèmes de réécriture des requêtes, d'exécution des requêtes et d'actualisation des vues. Ces dernières, dont le rôle est très important dans la présentation des données à partir des bases, peuvent en effet être gérées en se basant sur les traverses minimales. L'inférence des dépendances fonctionnelles représente aussi un domaine d'application fort intéressant des traverses minimales comme le montre les travaux de \cite{Mannila1994} et \cite{Lopes2000} et comme nous le verrons dans le chapitre $4$.

\subsection{Logique}
En logique, une \emph{clause} est une disjonction de littéraux, qui sont des variables booléennes ou leurs négations, alors qu'un \emph{terme} est une conjonction. Une formule est sous sa forme normale \emph{disjonctive} (FND) (resp. \emph{conjonctive} (FNC)) si c'est une une disjonction de termes (resp. une conjonction de clauses). La détermination de la dualité FNC/FND est équivalente au problème de calcul des traverses minimales d'un hypergraphe. En effet, le problème, connu en logique, de la dualisation où il s'agit de calculer la forme duale normale disjonctive (\textsc{fnd}) monotone et irrédondante à partir d'une forme normale disjonctive du même type est équivalent au calcul d'un sous-ensemble des traverses minimales d'un hypergraphe.

\subsection{Intelligence artificielle}
Plusieurs problématiques en intelligence artificielle ont un lien très fort avec les traverses minimales, à commencer par l'abduction. Définie par Peirce, l'abduction est un mode de raisonnement par lequel des faits utiles sont inventés (contrairement à l'induction qui consiste à inventer des théories) et est utilisée dans deux acceptations différentes. Plus formellement, à partir de l'ensemble caractéristique d'une théorie de Horn $\Sigma$, d'un littéral \emph{q} et d'un sous-ensemble \emph{A} de tous les littéraux, il s'agirait de trouver toutes les interprétations possibles pour \emph{q} par rapport à A. Dans ce cas, une théorie logique est un ensemble de formules. C'est une Horn s'il s'agit d'un ensemble des clauses ayant, chacune, au plus un littéral positif. Les travaux de Eiter et \emph{al.} \cite{EiterIA} ont montré que ce problème de l'abduction est équivalent à une adaptation du calcul des traverses minimales d'un hypergraphe.

\subsection{E-commerce}
Que ce soit pour des problématiques telles que la recherche du meilleur recouvrement, dont les applications sont nombreuses, la réécriture des requêtes dans les e-catalogues ou la découverte des web services, les techniques d'extraction des traverses minimales peuvent jouer un rôle important en e-commerce, comme le montre les travaux de \cite{Boualem05}. Ainsi, à titre d'exemple, l'une des tâches les plus importantes en web services est de trouver automatiquement les services qui répondent à des contraintes d'utilisation spécifiques aux utilisateurs. Ce problème est singulièrement similaire à celui de la recherche d'un ensemble de couverture dans un contexte de contraintes et pour lequel l'application des techniques de calcul des traverses minimales s'avère judicieuse.

\subsection{Fouille de données}
Les liens entre certaines problématiques de la fouille de données \cite{ayouni2011extracting, ferjani2012formal} et les traverses minimales sont nombreuses. C'est d'ailleurs le domaine où l'application des traverses minimales est la plus intéressante, comme dans la génération des règles associatives, des itemsets fermés, des itemsets fréquents ou encore l'analyse formelle de concepts \cite{yahia2004revisiting, hamrouni2010generalization, brahmi2012omc}. \\
Les règles associatives sont de la forme "si \emph{x} est présent dans une transaction alors il y a 95\% de chance que \emph{y} soit aussi présent". Elles sont cruciales au cours du processus d'extraction des connaissances. Une des étapes les plus importantes dans la génération des règles associatives est le calcul des ensembles fréquents et infréquents. Or, calculer les ensembles maximaux \emph{k}-fréquents ou les ensembles minimaux \emph{k}-infréquents est équivalent à la construction d'un hypergraphe transversal comme l'ont démontré les travaux de Boros et \emph{al.} \cite{BorosMax} et, Mannila et Toivonen \cite{Mannila97}.
Par ailleurs, les règles associatives ayant comme prémisses les générateurs minimaux sont les règles les plus intéressantes en fouille de données. Les générateurs minimaux d'un itemset fermé $F$ étant les plus petits itemsets ayant cette même fermeture $F$, leur calcul peut être optimisé en utilisant les traverses minimales d'un hypergraphe selon les travaux de \cite{thesegarriga} et de \cite{Pfaltz02}.
Les treillis de concepts qui jouent un rôle important dans l'extraction de connaissances \cite{Pfaltz02} et la génération des règles à partir de bases de données relationnelles \cite{YahiaN04, bouzouita2006garc, gasmi2007extraction, bouker2012ranking} peuvent aussi bénéficier des techniques d'extraction des traverses minimales en raison de leur relation avec les itemsets fermés.

\section{Conclusion}
Nous avons introduit, dans ce chapitre, les notions-clés de la théorie des hypergraphes tout en mettant en exergue les verrous scientifiques que posent le problème de l'extraction des traverses minimales d'un hypergraphe. Le survol des domaines d'application des traverses minimales démontre l'intérêt, de plus en plus, croissant pour les traverses minimales et, dans la littérature, plusieurs algorithmes dédiés à leur calcul ont été proposés. Dans le chapitre suivant, nous présentons un état de l'art détaillé de ces algorithmes, qui reposent sur des approches différentes.

\chapter{État de l'art}\label{chap2}
\section{Introduction}
Plusieurs auteurs se sont intéressés au problème de l'extraction des traverses minimales d'un hypergraphe. Dans ce chapitre, nous présentons un état de l'art détaillé de ces différentes approches, en mettant en exergue leurs points forts et leurs limites. Le nombre de traverses minimales d'un hypergraphe pouvant être exponentiel en la taille de l'hypergraphe, la question de la mise en place d'un algorithme résolvant le problème de l'extraction des traverses minimales d'un hypergraphe $H$ avec une complexité polynomiale en $\mid$$H$$\mid$ reste néanmoins ouverte. Une étude comparative des différents algorithmes existants nous permet de situer nos contributions par rapport à ces travaux et mettre en lumière l'intérêt des différentes approches que nous proposons dans les chapitres suivants. 
\section{Algorithme de \textsc{Berge} \cite{berge1989}}
\textsc{Berge} est le premier à s'être intéressé au problème du calcul des traverses minimales et à avoir proposé un algorithme pour le résoudre. Cet algorithme, dont le principe est simple, commence par calculer l'ensemble des traverses minimales de la première  hyperarête, qui est équivalent à l'ensemble des sommets contenus dans cette dernière. Ensuite, il met à jour cet ensemble des traverses minimales en ajoutant les autres hyperarêtes, une à une, de manière incrémentale. Ainsi, l'algorithme de Berge construit des hypergraphes partiels au fur et à mesure qu'il ajoute des hyperarêtes. Néanmoins, l'algorithme a toujours besoin de stocker les traverses minimales intermédiaires avant de passer à l'étape suivante consistant à ajouter une nouvelle hyperarête. \\
Formellement, l'algorithme de Berge repose sur la formule suivante \cite{berge1989}: à partir d'un hyperpgraphe partiel $H_i$ = ($\mathcal{X}$, $\xi$) tel que $\mathcal{X}$ = \{$x_1$, $x_2$, \ldots, $x_n$\} et $\xi$ = \{$e_1$, $e_2$, \ldots, $e_i$\}, l'ensemble des traverses minimales de $\{H_{i-1} \cup {e_{i}}\} = \{min\{T \times \{x\}\}$ tel que $T \in \mathcal{M}_{H_{i-1}}$  et  $x \in e_{i}\}$. L'opérateur "$\times$" désigne le produit cartésien tel que $A \times B$ comporte toutes les pairs ($a$,$b$) tel que $a \in A$ et $b \in B$.

\begin{algorithm}[!h]
{
\LinesNumbered
\Entree{$H$ = ($\mathcal{X}$, $\xi$): Hypergraphe}
\Sortie{$\mathcal{M}_H$ : ensemble des traverses minimales de $H$}

\Deb
{
$\mathcal{M}_{H_{1}}$ = \{\{$x$\} $\mid$ $x \in \xi_1$\};\\
\Pour{$i$ = $2 \rightarrow $ $\mid \xi \mid$}{
    $\mathcal{M}_{H_{i}}$ = Min \{$T$ $\times$ \{$x$\} tel que $T$ $\in$ $\mathcal{M}_{H_{i-1}}$, $x$ $\in$ $e_i$\} }

$\mathcal{M}_H$ = $\mathcal{M}_{H_{\mid \xi \mid}}$;\\
\Retour{$\mathcal{M}_H$}

}
}
 \caption{L'algorithme de \textsc{Berge} \cite{thesehagen}}
  \label{algoBerge}
\end{algorithm}

La complexité de cet algorithme, dont le pseudo-code est donné par l'Algorithme \ref{algoBerge} est exponentielle en la taille de l'entrée et de la sortie \cite{thesehagen}. Ceci s'explique par la nécessité de stocker toutes les traverses intermédiaires vu que l'ensemble des traverses minimales n'est généré qu'après l'insertion de la dernière hyperarête. Ceci rend l'algorithme de \textsc{Berge} impraticable sur des hypergraphes de grande taille.

Récemment, Boros \emph{et al.} ont prouvé que le temps de traitement de l'algorithme de Berge a une borne supérieure subexponentielle de $N^{\sqrt{N}}$ \cite{Boros2008}.
\section{Améliorations de l'algorithme de \textsc{Berge}}
Plusieurs chercheurs ont cherché à améliorer l'algorithme de \textsc{Berge}. Parmi les améliorations les plus connues proposées récemment figurent celles introduites par Dong \emph{et al.} \cite{dongg}, Kavvadias et Stavropoulos \cite{KavvadiasS05} et Bailey \emph{et al.} \cite{BMA03}.

\begin{algorithm}[!h]
{
\LinesNumbered
\Entree{$H$ = ($\mathcal{X}$, $\xi$): Hypergraphe}
\Sortie{$\mathcal{M}_H$ : ensemble des traverses minimales de $H$}

\Deb
{
$\mathcal{M}_{H_{1}}$ = \{\{$x$\} $\mid$ $x \in \xi_1$\};\\
\Pour{$i$ = $2 \rightarrow $ $\mid \xi \mid$}{
$T_g$ = \{$t \in \mathcal{M}_{H_{i-1}} \mid t \cap e_i \neq \emptyset$\};\\
$e_i^{cov}$ = \{$x \in e_i \mid$ \{$x$\} $\in T_g$\};\\
$\mathcal{M'}_{H_{i-1}}$ = $\mathcal{M}_{H_{i-1}} \backslash T_g$;\\
$e'_i$ = $e_i \backslash e_i^{cov}$;\\

\PourCh{$t' \in \mathcal{M'}_{H_{i-1}}$ trié par ordre croissant de cardinalité}{
\PourCh{$x \in e'_i$}{\Si{$t' \cup$ \{$x$\} n'est le sur-ensemble d'aucun élément de $T_g$}{$T_g$ = $T_g \cup$ \{$t' \cup$ \{$x$\}\} }
}
}

$\mathcal{M}_{H_{i}}$ = $T_g$;\\
}
$\mathcal{M}_{H}$ = $\mathcal{M}_{H_{\mid \xi \mid}}$;\\
\Retour{$\mathcal{M}_{H}$}
}
}
  \caption{L'algorithme de Dong et Li \cite{thesehagen}}
  \label{Dong}
\end{algorithm}

\subsection{Algorithme de Dong et Li \cite{dongg}}
C'est en s'inspirant de l'extraction des itemsets émergeants en fouille de données que Dong et Li ont proposé une solution, dont le pseudo-code est donné par l'Algorithme \ref{Dong}.

L'algorithme de Dong et Li a été évalué expérimentalement sur de nombreux jeux de données. Néanmoins, les auteurs n'ont pas effectué une analyse théorique de la complexité en temps de traitement de leur algorithme. Cependant, cette adaptation de l'algorithme initial s'est avérée très fructueuse. La principale amélioration de Dong et Li par rapport à l'algorithme de Berge réside dans l'optimisation réalisée lors du calcul de $\mathcal{M}_{H_{i-1}} \times \{\{x\} \mid x \in e_i\}$, et qui consiste à considérer uniquement les traverses qui intersectent la nouvelle hyperarête traitée et, aussi, à ne prendre que les sommets de $\xi_i$ qui n'appartiennent pas déjà aux traverses minimales déjà identifiées.

\subsection{Algorithme de Kavvadias et Stavropoulos \cite{KavvadiasS05}}\label{KS05}
L'un des inconvénients majeurs de l'algorithme de \textsc{Berge}, observé par Kavvadias et Stavropoulos, est la consommation excessive en mémoire. Dans la mesure où la minimalité des nouvelles traverses calculées doit être testée, les traverses minimales intermédiaires doivent aussi être stockées en mémoire. L'algorithme de Kavvadias et Stavropoulos tente de surmonter ce problème de consommation mémoire, en utilisant deux techniques. La première introduit la notion de "\emph{sommets généralisés}" selon la Définition \ref{gener}.

\begin{definition}\label{gener}
Soit $H = (\mathcal{X}$, $\xi)$ un hypergraphe. L'ensemble $X$ $\subseteq$ $\mathcal{X}$ est un ensemble de sommets généralisés de $H$ si tous les sommets de $X$ appartiennent aux mêmes hyperarêtes de $\xi$.

\end{definition}
Le pseudo-code de l'algorithme de Kavvadias et Stavropoulos est décrit par l'Algorithme \ref{KS}. Le principe est le suivant.
En ajoutant une hyperarête $e_i$, l'algorithme met à jour l'ensemble des sommets généralisés avant de considérer les éléments de $M_{H_{i-1}^g}$ et les sommets constituant $\xi_{i}$ comme les ensembles de sommets généralisés du niveau $i$. $H_{i-1}^g$ étant l'hypergraphe partiel composé uniquement des sommets généralisés calculés au niveau $i-1$, les traverses minimales de l'hypergraphe $H_{i}^g$, $M_{H_{i}^g}$, sont ensuite calculées selon la formule de Berge, i.e., en effectuant le produit cartésien entre $M_{H_{i-1}^g}$ et les sommets généralisés de $\xi_{i}^g$, et en testant la minimalité de ces traverses candidates.

La seconde technique introduite par Kavvadias et Stavropoulos pour diminuer la consommation mémoire élevée revient à adopter une stratégie de recherche en profondeur d'abord. Berge utilisait une forme de parcours en largeur d'abord à travers la construction d'un "arbre" de traverses minimales. Au \emph{i}-ème niveau de l'arbre, les noeuds sont des traverses minimales de l'hypergraphe partiel $H_i$. Les descendants d'une traverse minimale $T$, du niveau \emph{i}, sont les traverses minimales de l'hypergraphe $H_{i+1}$ incluant $T$. Le parcours de cet "arbre" est très coûteux puisque les traverses minimales d'un hypergraphe $H$ sont retrouvées dans le dernier niveau de l'arbre. De plus, certains noeuds sont "visités" plusieurs fois parce qu'ils peuvent avoir plusieurs parents.

Pour remédier à ce problème, Kavvadias et Stavropoulos utilisent une stratégie en profondeur d'abord et introduisent la notion de "\emph{sommets appropriés}" pour vérifier la minimalité des traverses générées. Ceci permet à l'algorithme de réduire considérablement le stockage en mémoire durant le calcul des traverses minimales d'un hypergraphe.

\begin{definition}
Soit un hypergraphe $H = (\mathcal{X}$, $\xi)$ et soit $T$ une traverse minimale de l'hypergraphe partiel $H_{i-1}$ de $H$. Un ensemble de sommets généralisés $X$ $\subseteq$ $\mathcal{X}$ $\backslash$ $T$, au niveau $i$ est un ensemble de \textit{sommets appropriés} pour $T$ si aucun sous-ensemble de $T$ $\cup$ $X$, excepté $X$, ne peut être supprimé sans que les sommets restants ne représentent plus une traverse.
\end{definition}

\begin{algorithm}[!h]
{
\LinesNumbered
\Entree{$H$ = ($\mathcal{X}$, $\xi$): Hypergraphe}
\Sortie{$\mathcal{M}_H$ : ensemble des traverses minimales de $H$}

\Deb
{
\Pour{$k$ = $0$ $\rightarrow$ $\mid \xi \mid$}{Ajouter\_hyperarête($\xi_{k+1}$);\\
Mettre à jour les sommets généralisés;\\
Considérer $M_{H_k^g}$ et $\xi_{k+1}$ comme étant des sommets généralisés du niveau $k+1$;\\
Calculer $M_{H_{k+1}^g}$ = $Min\{M_{H_k^g} \times \{\{x_i\}$ :$ x_i \in \xi_{k+1}^g\}\}$;\\
}
Déduire $M_{H}$ à partir de $M_{H_{\mid \xi \mid}}$;\\
\Retour{$M_{H}$}
}
}
  \caption{L'algorithme de Kavvadias et Stavropoulos \cite{KavvadiasS05}}
  \label{KS}
\end{algorithm}

L'algorithme de Kavvadias et Stavropoulos n'est pas polynomial en la taille de la sortie. Son temps de traitement est de l'ordre de $N^{\Omega(log log N)}$, \emph{N} désignant la taille de l'entrée et de sortie \cite{thesehagen}. Cet algorithme est l'un des plus performants, en termes de temps de traitement. Adoptant une stratégie en profondeur, l'algorithme consomme, par ailleurs, très peu de mémoire vive. Ce qui représente un avantage non négligeable.

\subsection{Algorithme de Bailey \emph{et al.} \cite{BMA03}}
Pour traiter les hypergraphes de grande taille, Bailey \emph{et al.} ont exploité les bonnes performances de l'algorithme de Dong et Li sur les hypergraphes renfermant des hyperarêtes de petite taille. L'algorithme de Bailey \emph{et al.}, dont le pseudo-code est décrit par l'Algorithme \ref{bailey}, prend en entrée un hypergraphe et comporte un pré-traitement récursif.

\begin{algorithm}[!h]
{
\LinesNumbered

\Entree{$H$ = $(\mathcal{X}$, $\xi)$: Hypergraphe}
\Sortie{$\mathcal{M}_H$ : ensemble des traverses minimales de $H$}

\Deb
{
$X$ = $\mathcal{X}$;\\
Ordonner($\mathcal{X}$);\\
\Pour{$i$ = $1 \rightarrow \mid \mathcal{X} \mid$}{
$\xi_{part}$ = $\emptyset$;\\
$X$ = $X$ $\backslash$ {$x_i$}\\

\PourCh{$e$ $\in$ $\xi$}{ \Si{$x_i$ $\not\in$ $e$}{$\xi_{part}$ = $\xi_{part}$ $\cup$ \{e $\backslash$ $X$\};}}
$\mathcal{X}_{part}$ = $\mathcal{X}_{part}$ $\cup$ {$x_i$};\\
\eSi{$\mid$ $\xi_{part}$ $\mid$ $\geq$ 2 and Volume($\xi_{part}$) $\geq$ 50}{Algorithme \ref{bailey} ($\mathcal{X}_{part}$, $\xi_{part}$);}
{$\mathcal{M}_{\xi_{part}}$ = Algorithme \ref{Dong}($\xi_{part}$);\\
$\mathcal{M}_H$ = min($\mathcal{M}_H$ $\cup$ ($\mathcal{M}_{\xi_{part}}$ $\times$ $\mathcal{X}_{part}$));}
$\mathcal{X}_{part}$ = $\mathcal{X}_{part}$ $\backslash$ \{$x_i$\}
}
\Retour{$Tr$}

}

}

  \caption{L'algorithme de Bailey \emph{et al.} \cite{thesehagen}}
  \label{bailey}
\end{algorithm}

A partir d'un sommet ou d'un ensemble de sommets $X_{part}$ apparaissant dans le plus petit nombre d'hyperarêtes dans l'hypergraphe d'entrée, ce pré-traitement vise à construire un sous-ensemble d'hyperarêtes $\xi_{part}$ ne contenant pas ces sommets $X_{part}$. Si $\mid \xi_{part} \mid$ est elevée ($\geq 2$) et si son volume, fonction de la cardinalité moyenne de ces hyperarêtes est aussi elevée ($\geq 50$) alors l'algorithme de Bailey \emph{et al.} est appelé de manière récursive. Sinon, les traverses minimales de $\xi_{part}$ sont déterminées par l'algorithme de Dong et Li. Les traverses minimales de l'hypergraphe d'entrée sont ensuite déduites par la méthode de Berge via un produit cartésien entre les traverses minimales de l'hypergraphe constitué par les hyperarêtes $\xi_{part}$ et $X_{part}$, conjugué à un test de la minimalité. Les expérimentations menées par les auteurs ont montré l'efficacité de leur algorithme, sur un type particulier d'hypergraphes, par rapport aux deux algorithmes considérés, i.e., celui de Fredman et Kachiyan, et celui de Kavvadias et Stavropoulos (une version antérieure à celle présentée dans la section \ref{KS05}) .

\section{Algorithme de Fredman et Kachiyan \cite{Fredkach96}}
En $1996$, Fredman et Kachiyan ont proposé deux algorithmes pour le calcul des traverses minimales d'un hypergraphe, l'algorithme FK-A et sa version optimisée, appelée l'algorithme FK-B. Ce dernier possède la meilleure complexité théorique connue à ce jour et qui est de $N(o^{log N})$ où N représente la taille de l'entrée et de sortie \cite{Fredkach96}. Les auteurs motivent le calcul de traverses minimales comme la solution au problème de la dualisation des formules booléennes monotones et c'est cette approche intuitive que nous reprenons.
\'{E}tant donnée une formule $f(x)$ = $f(x_1, x_2, \ldots, x_n)$ sous forme normale conjonctive, il s'agit de calculer la formule duale correspondante $f^d(x)$ = $\bar{f}(\bar{x})$ = $\bar{f}(\bar{x_1}, \bar{x_2}, \ldots, \bar{x_n})$  sous forme normale conjonctive également. Pour cela, on obtient aisément $f^d$ sous forme normale disjonctive en remplaçant chaque conjonction de $f$ par une disjonction et vice-versa. Pour calculer la forme normale conjonctive de la formule duale, il s'agit finalement de développer la forme normale disjonctive pour constituer les classes de $f^d$. Pour cela, on prendra un littéral dans chaque terme de $\bar{f}$ pour constituer une classe. Des simplifications apparaissent si l'on prend plusieurs fois le même littéral. Pour calculer les traverses minimales d'un hypergraphe, chaque traverse est construite en prenant un item dans chaque terme de $\bar{f}$ : le résultat obtenu est identique à celui fourni par la dualisation des formules booléennes monotones. Ce problème et celui du calcul des traverses minimales d'un hypergraphe sont alors parfaitement équivalents.

Par exemple, soit $f(x)$ = ($x_1 \vee x_2$) $\wedge$ ($x_1 \vee x_2 \vee x_3$) $\wedge$ ($x_1 \vee x_2 \vee x_4$) $\wedge$ ($x_2 \vee x_3 \vee x_4$) $\wedge$ ($x_1 \vee x_2 \vee x_3 \vee x_4$). La formule duale correspondante, obtenue en échangeant chaque conjonction par une disjonction et vice-versa, est $f^d$($x$) = ($x_1 \wedge x_2$) $\vee$ ($x_1 \wedge x_2 \wedge x_3$) $\vee$ ($x_1 \wedge x_2 \wedge x_4$) $\vee$ ($x_2 \wedge x_3 \wedge x_4$) $\vee$ ($x_1 \wedge x_2 \wedge x_3 \wedge x_4$). Si nous développons scrupuleusement cette dernière expression pour la transformer en forme normale conjonctive, on obtient la série de clauses suivantes : $f^d$($x$) = ($x_1 \vee x_1 \vee x_1 \vee x_2 \vee x_1$) $\wedge$ ($x_1 \vee x_1 \vee x_1 \vee x_2 \vee x_2$) $\wedge$ $\ldots$ $\wedge$ ($x_2 \vee x_3 \vee x_4 \vee x_4 \vee x_4$). Ceci donne alors 216 clauses, dont il n'en restera que trois, après les simplifications : $f^d$($x$) = $x_2$ $\wedge$ ($x_1 \vee x_3$) $\wedge$ ($x_1 \vee x_4$).

La solution proposée par les auteurs revient à determiner, de façon incrémentale, si deux formules $f$ et $g$ ne sont pas duales, i.e., $f$($x$) = $\bar{g}$($\bar{x}$).

\begin{algorithm}[!h]
{
\LinesNumbered
\Entree{Deux formules monotones sous la Forme Normale Disjonctive $f$ et $g$}
\Sortie{Le Dual de $f$ et $g$}
\Deb
{
Factoriser $f$ et $g$\\
Vérifier que $f$ et $g$ sont des formes mutuellement duales et que le problème peut se résoudre en un temps polynomial.\\
\Si{$\mid F \mid$ $\mid G \mid$ $\leq$ 1}{le dual de $f$ et $g$ est calculé en $O(1)$.}
\Si{$\mid F \mid$ $\mid G \mid$ $\geq$ 2}{Trouver une variable $x_i$ commune à $f$ et $g$ tel que Fréquence($x_i$) $\geq$ $1/log(\mid F \mid + \mid G \mid)$\\
$f$ = $x_i f_0 \vee f_1$;\\
$g$ = $x_i g_0 \vee g_1$;\\
FK($f_1$, $g_0 \vee g_1$);\\
FK($g_1$, $f_0 \vee f_1$);
}
}
}
  \caption{L'algorithme FK-A \cite{Fredkach96}}
  \label{FK}
\end{algorithm}
La vérification de la dualité, dans l'algorithme de Fredman et Khachiyan dont le pseudo-code est donné par l'Algorithme \ref{FK}, et la mise en évidence d'un disqualifieur sont effectuées grâce à la propriété suivante : en factorisant $f$ et $g$ selon une variable $x_i$, les auteurs font apparaître des formules plus courtes, $f_0, f_1, g_0$ et $g_1$ qui ne contiennent pas $x_i$. On obtient ainsi $f$($x$) = ($x_i$ $\wedge$ $f_0$($y$)) $\vee$ $f_1$($y$) puis $g$($x$) = ($x_i$ $\wedge$ $g_0$($y$)) $\vee$ $g_1$($y$) ($y$ ne contient pas le littéral $x_i$). $f$ et $g$ sont duales si et seulement si $f_1$ et $g_0 \vee g_1$ le sont, ainsi que $f_0 \vee f_1$ et $g_1$. La taille du problème est ainsi réduite et permet d'appliquer récursivement ce procédé. Néanmoins, cette méthode est peu adaptée au calcul des traverses minimales de longueur bornée.

\section{Algorithme \textsc{MtMiner} \cite{HebertBC07}}
L'algorithme proposé par Hébert \emph{et al.} consiste à exploiter les travaux réalisés, dans la littérature, sur l'extraction de motifs \cite{HebertBC07}. Les auteurs ont réutilisé le principe des algorithmes par niveaux pour calculer les traverses minimales d'un hypergraphe. Cette approche repose donc sur le fait que les bases de données et les hypergraphes peuvent se représenter de la même manière, i.e, sous la forme d'une matrice booléenne où les sommets correspondent aux motifs et les hyperarêtes aux objets.

Par le biais de la correspondance de Galois qui relie les ensembles de motifs et les ensembles d'objets, un parallèle est établi entre l'extraction des motifs et l'extraction des traverses minimales. L'extension de cette nouvelle connexion permet de définir des classes d'équivalence, pour un hypergrpahe, de façon analogue aux classes d'équivalence utilisées en fouille de motifs, selon la définition \ref{equiMT} \cite{thCline}.

Ces classes regroupent les ensembles de sommets appartenant aux mêmes hyperarêtes de l'hypergraphe d'entrée $H$. Le nombre d'hyperarêtes non recouvertes par un ensemble de sommets est appelé \emph{fréquence} et correspond alors au nombre d'occurrences (support disjonctif) d'un motif en fouille de données. Les traverses de $H$ sont, dans ce cas, les ensembles de sommets ayant une fréquence nulle. En utilisant les propriétés de la fouille de données, les auteurs ont prouvé que les traverses minimales de $H$ sont les générateurs minimaux de fréquence nulle.

\begin{definition}\label{equiMT}
La classe d'équivalence d'un ensemble de sommets $X$ $\subseteq$ $\mathcal{X}$ est notée $\mathcal{R}_{gH}$ ($X$) et est définie comme suit : \\
$\mathcal{R}_{gH}$ ($X$) = \{$X'$ $\in$ $\mathcal{X}$ $\mid$ $gH$($X'$) = $gH$($X$)\} où $gH$($X$) est l'ensemble des hyperarêtes qui ne contient aucun sommet de $X$.

\end{definition}

L'algorithme \textsc{MtMiner} adopte deux stratégies d'élagage lors du parcours par niveau des candidats dans le treillis généré. La première repose sur la propriété d'anti-monotonie de la minimalité dans les classes d'équivalence, selon laquelle si un ensemble de sommets ne constitue pas un générateur minimal alors l'espace de recherche généré à partir de celui-ci est élagué.

En effet, si un ensemble de sommets n'est pas un générateur minimal alors aucun de ses sur-ensembles ne peut être aussi un générateur minimal. Comme les auteurs ont déjà montré qu'une traverse minimale est nécessairement un générateur minimal, il est inutile de considérer ces sur-ensembles.
La deuxième stratégie d'élagage consiste à éliminer les sur-ensembles d'un ensemble de sommets qui est une traverse minimale, puisqu'ils ne vérifient pas la condition de minimalité au sens de l'inclusion.

\begin{algorithm}[!h]
{
\LinesNumbered
\Entree{$H$ = ($\mathcal{X}$, $\xi$): Hypergraphe}
\Sortie{$\mathcal{M}_H$ : ensemble des traverses minimales de $H$}

\Deb
{
$\mathcal{M}_H$ = \{\{$x$\} $\mid$ $x$ $\in$ $\mathcal{X}$ et $\mid$$gH$(\{$x$\})$\mid$ = 0\};\\
$Gen_1$ = \{\{$x$\} $\mid$ $x$ $\in$ $\mathcal{X}$ et $0$ $<$ $\mid$$gH$(\{$x$\})$\mid$ $<$ $\mid$$\xi$$\mid$ \};\\
k = 1;\\
\Tq{$Gen_k$ $\neq$ $\emptyset$}{
\PourCh{($X \cup$ \{$x_1$\}, $X \cup$ \{$x_2$\}) $\in Gen_k \times Gen_k$}
{
$Z$ = $X$ $\cup$ $x_1$ $\cup$ $x_2$;\\
$gH$($Z$) = $gH$($X \cup$ \{$x_1$\}) $\cap$ $gH$($X \cup$ \{$x_2$\});\\
i = 1;\\
\Tq{i $\leq$ k+1 et $Z \backslash$\{$x_i$\} $\in Gen_k$ et $\mid$ $gH$($Z$) $\mid$ $<$ $\mid$ $gH$($Z \backslash$ \{$x_i$\}) $\mid$}{i = i+1;}
\Si{i = k + 2}{\eSi{$\mid$ $gH$($Z$) $\mid$ = 0}{$\mathcal{M}_H$ = $\mathcal{M}_H$ $\cup$ $Z$;}{$Gen_{k+1}$ = $Gen_{k+1} \cup$ \{$Z$\};}}
}
k = k + 1;\\
}
\Retour{$\mathcal{M}_H$}
}
}

  \caption{L'algorithme \textsc{MtMiner} \cite{thCline}}
  \label{mtminer}
\end{algorithm}

L'algorithme \textsc{MtMiner} effectue un parcours en largeur en démarrant le balayage de l'espace de recherche par les sommets, avant de générer les ensembles plus grands, suivant une approche inspirée de \textsc{Apriori} \cite{Agrawal12008994}. La stratégie en largeur permet de garantir la minimalité des candidats, dans la mesure où chaque candidat n'est gardé dans un niveau \emph{i} qu'après avoir calculé et testé l'extension de ses sous-ensembles directs, qui se trouvent dans le niveau \emph{i-1}.\\
L'ensemble $\mathcal{M}_H$ est initialisé avec les sommets d'extension vide, i.e, qui appartiennent à toutes les hyperarêtes de l'hypergraphe d'entrée ($\mid$$gH$(\{$x$\})$\mid$ = 0). Ces sommets représentent donc des traverses minimales du niveau $1$. A chaque niveau \emph{i}, \textsc{MtMiner} génére des candidats $Z$, à partir des éléments calculés au niveau \emph{i-1}. Si le candidat $Z$ vérifie la propriété d'anti-monotonie et si $\mid$$gH$(\{$Z$\})$\mid$ = 0, il est alors ajouté à $\mathcal{M}_H$, sinon il sera reversé dans $Gen_{i+1}$ et servira comme générateur pour le niveau \emph{i+1}. Si $Z$ ne vérifie pas la propriété d'anti-monotonie, il n'est donc pas un générateur minimal et il est tout simplement élagué de l'espace de recherche.

D'après Hébert \emph{et al.}, la complexité de l'algorithme \textsc{MtMiner} dépend de $\tau$($H$) et $\mid$ $\mathcal{M}_H$ $\mid$ \cite{thCline}. Pour chaque traverse minimale $T$, l'algorithme considère au plus $2^{\mid T \mid}$ ensembles de sommets et effectue, par conséquent, un nombre d'opérations inférieur à : $\sum_{T \in  \mathcal{M}_H} (2^{\mid T \mid})$.

De ce fait, pour un hypergraphe $H$, \textsc{MtMiner} calcule l'ensemble des traverses minimales $\mathcal{M}_H$ en $O(2^{\tau(H)} \times \mid \mathcal{M}_H \mid)$. Cependant, pour Hagen, la complexité réelle de l'algorithme \textsc{MtMiner} est de $O$($N^{\Omega(log(log N))}$), telle que \textit{N} est la taille de l'entrée et de la sortie de l'algorithme. Hagen présente le calcul détaillé de cette complexité dans \cite{thesehagen}. D'un point de vue performances, \textsc{MtMiner} présente des temps de traitements intéressants, notamment pour des hypergraphes denses et ayant un nombre de transversalité assez bas. Cependant, et comme souligné par les auteurs, \textsc{MtMiner} est assez gourmand en consommation mémoire \cite{bass2008}.

\section{Algorithmes de type \textsc{Shd} \cite{uno1}} \label{uno}
Murakami et Uno proposent les algorithmes de type \textsc{Shd}, \textsc{Mmcs} et \textsc{rs}, qui visent à réduire l'espace de recherche \cite{uno1}. En ce sens, ces algorithmes sont destinés à traiter des hypergraphes de grande taille constitués par un très grand nombre d'hyperarêtes.

Les algorithmes de type \textsc{Shd} adoptent une stratégie de parcours en profondeur de l'espace de recherche qui, dans le cas de \textsc{rs}, est équivalente à celle de l'algorithme de Kavvadias et Stavropoulos. La principale différence entre ce dernier et \textsc{rs} repose sur l'élimination des itérations redondantes où aucun sommet n'est ajouté à un ensemble de sommets générés auparavant. De plus, Murakami et Uno introduisent deux nouveaux concepts, i.e, le test de la transversalité (\emph{uncov}) et les hyperarêtes critiques (\emph{crit}), et ce afin d'optimiser les tests sur la minimalité effectués sur l'ensemble des traverses générées.

\begin{algorithm}[!h]
{
\LinesNumbered
\textit{\textbf{Var. Globale}} : \emph{uncov} (initialisé à $\xi$), $Cand$ (initialisé à $\mathcal{X}$), \emph{crit}[$x$] initialisé à $\emptyset$ pour chaque $x$\\
\Entree{$H$ = $(\mathcal{X}$, $\xi)$: Hypergraphe, $X$ : ensemble de sommets}
\Sortie{$T$ tel que $T$ $\in$ $\mathcal{M}_H$}

\Deb
{
\Si{\emph{uncov} = $\emptyset$}{\Retour{$X$}}
Choisir une hyperarête $e$ à partir de \emph{uncov};\\
$C$ = $Cand$ $\cap$ $e$;\\
$Cand$ = $Cand$ $\backslash$ $C$;\\
\PourCh{$x$ $\in$ $C$}{\textsc{Update\_crit\_uncov}($x$, \emph{crit, uncov});\\
\Si{\emph{crit}(f, $X$ $\cup$ $x$) $\neq$ $\emptyset$ \textbf{pour chaque} f $\in$ $X$}{\textsc{Mmcs}($X$ $\cup$ $x$);\\ $Cand$ = $Cand$ $\cup$ $x$;}
Restaurer les valeurs de crit et \emph{uncov} d'avant la ligne $8$;
}

}
}

  \caption{L'algorithme \textsc{Mmcs} \cite{uno1}}
  \label{unoalgo}
\end{algorithm}

Étant donné $X$ un ensemble de sommets, éventuellement réduit à un singleton, \emph{uncov($X$)} désigne l'ensemble des hyperarêtes que n'intersectent pas $X$, i.e, \emph{uncov($X$)} = \{$e$ $\in$ $\xi$, $e$ $\cap$ $X$ = $\emptyset$\}. $X$ est une traverse si et seulement si \emph{uncov($X$)} = $\emptyset$.

Pour un sommet $x$ $\in$ $X$, une hyperarête $e$ $\in$ $\xi$ est dite \emph{critique} pour $x$ si $X$ $\cap$ $e$ = \{$x$\}. L'ensemble des hyperarêtes critiques pour $x$ est noté \emph{crit($x$, $X$)}, i.e., \emph{crit($x$, $X$)} = \{$e$ $\mid$ $e$ $\in$ $\xi$, $e$ $\cap$ $X$ = \{$x$\}\}. Ainsi, un ensemble de sommets $X \subseteq \mathcal{X}$ est une traverse si \emph{uncov($X$)} = $\emptyset$ et c'est une traverse minimale si, en plus, \emph{crit($x$, $X$)} = $\not\emptyset$ $\forall x \in X$.

Par ailleurs, si $X$ est une traverse, alors si le sommet $x$ n'a aucune hyperarête critique, chaque $e$ $\in$ $\xi$ renferme un sommet de $X$, autre que $x$, et $X$ $\backslash$ $x$ est alors une traverse. Ceci est résumé par la propriété \ref{propUno} proposée par les auteurs.

\begin{property}\label{propUno}
$X$ est une traverse minimale si et seulement si \emph{uncov($X$)} = $\emptyset$ et \emph{crit($x$, $X$)} $\neq$ $\emptyset$, $\forall$ $x$ $\in$ $X$.
\end{property}

Les auteurs proposent aussi divers lemmes, dans \cite{uno1}, pour optimiser le calcul de la fonction \emph{crit}, qui est la clé de leur approche.

Les algorithmes de type \textsc{Shd} se basent donc sur la même approche et l'Algorithme \ref{unoalgo} décrit le pseudo-code de l'algorithme \textsc{Mmcs}. Cet algorithme est récursif et fournit en sortie des traverses minimales en série. Pour tester un ensemble de sommets $X$, les algorithmes cherchent, de façon itérative, les sous-ensembles de $X$ et effectuent un appel récursif pour chacun tout en mettant à jour les ensembles \emph{crit} et \emph{uncov}. En opérant de cette manière, Murakami et Uno permettent à leur algorithme de balayer l'espace de recherche en profondeur en cherchant seulement les sous-ensembles du candidat courant. La méthode et les étapes pour la recherche des sous-ensembles d'un candidat sont détaillées dans \cite{uno1}. L'étude expérimentale effectuée par les auteurs a montré que les algorithmes de type \textsc{Shd} (et notamment \textsc{mmcs}) présentaient des performances très intéressantes et s'imposaient comme les algorithmes les plus performants dans la littérature.

\section{Algorithme de Toda \cite{toda13}}
L'algorithme de Toda est le plus récent dans la littérature \cite{toda13}. Cet algorithme fait appel à des structures de données compressées qui permettent d'exploiter les capacités des diagrammes de décision binaire (\textsc{bdd}) et une des améliorations de ces dernières, i.e., les zéro diagrammes supprimés de décision (\textsc{zdd}). Les diagrammes de décision binaires permettent de représenter des fonctions booléennes sous la forme de graphes orientés. Leurs principes et mécanismes de fonctionnement sont décrits dans les travaux de \cite{Akers78} et \cite{BraceRB90}.

Toda se base sur les travaux de Donald Knuth sur les \textsc{zdd} et tente de les additionner aux \textsc{bdd} pour optimiser son algorithme. L'intérêt des \textsc{bdd} dans l'algorithme de Toda se trouve dans la représentation des résultats intermédiaires. Comme le montre le pseudo-code de l'Algorithme \ref{todaalgo}, Toda génère d'abord les traverses avant de tester leur minimalité. Les traverses candidats sont compressées et stockées dans un \textsc{bdd} avant que le \textsc{zdd} généré ne fournisse les traverses minimales souhaitées. Ainsi, dans l'Algorithme \ref{todaalgo}, $\mathcal{S}(p)$ dénote la famille des ensembles d'un \textsc{bdd} (ou un \textsc{zdd}).

\begin{algorithm}[!h]
{
\LinesNumbered
\Entree{$H$ = $(\mathcal{X}$, $\xi)$: Hypergraphe}
\Sortie{$\mathcal{M}_H$}

\Deb
{
$p$ = Compresser $\xi$ en un \textsc{zdd};\\
Calculer le \textsc{bdd} $q$ pour toutes les traverses de $\mathcal{S}(p)$;\\
Calculer le \textsc{zdd} $r$ pour tous les ensembles minimaux dans $\mathcal{S}(q)$;\\
Décompresser $r$ en un ensemble $\xi^{\star}$;\\
\Retour{$\xi^{\star}$}
}
}

  \caption{L'algorithme de Toda \cite{toda13}}
  \label{todaalgo}
\end{algorithm}

Les expérimentations effectuées par Toda visent à comparer son algorithme à celui de Murakami et Uno, présenté dans la section \ref{uno}. L'etude expérimentale a montré que l'algorithme de Toda est compétitif, y compris sur les bases éparses. Ceci peut être expliqué par les capacités qu'offrent les \textsc{zdd} sur ce type de bases.

\section{Les traverses minimales approchées}
Nous avons présenté, dans le chapitre précédent, le problème de l'extraction des traverses minimales d'un hypergraphe comme étant un problème NP-difficile. En ce sens, à côté des algorithmes présentés plus haut, d'autres travaux se sont intéressés à la recherche des traverses minimales approchées dans le but de contourner la difficulté du problème \cite{AbreuG09} \cite{Durand13}. Ces travaux, assez rares toutefois, s'intéressent à une sous-classe des traverses minimales dont les éléments n'intersectent pas toutes les hyperarêtes de l'hypergraphe d'entrée.
Ainsi, dans \cite{AbreuG09}, les auteurs se sont basés sur une approche évolutionnaire où la transversalité et la minimalité sont transcrites dans une fonction objective. D'autres approches introduisent un certain nombre d'exceptions liées à la transversalité pour générer les traverses minimales approchées.
Récemment, Durand \textit{et al.} ont présenté dans \cite{Durand13} l'algorithme $\delta$-\textsc{MTminer} qui permet de calculer les transverses minimales approchées. L'algorithme prend en compte un nouveau paramètre, $\delta$, qui correspond au nombre des hyperarêtes qu'une traverse minimale approchée pourrait ne pas intersecter.

\section{Discussion}
A la lumière de notre description des principaux algorithmes d'extraction des traverses minimales d'un hypergraphe et avec pour objectif de situer nos contributions, présentées dans les chapitres suivants, par rapport à ces travaux, nous avons synthétisé les caractéristiques de ces algorithmes dans le tableau \ref{synthese}. Les critères que nous avons choisis pour distinguer les différentes approches sont le principe, sur lequel se base chaque algorithme, la stratégie d'exploration et les techniques d'élagages.
\begin{table}[htbp]
\small{
\begin{tabular}{|r|r|r|r|}
\hline
   \textbf{Algorithme}  &  \textbf{ Algos sous-jacents} &   \textbf{Stratégie} &   \textbf{Techniques d'élagages} \\
        &   &   \textbf{d'exploration} &   \\
\hline
  \textsc{Berge} & - Processus &  En largeur & Aucune \\
     & incrémental &   &  \\
\hline
  Dong et Li & - Algorithme \ref{algoBerge} de Berge  & En largeur & Traverses garanties $T_g$ \\
     & - Itemsets emergeants  &  &  Couvertures d'hyperarêtes $e^{cov}$\\
\hline
  Kavvadias et & - Algorithme \ref{algoBerge} de Berge & En profondeur & \textit{Sommets généralisés}\\
  Stavropoulos   &  &  & \textit{Sommets appropriés}\\
\hline
  Bailey \textit{et al.} & - Algorithme \ref{algoBerge} de Berge et & En largeur & - \\
       & Algorithme \ref{Dong} de Dong et Li & & \\
     & - Partitionnement des & & \\
    & hyperarêtes &  & \\
  \hline
  FK & - Dualisation des formules  & - & Dualité Mutuelle\\
   &  booléennes monotones &  & \\
  \hline
  \textsc{MtMiner} & - Extraction de motifs & En largeur & Anti-monotonie de \\
  &  &  & la minimalité \\
  \hline
  \textsc{Mmcs} & - Itemsets fermés & En profondeur & \textit{uncov} et \textit{crit}\\
  \hline
  \textsc{Toda} & - Structures de & En profondeur & Caractéristiques des \\
     & données compressées & & \textsc{bdd} et \textsc{zdd} \\
  \hline
\end{tabular}}
  \caption{Caractéristiques des algorithmes de l'état de l'art}\label{synthese}
\end{table}

Les caractéristiques des principaux algorithmes d'extraction des traverses minimales ont été établies à partir des différentes sections présentées dans ce chapitre. La première constatation qui se dégage de ce tableau est qu'aucune approche n'a mis à profit la notion de nombre de transversalité, introduite par la Définition \ref{TMberge}. Cette notion qui donne une indication claire sur le nombre minimum de sommets formant une traverse minimale peut être intéressante, notamment en adoptant une stratégie en largeur pour cibler directement le niveau qui contient ces plus petites traverses minimales. Notre première contribution, qui consiste à détecter les multi-membres d'un réseau social et qui correspondent à des plus petites traverses minimales d'un hypergraphe représentant les différentes communautés d'un réseau social, se base sur cette notion de nombre de transversalité pour optimiser l'extraction des plus petites traverses minimales à travers un algorithme appelé \textsc{om2d} que nous introduisons dans le chapitre suivant.

Une deuxième constatation concerne les éléments générés en sortie par les différents algorithmes. Tous ces derniers calculent toutes les traverses minimales et leurs cardinalités sont généralement très importantes. Ce nombre de traverses minimales pouvant s'avérer exponentiel en la taille de l'hypergraphe, nous nous sommes intéressés à chercher une forme d'irrédondance dans l'ensemble des traverses minimales. Le fait de représenter cet ensemble de manière concise et exacte améliore le temps de traitement nécessaire à l'extraction de toutes les traverses minimales. En outre, et partant du fait que les traverses minimales apportent des solutions dans de nombreuses applications, comme présenté dans la section \ref{DomAppli} du chapitre \ref{chap1} (page \pageref{DomAppli}), cette représentation concise a des répercussions directes sur l'optimisation de bien d'autres problématiques. Notre deuxième contribution, présentée dans le chapitre $4$, met en avant cette notion d'irredondance qui se cache dans les traverses minimales et l'illustre en présentant son impact sur le problème du calcul de la couverture minimale des dépendances fonctionnelles en bases de données.

Enfin, notre troisième contribution consiste en une optimisation de l'extraction des traverses minimales en adoptant la stratégie \textit{diviser pour régner}. Cette stratégie a été utilisée par l'algorithme \ref{bailey} de Bailey \emph{et al.} sauf que les auteurs se sont focalisés sur le partitionnement des hyperarêtes quand celles-ci sont composées d'un nombre important de sommets. Notre idée se base, plutôt, sur le partitionnement de l'hypergraphe d'entrée en $k$ hypergraphes partiels tel que $k$ est égal au nombre de transversalité. De cette manière, nous éliminons le test coûteux de la minimalité sur les traverses formées par $k$ sommets. Cette approche s'est avérée fructueuse mais, seulement, sur un certain type d'hypergraphes. Une étude détaillée est présentée dans le chapitre $5$.

\section{Conclusion}
Dans ce chapitre, nous avons présenté les principaux algorithmes de calcul des traverses minimales, proposés dans la littérature. Que ce soit en adoptant une stratégie en largeur d'abord ou en profondeur d'abord, les différentes approches ont tenté d'innover avec pour but commun d'optimiser l'extraction des traverses minimales. En profitant de la notion de nombre de transversalité, inutilisée jusque-là, nous nous intéressons, dans le chapitre suivant, à une classe particulière des traverses minimales et à son application dans les systèmes communautaires.

\chapter{Identification des multi-membres dans un réseau social}\label{chap3}
\section{Introduction}
Avec l'expansion des systèmes communautaires du Web 2.0, beaucoup de travaux se sont intéressés à identifier les membres clés dans les réseaux sociaux, qualifiés, selon les auteurs, d'influenceurs, de médiateurs, d'ambassadeurs ou encore d'experts. Ce problème a été notamment considéré comme un problème de maximisation. Dans ce chapitre, nous présentons un type particulier d'acteurs que nous appelons multi-membres, en raison de leur appartenance à plusieurs communautés. Nous introduisons alors un cadre méthodologique pour identifier ce type d'acteurs dans un hypergraphe, dans lequel les sommets sont les acteurs et les hyperarêtes représentent les communautés. Nous démontrons que détecter les multi-membres pourrait être ramené au problème d'extraction des traverses minimales à partir d'un hypergraphe et nous présenterons deux algorithmes d'extraction des multi-membres qui conjuguent des concepts de la fouille de données et de la théorie des hypergraphes. Au cours de l'étude expérimentale, nous étudierons notamment la nature des acteurs qui constituent une traverse minimale multi-membres et leurs rôles au sein du réseau.

\section{Problématique}
C'est en s'appuyant sur des représentations et des concepts issus de la théorie des graphes que les réseaux sociaux ont été étudiés en sciences sociales dès les années soixante \cite{moreno34} \cite{Cartwright77}.
Parmi les questions essentielles que l'analyse de réseau s'efforce de traiter figure l'identification d'individus occupant un rôle déterminant dans le réseau. Ainsi, plusieurs indicateurs tels que la \emph{centralité} ou le \emph{prestige} ont été définis pour caractériser la position occupée par un acteur \cite{Freeman} \cite{Scott2000} \cite{WassermanFaust1994} \cite{Borgatti1992}. Néanmoins, la communauté scientifique informatique a été confrontée au problème du passage à l'échelle des algorithmes classiques de détection de tels individus.
Avec l'émergence du Web 2.0 et l'explosion des réseaux sociaux sur Internet \cite{hamdi2013trust}, des travaux plus récents se sont attachés à repérer des acteurs qui occupent une place particulière dans le réseau et qui selon appelés, selon les auteurs, \emph{influenceurs}, \emph{médiateurs}, \emph{ambassadeurs} ou encore les \emph{experts} \cite{Domingos2006}, \cite{Scripps2007}, \cite{Scripps2007a}, \cite{NitinAgarwal2008}, \cite{opsahl2010}.

L'identification de tels acteurs a eu, en effet, de nombreuses applications, e.g., dans les domaines de l'épidémiologie, du marketing, ou encore de la diffusion d'innovation.

En particulier, plusieurs algorithmes ont été présentés récemment \cite{Leskoveckdd2007}, \cite{Chenkdd2009}, \cite{YuWang2010}, \cite{Chenkdd2010}, \cite{Chenicdm2010}, \cite{KimuraS06}, \cite{Goyalwww11} pour résoudre le problème de recherche d'influenceurs, redéfini comme un problème de maximisation \cite{DomingosR01}, \cite{RichardsonD02}, \cite{Kempe1}.\\


  Parmi les modèles de diffusion dans un réseau, répertoriés dans la littérature, on peut distinguer d'une part les modèles linéaires à seuil inspirés des travaux de Granovetter et Schelling \cite{Granovetter78}, \cite{Schelling78} et d'autre part les modèles à  cascade indépendantes \cite{Goldenberg01}. Dans tous ces modèles, on considère qu'à un instant donné, chaque membre du réseau est soit actif soit inactif.  On cherche, par un processus itératif, à déterminer les acteurs devenus actifs à partir d'un sous-ensemble d'acteurs initialement actifs. On suppose bien sûr qu'un acteur peut être influencé par ses voisins suivant un certain seuil ou une certaine probabilité.
La mise en oeuvre de ces modèles requiert donc l'estimation de ces probabilités d'influence ou de ces seuils. 
Cependant, l'estimation des paramètres n'est pas le seul inconvénient de ces modèles.\\
En effet, la recherche des influenceurs dans un réseau peut être énoncée plus formellement comme un problème d'optimisation discrète, connu dans la littérature sous le nom de "\emph{influence maximization}" ou "\emph{spread maximisation}". Étant donné $A$, un ensemble d'acteurs du réseau et une mesure d'influence associée à cet ensemble, définie comme le nombre des acteurs devenus actifs à partir de $A$, le problème revient à déterminer, pour un paramètre k donné, les k acteurs du réseau qui maximise la fonction d'influence.
Or, Kempe \textit{et al.} ont démontré que ce problème était NP-complet pour les deux familles de modèles citées précédemment \cite{Kempe1}.

En s'appuyant sur la théorie des fonctions submodulaires, Kempe a aussi défini un cadre d'analyse généralisant les modèles à cascade et les modèles à seuil, et a montré qu'il est possible de déterminer une solution qui approche la solution optimale à un facteur près, à l'aide d'algorithmes gloutons, tels que \textit{Greedy Algorithm} \cite{Kempe1}. Ceci a conduit au développement d'heuristiques permettant de déterminer approximativement les influenceurs dans un réseau. Ainsi, en suivant ce cadre d'analyse, Leskovec \textit{et al.} ont développé l'algorithme "Cost-Effective Lazy Forward" (CELF), qui a donné lieu ensuite à plusieurs extensions, telles que NewGreedy et MixedGreedy introduit par Chen \textit{et al.} ou, plus récemment, CELF++  par Goyal \textit{et al.} \cite{Leskoveckdd2007}, \cite{Chenkdd2009}, \cite{Goyalwww11}.\\
L'algorithme \textit{Greedy} a fait aussi l'objet d'autres améliorations, exploitant des propriétés spécifiques du modèle à cascade \cite{KimuraS06} \cite{Chenkdd2010} ou du modèle à seuil \cite{Chenicdm2010}.
D'autres solutions ont aussi été proposées pour résoudre le problème de maximisation de l'influence, comme par exemple le  modèle de vote de Even-Dar \textit{et al.}, qui exploite d'ailleurs les mêmes hypothèses que les modèles à seuil \cite{Even-Dar2007}.
Cependant, tous les travaux cités précédemment considèrent la recherche d'influenceurs comme un problème d'optimisation sans tenir compte explicitement des communautés présentes dans le réseau.
Or, comme le souligne Scripps \textit{et. al.}, il peut être utile, pour identifier des influenceurs, de mieux connaître les positions occupées par les acteurs au sein des communautés présentes dans le réseau \cite{Scripps2007a}, \cite{Scripps2007}.
C'est d'ailleurs le principe de l'algorithme \textit{Community-based Greedy}, qui consiste justement à détecter des communautés en tenant compte du processus de diffusion au sein du réseau puis à identifier les influenceurs au sein des communautés \cite{YuWang2010}.
Cependant, comme les algorithmes précédemment cités, \textit{Community-based Greedy} suppose que le réseau est décrit par un graphe simple de sorte que les relations entre les acteurs pris deux à deux sont connues. Ainsi, ces algorithmes exploitent la matrice d'adjacence associée au graphe décrivant le réseau. Dans de nombreuses applications, on ne dispose pas forcément de cette information. Par contre, on sait à quelle(s) communauté(s) appartient un acteur. Ainsi par exemple, on sait quels sont les chercheurs qui ont participé à la conférence \emph{KDD} et ceux qui ont assisté à \emph{VLDB} sans forcément connaître les liens directs existants entre ces chercheurs. De même, dans le domaine du marketing, on peut savoir quels sont les clients qui ont acheté des articles d'une gamme de produits sans savoir s'ils sont en relation. Dans ces deux cas, il peut être intéressant d'identifier des acteurs, en nombre le plus petit possible, susceptibles de diffuser des idées ou recommandations d'un groupe à un autre. C'est ce problème que nous nous proposons de résoudre. Nous émettons l'hypothèse que la propagation repose sur des acteurs qui sont susceptibles d'assurer la transmission entre les groupes d'individus. En ce sens, les multi-membres que nous recherchons sont, pour partie,  des ambassadeurs tels que Scripss \textit{et al} les définissent \cite{Scripps2007a}. Il s'agirait donc de déterminer le plus petit ensemble de membres du réseau susceptibles de couvrir toutes les communautés.

 L'objectif est donc de déterminer le plus petit ensemble d'acteurs, appelés \emph{multi-membres}, qui sont susceptibles de représenter au mieux possible les différentes communautés du réseau en analysant le réseau dans ce contexte d'information incomplète où nous ne disposons pas de la matrice d'adjacence associée au graphe représentant le réseau mais où, en revanche, les communautés sont données.

Pour ce faire, nous proposons de représenter le système communautaire sous la forme d'un hypergraphe dans lequel les sommets représentent les acteurs et les hyperarêtes représentent les communautés. Dans cet hypergraphe, les multi-membres pourront être déterminés à partir des traverses minimales de l'hypergraphe, elles-mêmes définies comme un ensemble de sommets, minimal au sens de l'inclusion, qui intersecte toutes les hyperarêtes \cite{berge1989}.

\section{Definition d'une traverse minimale multi-membres}
\label{newdef}
Un réseau social peut être défini comme un ensemble d'entités interconnectés les unes aux autres \cite{WassermanFaust1994}. Ces entités sont généralement des individus ou des organisations. Les relations matérialisent les interactions entre les entités. \\
Dans le contexte d'un hypergraphe où nous disposons seulement des communautés d'un réseau modélisées par les hyperarêtes, nous considérons que les multi-membres correspondent au plus petit ensemble de sommets tel qu'au moins un élément appartient à chaque communauté et, si possible, avec plusieurs éléments appartenant à des communautés de large taille. En ce sens, la définition des multi-membres peut se baser sur celle d'une traverse minimale définie comme étant un ensemble de sommets ayant une intersection, non vide, avec chaque hyperarête.

\begin{example}\label{exp_hyp}
Dans la suite de ce chapitre, nous utiliserons à titre illustratif l'hypergraphe $H = (\mathcal{X}, \xi)$ de la Figure \ref{hyp11} (gauche) tel que $\mathcal{X} = \{1,2,3,4,5,6,7,8\}$ et $\xi$ = \{\{$1,2$\}, \{$2,3,7$\}, \{$3,4,5$\}, \{$4, 6$\}, \{$6,7,8$\}, \{$7$\}\}. L'ensemble de ses traverses minimales $\mathcal{M}_H$ est \{\{$1,4,7$\}, \{$2,4,7$\}, \{$1,3,6,7$\}, \{$1,5,6,7$\}, \{$2,3,6,7$\}, \{$2,5,6,7$\}\}. La Table de la figure \ref{hyp11} (droite) représente la matrice d'incidence associée à l'hypergraphe $H$. Cet exemple d'hypergraphe $H$ sera repris tout au long de ce chapitre pour illustrer notre approche et extraire les multi-membres de $H$.
\end{example}

\begin{figure}
\begin{minipage}{.4\textwidth}\centering
\includegraphics[scale=0.7]{images/hypergraphe-tmm.eps}
\end{minipage}\hfill
\begin{minipage}{.6\textwidth}\centering
\small{
\begin{tabular}{|r|r|r|r|r|r|r|r|r|}
\hline
    {\bf } &   {\bf $\textbf{1}$} &   {\bf $\textbf{2}$} &   {\bf $\textbf{3}$} &   {\bf $\textbf{4}$} &   {\bf $\textbf{5}$} &   {\bf $\textbf{6}$} &   {\bf $\textbf{7}$} &   {\bf $\textbf{8}$}\\
\hline
  {\bf $\textbf{\{1, 2\}}$} &     $1$ &       $1$     &       0     &  0  &0 &0  & 0 &  0      \\
\hline
  {\bf $\textbf{\{2, 3, 7\}}$} &     0 &       $1$     &       $1$     &  0 &0 &0  & $1$ &  0 \\
\hline
  {\bf $\textbf{\{3, 4, 5\}}$} &     0 &       0     &       $1$     &  $1$  &$1$ & 0  & 0 &  0   \\
\hline
  {\bf $\textbf{\{4, 6\}}$} &     0 &       0     &       0     &  $1$  & 0 & $1$  & 0 &  0      \\
  \hline
    {\bf $\textbf{\{6, 7, 8\}}$} &     0 &       0     &       0     &  0  &0 & $1$  & $1$ &  $1$   \\
  \hline
    {\bf $\textbf{\{7\}}$} &     0 &       0     &       0     &  0  &0 &0  & $1$ &  0    \\
  \hline
\end{tabular}}
\end{minipage}
\caption{Un exemple d'hypergraphe $H = (\mathcal{X}, \xi)$ et la matrice d'incidence $IM_H$ correspondante}\label{hyp11}
\end{figure}

Nous pouvons redéfinir une traverse minimale à partir de la notion d'ensemble de sommets essentiels.

\begin{definition}\label{definitionsupportmotif}
\textsc{Support d'un ensemble de sommets} \\
Soit l'hypergraphe $H = (\mathcal{X}$, $\xi)$ et $X$ un sous-ensemble de sommets de $\mathcal{X}$. Nous définissons $Supp\textsc{(}\textit{X}\textsc{)}$ comme le nombre d'hyperarêtes de $H$, renfermant au moins un élément de $X$ :
$Supp\textsc{(}\textit{X}\textsc{)}$ =  $|\{e \in \xi | \exists x \in X  \wedge  \mathcal{R}(e, x) = 1\}|$, $\mathcal{R}$ $\subseteq$ $ \xi \times \mathcal{X}$ étant la relation binaire entre les hyperarêtes et les sommets de la matrice d'incidence correspondante à $H$.

\end{definition}

Ainsi, l'ensemble $X$ peut être vu comme une disjonction de sommets $(x_1 \vee x_2  \vee \ldots \vee x_n)$  tel que la présence d'un seul sommet de $X$ suffit à affirmer que $X$ satisfait une hyperarête donnée, indépendamment des autres sommets.

\begin{definition}\label{defessentiel}
\textsc{Ensemble essentiel de sommets}~(\cite{casali2005})
  Soit l'hypergraphe $H = (\mathcal{X}$, $\xi)$ et $X \subseteq \mathcal{X}$. $X$ représente un ensemble essentiel de sommets si et seulement si : $ Supp(X) \mbox{ }> \mbox{ }\max\{Supp( X\backslash x)\mid
\forall x\in X\}$.
\end{definition}

Il est important de souligner que les ensembles essentiels, extraits à partir d'une matrice d'incidence, vérifient la propriété d'idéal d'ordre des itemsets essentiels, \emph{i.e}, si $X$ est un ensemble essentiel, alors $\forall Y \subset X$, $Y$ est aussi un ensemble essentiel.
De plus, la notion de traverse peut être redéfinie par le biais du support d'un ensemble de sommets et de la notion d'ensemble essentiel, selon la proposition \ref{Traverse2}.

\begin{proposition}\label{Traverse2}
\textsc{Traverse minimale}\\
Un sous-ensemble de sommets $X $ $\subseteq$ $\mathcal{X}$ est une traverse minimale de l'hypergraphe $H$, si $X$ est essentiel et si son support est égal au nombre des hyperarêtes de $H$, autrement dit, $X$ est un ensemble essentiel tel que $Supp(X)$=$|\xi|$.
\end{proposition}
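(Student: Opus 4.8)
The plan is to prove the equivalence by unpacking the two definitions and showing that the condition ``$X$ is essential and $\mathit{Supp}(X) = |\xi|$'' coincides exactly with ``$X$ is a transversal ($X \cap e_i \neq \emptyset$ for all $i$) that is minimal with respect to inclusion''. I would argue the two implications separately.

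First I would establish the \textbf{transversality $\Leftrightarrow$ $\mathit{Supp}(X) = |\xi|$} half. By Definition~\ref{definitionsupportmotif}, $\mathit{Supp}(X)$ counts the hyperedges $e \in \xi$ that contain at least one vertex of $X$, i.e.\ the hyperedges $e$ with $e \cap X \neq \emptyset$. Since $\mathit{Supp}(X) \leq |\xi|$ always holds, $\mathit{Supp}(X) = |\xi|$ if and only if every hyperedge of $H$ meets $X$, which is precisely the definition of $X \in \gamma_H$ (Definition~\ref{TMberge}). This step is essentially a reformulation and carries no real difficulty.

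Next I would handle \textbf{essentiality $\Leftrightarrow$ minimality}, \emph{given that $X$ is already a transversal}. Suppose $X$ is a transversal with $\mathit{Supp}(X) = |\xi|$. If $X$ is \emph{not} minimal, there is some $x \in X$ with $X \setminus \{x\}$ still a transversal, hence $\mathit{Supp}(X \setminus \{x\}) = |\xi| = \mathit{Supp}(X)$, so $\max\{\mathit{Supp}(X \setminus \{x'\}) \mid x' \in X\} \geq \mathit{Supp}(X)$, contradicting essentiality (Definition~\ref{defessentiel}). Conversely, if $X$ is a minimal transversal, then for every $x \in X$ the set $X \setminus \{x\}$ fails to be a transversal, so it misses at least one hyperedge that $X$ meets only through $x$; hence $\mathit{Supp}(X \setminus \{x\}) < \mathit{Supp}(X)$ for every $x \in X$, which gives $\mathit{Supp}(X) > \max\{\mathit{Supp}(X \setminus \{x\}) \mid x \in X\}$, i.e.\ $X$ is essential. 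One small subtlety to address is the degenerate case $X = \emptyset$: the empty set has $\mathit{Supp}(\emptyset) = 0 \neq |\xi|$ (as $\xi \neq \emptyset$ in a hypergraph by Definition~\ref{hypergraphe}), so it is excluded on both sides and causes no trouble; similarly singletons are handled uniformly since $\mathit{Supp}(\emptyset) = 0$.

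The main obstacle, such as it is, will be making the ``$X \setminus \{x\}$ misses a hyperedge $\Rightarrow$ $\mathit{Supp}$ strictly drops'' step airtight: one must observe that a hyperedge $e$ missed by $X \setminus \{x\}$ but met by $X$ must satisfy $e \cap X = \{x\}$ (it is a \emph{critical} hyperedge for $x$ in the terminology used later), and that such an $e$ is counted in $\mathit{Supp}(X)$ but not in $\mathit{Supp}(X \setminus \{x\})$, while no hyperedge counted in $\mathit{Supp}(X \setminus \{x\})$ is lost going up to $X$ (monotonicity of $\mathit{Supp}$ under set inclusion). Combining the two halves — transversality from the support condition, and the equivalence of essentiality with minimality among transversals — yields exactly the statement of Proposition~\ref{Traverse2}.
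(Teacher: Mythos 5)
Your proposal is correct and, for the direction actually asserted in the proposition (essential with $Supp(X)=|\xi|$ implies minimal transversal), it follows essentially the same route as the paper: full support gives transversality, and essentiality forces every proper subset to have support strictly below $|\xi|$, hence minimality. You additionally prove the converse (every minimal transversal is essential with full support) and spell out the monotonicity/critical-hyperedge step more carefully than the paper does, which is harmless extra rigour but not required by the statement.
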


\begin{proof}
Soit $X$ un ensemble essentiel de sommets tel que $Supp(X)$=$|\xi|$. Par conséquent, $X \bigcap e_i \neq  \emptyset$~ $\forall e_i \in \xi, i =1$, \ldots, $m$. Donc, d'après la définition \ref{TMberge}, $X$ est une traverse. La minimalité de $X$ tient à son "essentialité". En effet, puisque $X$ est essentiel, alors son support est strictement supérieur à celui de ses sous-ensembles directs. Par conséquent, $\nexists X_1 \subset X$ s.t. $Supp(X_1)$=$|\xi|$. $X$ est donc une traverse minimale.
\end{proof}

\begin{example}
L'ensemble des traverses minimales $\mathcal{M}_H$, calculées à partir de l'hypergraphe de l'Exemple \ref{exp_hyp}, est \{\{$1,4,7$\}, \{$2,4,7$\}, \{$1,3,6,7$\}, \{$1,5,6,7$\}, \{$2,3,6,7$\}, \{$2,5,6,7$\}\}.
\end{example}

En se basant sur les définitions présentées ci-dessus, nous pouvons donner une définition plus formelle des traverses minimales multi-membres (\textsc{tmm}).

\begin{definition}\label{Deff}
\textsc{Traverse minimale multi-membres}\\
  Soit $H = (\mathcal{X}$, $\xi)$, un hypergraphe et $X \subset \mathcal{X}$.  $X$ est appelé \textit{Traverse minimale multi-membre}, noté \textsc{Tmm},  si $X$ vérifie les trois conditions suivantes :
 \begin{enumerate}
   \item (\textbf{Condition nécessaire}): $X$ est une traverse minimale
:  $X \in$ $\mathcal{M}_H$.
  \item (\textbf{Condition de composition}):  $X$ est minimale dans $\mathcal{M}_H$ dans le sens de la cardinalité
 : $|$X$|$ = $\tau$(H) where $\tau$(H) = Min $\{$$|$T$|$, $\forall$T$\in$$\mathcal{M}_H$$\}$. $\tau$($H$) est le nombre de transversalité de $H$.
 \item  (\textbf{Condition de recouvrement maximum}):\\
$\sum \limits_{e_i \in \xi / e_i \bigcap X \neq \emptyset} |e_i|$ = Max $\{$ $\sum \limits_{e_i \in xi / e_i \bigcap T \neq \emptyset} |e_i|$, $\forall$ T $\in$ $\mathcal{M}_H$ tel que $|$T$|$ = $\tau$(H) $\}$.
 \end{enumerate}
\end{definition}

Ainsi, un ensemble de sommets est une \textsc{Tmm} s'il constitue une traverse minimale, si sa taille est la plus petite possible et s'il maximise la condition de recouvrement.
Spécifiquement, la première condition assure qu'il existe au moins un multi-membre dans chaque communauté. La seconde suppose que l'ensemble des multi-membres est le plus petit possible. Ainsi, l'objectif est de représenter toutes les communautés avec un nombre minimal de sommets.
La troisième condition, calculant le recouvrement maximum, prend en compte le fait que certains multi-membres peuvent appartenir à une ou plusieurs mêmes communautés. Dans ce cas, le but est de favoriser les éléments qui appartiennent aux communautés les plus grandes.

\begin{example}
Pour l'hypergraphe de l'Exemple \ref{exp_hyp}, nous avons une seule \textsc{Tmm}. C'est la traverse minimale \{$2,4,7$\}.
\end{example}

\section{Méthodologie et algorithmes d'extraction des multi-membres}\label{algo}

A présent, nous introduisons un premier algorithme d'extraction des multi-membres, baptisé \textsc{M2D}, qui balaye l'espace de recherche en largeur. \textsc{M2D} repose sur la propriété d'ordre idéal garantie par les ensembles de sommets essentiels pour l'élagage des candidats. En ce sens, cet algorithme agit d'une manière brute-force en générant les candidats nécessaires. Il s'arrête après avoir atteint le niveau $k$, i.e. le \emph{nombre de transversalité}, où a été détecté la première traverse minimale.

\subsection{Algorithme \textsc{M2D}}

L'algorithme \textsc{M2D}, dont le pseudo-code est décrit par l'algorithme \ref{algoglae}, prend en entrée un hypergraphe $H$ et fournit en sortie l'ensemble des \textsc{Tmm}s. L'algorithme effectue un parcours en largeur d'abord, i.e., il opère par niveau pour déterminer les ensembles de sommets essentiels. A chaque niveau $k$, un appel à la procédure \textsc{Apriori-Gen} \cite{Agrawal12008994} est effectué pour calculer les candidats de taille $k$, à partir des ensembles de sommets essentiels de taille $k-1$. En effet, \textsc{Apriori-Gen} génère un nouveau candidat $X''$ = $\{x_1, x_2, \ldots, x_{i-1}, x_i, x_{i+1}\}$ à partir de deux candidats $X'$ et $X$, tels que $X'$ = $\{x_1, x_2, \ldots, x_{i-1}, x_{i}\}$ et $X$ = $\{x_1, x_2, \ldots, x_{i-1}, x_{i+1}\}$. \textsc{M2D} calcule, ensuite, le support des k-candidats générés, à la ligne 9, et vérifie si leurs supports respectifs sont strictement supérieurs à ceux de leurs sous-ensembles directs (ligne 10).

Si parmi les ensembles de sommets essentiels, générés à un niveau $k$, il existe au moins un sommet, dont le support est égal au nombre d'hyperarêtes de l'hypergraphe (ligne 12), la boucle de la ligne 8 s'arrête et $\mathcal{TMM}$ renferme alors l'ensemble des traverses minimales qui sont minimales au sens de la cardinalité.

  \begin{algorithm}[!h]\label{algoglae}
 \LinesNumbered
\Entree{$H$ = ($\mathcal{X}$, $\xi$): Hypergraphe}
 \Sortie{$\mathcal{TMM}$}
 \Deb{
 $\mathcal{TMM}$ = $\emptyset$;\\
       $i$ := 1\;
    \PourCh{$x \in \mathcal{X}$}{\Si{$Supp\textsc{(}\textit{x}\textsc{)}=|\xi|$}
     {     $\mathcal{TMM} = \mathcal{TMM} \cup \{x\}$\;
     }
     }
     \eSi{$\mathcal{TMM}$ $\neq$ $\emptyset$}{Aller ligne $19$}{
     find = false \;
    \Tq{$L_{i}\neq\emptyset$ \textbf{ou} find = false}{
        $C_{i+1}$ := \textsc{Apriori-Gen}$(L_{i})$\;

        $L_{i+1}$ := $\{X\in C_{i+1}\mid\nexists$ $x\in X:Supp(X)=Supp(X\backslash x)\}$\;
               \PourCh{$X\in L_{i+1}$ }
          {
               \Si{$Supp(X)=|\xi|$}      {     $\mathcal{TMM} = \mathcal{TMM} \cup \{X\}$\;

               find = true;

   }
    }
        $i$ := $i$ + 1\;

    }
    }
    $\mathcal{TMM}$ = \textsc{Recouvrement}($\mathcal{TMM}$);\\
    \Retour{ $\mathcal{TMM}$ }

    }

     \caption{\textsc{M2D}}
\end{algorithm}

 Ces ensembles de sommets vérifient aussi bien la condition nécessaire que la condition de composition de la définition \ref{Deff}. Au final, parmi ces candidats, les \textsc{Tmm}s sont déterminés en se basant sur la fonction de calcul du recouvrement (ligne $16$). Cette fonction calcule le nombre de sommets couverts par chaque candidat (\textit{i.e.}, la somme des cardinalités des communautés auxquelles appartient ce candidat) et retourne ceux qui ont la valeur maximale. Ainsi, la troisième condition de la définition \ref{Deff} est aussi vérifiée.

\begin{table}[!h]
\begin{center}

\begin{tabular}{|c|c|}
\hline
        \textsc{Min. Tran.} &     Recouvrement \\
\hline
     \textbf{1 4 7} &          \texttt{\textbf{8}} \\
\hline
     \textbf{2 4 7}$\bigstar$ &        \texttt{ \textbf{10}} \\
\hline
   1 3 6 7 &            \\
\hline
   1 5 6 7 &            \\
\hline
  2 3 6 7  &            \\
\hline
   2 5 6 7 &            \\
\hline
\end{tabular}
\end{center}
   \caption{Les \textsc{Tmm}s extraits à partir de l'hypergraphe de la Figure \ref{hyp11}}\label{process}
  \end{table}

\begin{example}\label{exp_tmm}
Illustrons le déroulement de l'algorithme \textsc{M2D} sur l'hypergraphe de la Figure \ref{hyp11}. \textsc{M2D} balaye l'espace de recherche en opérant en largeur jusqu'à arriver au niveau 3. Toutes les traverses minimales, que renferme l'hypergraphe de la figure \ref{hyp11} (page \pageref{hyp11}), sont données par la première colonne du tableau \ref{process}. Seules les plus petites traverses minimales, au sens de la cardinalité, nous intéressent et c'est la raison pour laquelle l'algorithme s'arrête au troisième niveau. Ces traverses minimales sont marquées comme étant des \textsc{Tmm}s candidates : \emph{\{1 4 7\}} et \emph{\{2 4 7\}} dans le tableau \ref{process}. La fonction \textsc{Recouvrement} calcule, pour chaque candidat, la somme des tailles des communautés auxquelles il appartient et ne gardera que celui qui la maximise. Ainsi, le premier \textsc{Tmm} candidat est \emph{\{1, 4, 7\}}. Le sommet \emph{1} couvre le sommet \emph{2}, le sommet \emph{4} couvre les sommets \emph{3}, \emph{5} et \emph{6}. Enfin, le sommet \emph{7} couvre les sommets \emph{2}, \emph{3}, \emph{6} et \emph{8}. Le candidat \emph{\{1, 4, 7\}} couvre donc, deux fois, chacun des sommets \emph{2}, \emph{3} et \emph{6}, et une seule fois les sommets \emph{5} et \emph{8}. Ainsi, le recouvrement du candidat \textsc{Tmm} \emph{\{1, 4, 7\}} est égal à 8 sommets alors que, dans le même temps, le candidat \emph{\{2, 4, 7\}} couvre au total 10 sommets. Ce dernier candidat est alors la seule \textsc{Tmm} de l'hypergraphe d'entrée et est retourné par l'algorithme \textsc{M2D}.

\end{example}

L'algorithme \textsc{M2D} balaye l'espace de recherche du niveau 1 jusqu'au niveau $k$, i.e., le niveau renfermant les plus petites traverses minimales. Sachant que les \textsc{Tmm} appartiennent à un et un seul niveau, l'ensemble des candidats générés du niveau 1 jusqu'au niveau $k-1$ est inutile puisque ces derniers ne vérifient pas la condition nécessaire de la définition \ref{Deff}. Cette génération inutile des candidats handicape sérieusement l'efficacité du processus de recherche des \textsc{Tmm}s, spécialement dans les bases éparses où la taille des \textsc{Tmm}s est large (i.e., localisées dans un niveau élevé de l'espace de recherche). Idéalement, il serait plus bénéfique d'accéder directement à ce niveau $k$. Dans l'exemple ci-dessus, les candidats de taille 1 et 2 ne renferment pas des \textsc{Tmm}s puisque la taille de la plus petite traverse minimale est égale à 3. Ainsi, "sauter" les niveaux 1 et 2 présenterait une optimisation conséquente dans la mesure où le nouveau algorithme n'aura pas à générer et tester les candidats inutiles. Cet algorithme, appelé \textsc{O-M2D}, est une optimisation de l'algorithme \textsc{M2D} et détermine intelligemment le niveau adéquat $k$ pour identifier les \textsc{Tmm} à partir des k-candidats uniquement.

\subsection{Algorithme \textsc{O-M2D}}

Comme \textsc{M2D}, l'algorithme \textsc{O-M2D} prend en entrée un hypergraphe $H$ et donne en sortie l'ensemble des \textsc{Tmm}s. Le balayage s'effectue sur un seul niveau, i.e., le niveau qui renferme les \textsc{Tmm}s. \textsc{O-M2D} commence par invoquer la fonction \textsc{GetMinTransversality}, dont le pseudo-code est décrit par l'Algorithme \ref{proc}, pour localiser le niveau où la taille des candidats générés est égale à celle des \textsc{Tmm}s (ligne $2$). Ce niveau correspond au nombre de transversalité de l'hypergraphe d'entrée.

Ensuite, \textsc{O-M2D} génére, un à un, l'ensemble des $k$-candidats (ligne $3$). Pour chaque $k$-candidat, i.e., ensemble de sommets de taille $k$, l'algorithme calcule son support (ligne $4$). Si ce support est strictement supérieur au maximum des supports de ses sous-ensembles directs et qu'il est égal au nombre des hyperarêtes (ligne $5$), alors ce candidat est marqué comme étant une traverse minimale et donc une \textsc{Tmm} potentielle.

Quand toutes les traverses minimales ont été extraites, la fonction \textsc{Recouvrement} (ligne $7$) se charge d'identifier l'ensemble des \textsc{Tmm}s, comme expliqué plus haut.

\begin{algorithm}[htbp]\label{algoglae2}
 \LinesNumbered
\Entree{$H$ = ($\mathcal{X}$, $\xi$): Hypergraphe et $IM_H$ sa matrice d'incidence correspondante}
 \Sortie{$\mathcal{TMM}$}
 \Deb{
    $Level$ := \textsc{GetMinTransversality}($IM_H$);\\

               \PourCh{$X \subseteq \mathcal{X}$ tel que $\mid X \mid$ = $level$ }
          {
          \Si{$\nexists$ $x\in X:Supp(X)=Supp(X\backslash x)$} {
            \Si{$Supp(X)=|\xi|$} {
                $\mathcal{TMM} = \mathcal{TMM} \cup \{X\}$\; }
                }
            }
$\mathcal{TMM}$ = \textsc{Recouvrement}($\mathcal{TMM}$);\\
    \Retour{ $\mathcal{TMM}$ }
    }
     \caption{\textsc{O-M2D}}
\end{algorithm}

\begin{algorithm}[!h]
{
\LinesNumbered
\Entree{Matrice d'incidence $IM_{H}$ associée à $H$ = ($\mathcal{X}$, $\xi$)}
 \Sortie{$T$: Une plus petite traverse minimale de $H$ ; $k$ : Nombre de transversalité de $H$}
\Deb
{
$k$ = $\mid \xi \mid$;\\
$T$ = $\emptyset$;\\
\PourCh{$x$ $\in$ $\mathcal{X}$}{$i$ = 1;\\
$T_{tmp}$ = $\emptyset$;\\
$T_{tmp}$[$i$] = $x$;\\
$\xi'$ = $\xi$ $\backslash$ $\{e \in \xi \mid x \in e\}$;\\
($n$, $T_{tmp}$) = \textsc{hyp\_empty}($\xi'$, $\mid \xi \mid$, i, $T_{tmp}$);\\
\Si{$n$ $<$ $k$}{$k$ = $n$;\\ $T$ = $T_{tmp}$;}
}
\Retour{($k$, $T$)}
}
}
  \caption{\textsc{GetMinTransversality}}
  \label{proc}
\end{algorithm}

\begin{algorithm}[!h]
{
\LinesNumbered
\Entree{$\xi'$ : Ensemble d'hyperarêtes ; $min$, $i$ : entier ; $T_{tmp}$ : tableau de sommets}
 \Sortie{$min$ : Nombre minimum d'itérations pour obtenir un hypergraphe vide ; $T'$ : Ensemble de sommets de cardinalité égale à $min$}

\Deb
{
\eSi{$\xi'$ = $\emptyset$}{\Retour{(i, $T_{tmp}$)}}{$T$ = \{$x$ $\in \mathcal{X}$ \emph{tel que}
$ \mid \{e \in \xi' \mid  x \in e \} \mid$ = max
$\{\mid \{e \in \xi' \mid x_l \in e\} \mid, x_l \in \mathcal{X}\}$\};\\
$T'$ = $\emptyset$;\\
$i$ = $i$ + 1;\\
\PourCh{$x$ $\in$ $T$ }{$\xi''$ = $\xi'$ $\backslash$ $\{e \in \xi' \mid x \in e\}$;\\
$T_{tmp}$[$i$] = $x$;\\
($m$, $T_{tmp}$) = \textsc{hyp\_empty}($\xi''$, $min$, $i$, $T_{tmp}$);\\
\Si{$m$ $<$ $min$}{$min$ = $m$;\\$T'$ = $T_{tmp}$;}}
\Retour{($min$, $T'$)}
}
}
}
  \caption{\textsc{hyp\_empty}}
  \label{hypEmpty}
\end{algorithm}

La fonction \textsc{GetMinTransversality} recherche le nombre minimal de sommets pouvant constituer une traverse minimale, i.e. le nombre de transversalité de l'hypergraphe. Pour ce faire, la fonction  parcourt les sommets, un par un (ligne $3$). Pour chaque élément $x$ de $\mathcal{X}$, \textsc{GetMinTransversality} supprime de la matrice d'incidence $IM_{H}$ les hyperarêtes de $\xi$ qui contiennent $x$ (ligne $5$). Les hyperarêtes restantes sont stockés dans $\xi'$.
La fonction invoque ensuite \textsc{hyp\_empty}, dont le pseudo-code est donné par l'Algorithme \ref{hypEmpty}. \textsc{hyp\_empty} est une fonction récursive qui stocke dans $T$ les sommets ayant le plus grand support dans $\xi'$ (ligne $5$) et les traitera, un par un, en supprimant à chaque fois les hyperarêtes auxquelles appartient le sommet traité (ligne $8$). La condition d'arrêt de notre fonction récursive est l'absence d'hyperarêtes dans $\xi'$ (ligne $2$). La valeur stockée dans $m$ correspond au nombre d'appels à la fonction \textsc{hyp\_empty} nécessaires pour que $\xi'$ soit égal à l'ensemble vide. Pour chaque élément de $T$, la fonction vérifie si $m$ est la valeur trouvée jusque-là, parmi les éléments traités de $T$. Si tel est le cas, elle est stockée dans $min$ dont la valeur est retournée à la fin. Pour chaque sommet $x$ traité, l'ensemble $\xi'$ est réactualisé à toutes les hyperarêtes de l'hypergraphe auxquelles nous supprimons celles qui contiennent $x$.
Au final, \textsc{GetMinTransversality} retourne le nombre minimum d'itérations permettant de "vider" la matrice d'incidence. La valeur de $k$, retournée par la fonction, correspond ainsi au nombre de transversalité de l'hypergraphe d'entrée $H$.

\begin{conjecture}
La fonction \textsc{GetMinTransversality} permet d'obtenir une borne maximale du nombre de transversalité, noté $\tau$(H) dans la \textit{Definition} \ref{Deff}, d'un hypeergraphe. Dans le meilleur des cas, cette borne est exactement le nombre de transversalité.
 \end{conjecture}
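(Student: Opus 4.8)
The conjecture claims that the greedy procedure \textsc{GetMinTransversality} returns a value $k$ that is an upper bound on the transversality number $\tau(H)$, and that in the best case $k = \tau(H)$. The proof naturally splits into two parts: first showing that $k$ is a valid upper bound (this is where correctness of the construction must be argued), and second exhibiting a family of hypergraphs for which the bound is tight.

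\medskip
\noindent\textbf{Step 1: The returned set is a transversal.} First I would argue that, for any fixed starting vertex $x \in \mathcal{X}$, the recursive call to \textsc{hyp\_empty} terminates and that the vertex set $T_{tmp}$ it accumulates is indeed a transversal of $H$. The key invariant is that $\xi'$ always equals the subfamily of hyperedges not yet met by the vertices placed in $T_{tmp}$; each recursive step removes from $\xi'$ all hyperedges containing the newly chosen vertex (line 8 of \textsc{hyp\_empty}), and the recursion halts precisely when $\xi' = \emptyset$ (line 2), i.e. when every hyperedge has been hit. Since at each step the chosen vertex has \emph{maximal} support in $\xi'$ and $\xi' \neq \emptyset$ implies $\bigcup_{e \in \xi'} e \neq \emptyset$, that maximum is at least $1$, so $|\xi'|$ strictly decreases; hence the recursion terminates in at most $|\xi|$ steps. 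Therefore the accumulated $T_{tmp}$ satisfies $T_{tmp} \cap e_i \neq \emptyset$ for all $i$, i.e. $T_{tmp} \in \gamma_H$ by Definition~\ref{TMberge}. Consequently the count $n$ returned equals $|T_{tmp}|$, the size of an actual transversal.

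\medskip
\noindent\textbf{Step 2: The minimum over starting vertices bounds $\tau(H)$.} \textsc{GetMinTransversality} iterates over all $x \in \mathcal{X}$ and keeps the smallest $n$ found, call it $k$, together with a witness set $T$. By Step~1, $T$ is a transversal of size $k$. Now $\tau(H) = \min\{|S| : S \in \mathcal{M}_H\}$, and every transversal contains a minimal transversal, so $\tau(H) \le |T| = k$. This establishes the first assertion: $k$ is an upper bound on $\tau(H)$. (One should note $T$ need not itself be \emph{minimal} as a transversal, since the greedy choices may introduce redundant vertices; but this does not affect the inequality, only the claim that $T \in \mathcal{M}_H$.)

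\medskip
\noindent\textbf{Step 3: Tightness in the best case.} For the second assertion I would exhibit hypergraphs on which $k = \tau(H)$ exactly. The cleanest example is the ``matching'' hypergraph of Example~\ref{expnb}: $\xi = (\{x_1,x_2\}, \{x_3,x_4\}, \ldots, \{x_{2n-1},x_{2n}\})$, which is intersection-free, has $\tau(H) = n$, and for which \emph{any} run of the greedy procedure removes exactly one hyperedge per step regardless of the choices, returning $n$. More generally, whenever the hypergraph admits a greedy ordering in which every selected vertex meets only one not-yet-covered hyperedge beyond those forced, the procedure returns $\tau(H)$; I would phrase the ``best case'' precisely as: there exists a starting vertex and a sequence of maximum-support choices whose resulting transversal is minimum. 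Since \textsc{GetMinTransversality} explores all starting vertices and, within \textsc{hyp\_empty}, branches over \emph{all} maximum-support vertices at each level (the \PourCh loop on line~7 of Algorithm~\ref{hypEmpty}, keeping the branch minimizing $m$), it will find such a sequence if one exists — so on these instances the returned $k$ equals $\tau(H)$.

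\medskip
\noindent\textbf{Main obstacle.} The delicate point is Step~2's implicit claim, and the statement of the conjecture's ``best case,'' hinge on what \textsc{hyp\_empty} actually minimizes over. Read literally, the branching is only over maximum-support vertices, not over all vertices, so the procedure is a restricted (greedy) search, not exhaustive; the upper-bound direction is robust to this, but pinning down a clean, provable characterization of \emph{when} the greedy optimum coincides with the true $\tau(H)$ is the real work. I would handle it by not attempting a full characterization — the conjecture only asserts existence of a best case — and instead verifying tightness on an explicit family (the matching hypergraph, and more generally uniform hypergraphs where greedy is known to be exact), which suffices for the ``dans le meilleur des cas'' clause. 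The honest caveat to flag is that in the worst case $k$ can strictly exceed $\tau(H)$ — indeed, greedy set cover has a logarithmic integrality gap — so the conjecture is genuinely a \emph{conjecture} about the gap only in the sense that it asserts the bound is sometimes tight, which the examples settle.
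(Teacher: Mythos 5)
The paper offers no proof of this statement: it is left as a genuine conjecture, and the only support given is empirical (the authors report, in the experimental sections of chapters 3 and 5, that on every dataset they processed the value returned by \textsc{GetMinTransversality} happened to equal $\tau(H)$). Your argument therefore cannot be compared to a paper proof, but it is sound and in fact settles the statement as literally written. Step 1 (the accumulated $T_{tmp}$ is a transversal because the recursion halts exactly when $\xi'=\emptyset$, and the returned counter equals its cardinality) and Step 2 (any transversal contains a minimal one, hence $\tau(H)\le k$) together give the upper-bound claim rigorously; your remark that the witness $T$ need not itself be a \emph{minimal} transversal is exactly the right caveat. Step 3 correctly reads ``dans le meilleur des cas'' as an existence claim and discharges it with the matching hypergraph of Example \ref{expnb}. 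Your closing observation — that the restriction of the branching in \textsc{hyp\_empty} to maximum-support vertices makes the search non-exhaustive, so that $k>\tau(H)$ is possible on greedy-set-cover-type instances — is the honest reason the authors could only state this as a conjecture and verify it experimentally; it is worth adding that the pseudocode's strict test $m<min$ can, in degenerate cases such as a perfect matching where the recursion depth equals the initial value $\mid\xi\mid$, return an empty witness set even though the numerical bound remains valid.
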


\begin{example}
Reconsidérons l'exemple illustratif de la Figure \ref{hyp11}. En optimisant, comme nous l'avons expliqué précédemment, l'algorithme \textsc{M2D}, les sommets et 2-candidats ne sont pas générés et, donc, leurs supports ne sont pas calculés. \textsc{O-M2D} accède directement au niveau 3, générant tous les 3-candidats. Parmi ces 3-candidats, \textsc{O-M2D} détecte 24 ensembles de sommets essentiels mais seulement deux d'entre eux sont des traverses minimales : $\{1,2,7\}$ et $\{2,4,7\}$. En d'autres termes, seuls ces deux ensembles de sommets ont un support égal au nombre d'hyperarêtes. La fonction \textsc{recouvrement} permet de déterminer, au final, la ou les \textsc{Tmm}s. Le recouvrement du {Tmm} candidat {\{1, 4, 7\}} est égal à \emph{8} sommets alors que celui du candidat \emph{\{2, 4, 7\}} est égal à \emph{10} sommet. Donc, \textsc{O-M2D} sélectionne \emph{\{2, 4, 7\}} comme unique \textsc{Tmm}, en sortie.
\end{example}


\section{Etude de la complexité}
A partir d'un hypergraphe de $n$ sommets et $m$ hyperarêtes, nous avons :
\begin{enumerate}
  \item La fonction \textsc{GetMinTransversality} a une complexité exponentielle, au pire des cas, de $O$($m*n^{m+1}$), avec $n$ = $\mathcal{X}$ et $m$ = $\mid \xi \mid$, pour déterminer le nombre de transversalité.
  \item La cardinalité de l'ensemble des candidats de taille $k$ générés est égale à $C_n^k$.
  \begin{enumerate}
    \item \textsc{O-M2D} calcule, ensuite, le support de chaque sous-ensemble direct. Le support est obtenu en $m$ opérations.
    \item Le support de $X$ est calculé en $m$ opérations.
  \end{enumerate}
  \item Les tests pour vérifier que $X$ est une traverse minimale s'effectuent en $O$(1).
\end{enumerate}

Pour un hypergraphe donné $H$, l'algorithme \textsc{O-M2D} calcule donc l'ensemble des multi-membres en : $O$($m*n^{m+1}$ $\times$ $m$ $\times$ $C_n^k$) $\equiv$ $O$($m^2 * n^{m+1}$ * $C_n^k$).

\section{Etude expérimentale}
\label{expes}
Au cours de notre étude expérimentale, nous mettons l'accent sur une évaluation approfondie des performances des algorithmes présentés dans la section précédente. Nous comparons à travers les nombreuses expérimentations menées, les performances de \textsc{O-M2D} \textit{vs} respectivement \textsc{M2D}, \textsc{MtMiner} \cite{HebertBC07} et \textsc{ks} \cite{KavvadiasS05}. De tous les algorithmes d'extraction des traverses minimales existants dans la littérature, notre choix s'est porté sur ces deux derniers en raison des disponibilités de leurs codes sources \footnote[1]{Nous remercions les auteurs d'avoir mis à notre disposition leurs codes sources.}. Nous avons ainsi eu la possibilité de les modifier pour qu'ils ne calculent que les plus petites traverses minimales. Par ailleurs, tous les algorithmes considérés sont implémentés en \emph{C++} (compilés avec \emph{GCC} $4$.$1$.$2$) et les expérimentations réalisées sur une machine munie d'un processeur Intel Core $i7$ ayant une fréquence d'horloge de 2GHz et $6$ Go de mémoire centrale, et avec le système d'exploitation de Linux, \textsc{Ubuntu} $10$.$04$.

Durant ces expérimentations, nous avons considéré un jeu de données lié à une application de gestion de projet, des jeux de données "pire des cas" ainsi qu'un autre ensemble de jeux que nous avons construit à partir de deux bases de données du monde réel. Tout au long de notre étude expérimentale, nous avons vérifié que la borne maximale retournée par la fonction \textsc{GetMinTransversality} est bien égale au nombre de transversalité pour chaque hypergraphe traité.

\subsection*{Jeu de données de gestion de projets}
En gestion de projet, nous pouvons connaître les compétences requises pour mener à bien un projet donné, ainsi que celles des acteurs. L'objectif est alors d'identifier le plus petit ensemble d'individus capables de réaliser le projet. De plus, on peut souhaiter que chaque acteur puisse avoir le plus de compétences possibles. Ceci revient alors à chercher les \textsc{tmm}s.

Dans ce cas, une communauté serait composé d'un ensemble d'acteurs offrant une même compétence. Les communautés ne sont pas disjointes puisqu'un acteur peut avoir plusieurs compétences. Le jeu de données correspondant à ce problème est représenté par un hypergraphe constitué de 168 sommets, dont chacun correspond à un acteur, et 50 hyperarêtes, dont chacune correspond à une communauté, c'est à dire à une compétence nécessaire pour le projet.
\begin{table}[!h]
\begin{center}
\begin{tabular}{|r|r|r|r|r|r|r|}
\hline
 &\emph{$|$Comm.$|$} & \emph{$|$$\mathcal{X}$$|$} & \emph{$|$$\xi$$|$} & \emph{$|$$\mathcal{M}_H$$|$} & \emph{\#\textsc{Tmm}} &\emph{$\tau$(H)}\\
  \hline

   \textsc{\textbf{PM}} & 5 & 168   & 50    & 320 & 16 & 9 \\\hline

  \end{tabular}
\end{center}
\caption{Caractéristiques du jeu de données de gestion de projets}\label{AD}
\end{table}
Les caractéristiques de cet hypergraphe, appelé \textsc{PM}, sont résumés par le tableau \ref{AD} où \emph{$|$Comm.$|$} correspond à la taille de la plus petite hyperarête, \emph{$|$$\mathcal{X}$$|$} au nombre de sommets, \emph{$|$$\xi$$|$} au nombre d'hyperarêtes, \emph{$|$$\mathcal{M}_H$$|$} au nombre de traverses minimales, \emph{\#\textsc{Tmm}} au nombre de multi-membres calculés et \emph{$\tau$(H)} au nombre de transversalité. L'objectif est donc de rechercher les plus petits ensembles d'acteurs ayant les compétences requises pour mener à bien le projet.

\textbf{Performances et interpretations sur le jeu de données de gestion de projet}: Comme le montre le tableau \ref{exAD}, \textsc{MtMiner} est incapable de traiter ce jeu de données alors que les algorithmes \textsc{ks}, \textsc{M2D} et \textsc{O-M2D} nécessitent, respectivement, $307,15$, $1688$,$22$ et $158$,$93$ secondes pour extraire les traverses minimales multi-membres.

 Nos algorithmes ont extrait $320$ traverses minimales d'une taille égale à $9$ qui correspond à la valeur du nombre de transversalité, noté $\tau$($H$) dans la définition \ref{Deff}. Ceci signifie que nous devons réunir au moins neuf acteurs pour la réalisation du projet et ces $320$ traverses minimales correspondent aux sous-ensembles d'acteurs ayant les compétences nécessaires pour le projet. Parmi ces $320$ traverses minimales, seulement $16$ sont considérées comme des traverses minimales multi-membres. Ces dernières maximisent la condition de recouvrement. La particularité de ces traverses minimales multi-membres est qu'elles contiennent un ou plusieurs acteurs présentant diverses compétences. Ainsi, nos algorithmes parviennent à trouver des équipes, de taille minimale ayant le maximum de compétence, qui sont le plus aptes à conduire le projet.

\begin{table}[!h]
\begin{center}
\begin{tabular}{|c|r|r|r|r|}
\hline
              & \textsc{ks} &    \textsc{Mtminer} &  \textsc{M2D} & \textsc{O-M2D}\\
\hline
       \textsc{\textbf{PM}}   & 307,15 &    - &    1688,22 &       \textbf{158,93}\\
\hline

\end{tabular}
\end{center}
\caption{Jeu de données de gestion de projets: temps d'exécution (en secondes)}\label{exAD}
\end{table}

\subsection*{Bases de communautés sociales}
Dans cette seconde expérimentation, nous considérons des folksonomies, à partir desquelles nous avons extrait des communautés. Une \emph{folksonomie} est un néologisme, né de la jonction des mots \emph{folk} (\emph{i.e.,} les gens) et \emph{taxonomie}, désignant un système de classification collaborative par les internautes \cite{Mika2005aInf}. L'idée est de permettre à des utilisateurs de partager et de décrire des objets via des tags librement choisis. Formellement, une \emph{folksonomie} est composée de trois ensembles $\mathcal{U}$, $\mathcal{T}$, $\mathcal{R}$ et d'une relation ternaire $\mathrm{Y}$ entre eux, où $\mathcal{T}$ est un ensemble de tags (ou étiquettes) et $\mathcal{R}$ est un ensemble de ressources partagées par les utilisateurs, qui peuvent être des sites web à marquer\footnote[2]{\emph{http://del.icio.us}}, des vidéos personnelles à partager\footnote[3]{\emph{http://youtube.com}} ou des films à décrire\footnote[4]{\emph{http://movielens.org}} selon le type de la \emph{folksonomie} considérée. Quant à l'ensemble $\mathcal{U}$, il consiste en l'ensemble d'utilisateurs d'une \emph{folksonomie} qui sont décrits par leurs identifiants (pseudonymes).

Nous avons appliqué l'algorithme \textsc{Tricons}~\cite{Chi2012} pour l'extraction des tri-concepts associés à de telles folksonomies. Ces derniers sont des ensembles maximaux de la forme (Utilisateurs, Ressources, Tags): l'ensemble maximum d'utilisateurs, qui ont partagé un ensemble maximal de ressources qu'ils ont annoté avec un ensemble maximal de tags.

Pour extraire les communautés, nous projetons les tri-concepts sur la dimension "Utilisateurs". Ceci est réalisé en faisant varier le seuil du support minimal des utilisateurs, \emph{i.e.,} $minsupp_{u}$, qui est le nombre minimal d'utilisateurs qu'un tri-concept peut contenir. Dans ce qui suit, nous décrivons les folksonomies considérées au cours de nos expérimentations.

\begin{enumerate}
\item \textsc{del.icio.us\footnote{www.delicious.com}}: le système \textsc{del.icio.us} est un service de marque-page social qui offre à ses utilisateurs la possibilité de partager leurs pages web préférées. La base de données considérée dans ce rapport contient tous les marque-pages ajoutés sur le site \textit{http://delicious.com} en Janvier $2007$. Le processus de récupération regroupe quelque $494, 636$ marque-pages qui ont été publiés par $54,915 $ utilisateurs par le biais de $64,968 $ tags sur $129,220 $ ressources. Dans cette étude expérimentale, nous considérons qu'une communauté, dans une base \textsc{del.icio.us}, est constituée des utilisateurs ayant partagé, au moins, deux mêmes pages web. Avant l'application de nos algorithmes, un pré-traitement sur ces données a permis de dégager les communautés qui serviront d'hyperarêtes dans l'hypergraphe d'entrée.

\item \textsc{MovieLens\footnote{www.movielens.umn.edu}}: il s'agit d'un système de recommandation filmographique \textsc{MovieLens}, dont le site web a été conçu par un groupe de recherche, \emph{GroupLens}, à l'université de Minnesota, aux États-Unis. Disponible au public, ce jeu de données contient des évaluations explicites au sujet de films. Le site met à disposition deux jeux de données d'évaluations de films, de tailles différentes. Le premier jeu comprend $1,000,000$ évaluations, de 1 à 5 étoiles, faites par environ $6,000$ utilisateurs, et le second comprend $100,000$ évaluations fournies par $943$ utilisateurs sur $1,682$ films, entre Septembre $1997$ et Avril $1998$. C'est ce second jeu de données que nous avons utilisé pour nos tests, en considérant qu'une communauté est formée des utilisateurs qui ont fourni leur avis sur au moins deux mêmes films.

\end{enumerate}

En variant $minsupp_{u}$, nous obtenons quatre jeux de données à partir de la base de données \textsc{del.icio.us} des folksonomies (notés \emph{Del1}, \emph{Del2}, \emph{Del3} et \emph{Del4}) et trois jeux de données à partir de la base de données \textsc{Movielens} des folksonomies (notés \emph{Mov1}, \emph{Mov2} et \emph{Mov3}). Dans l'hypergraphe associé à chaque jeu de données, les sommets représentent les utilisateurs et les hyperarêtes correspondent aux communautés où une communauté représente un ensemble d'utilisateurs ayant partagé le même ensemble de ressources avec le même ensemble de tags. Pour chaque jeu de données, l'objectif est de trouver le plus petit ensemble d'utilisateurs permettant de représenter toutes les communautés.

Les caractéristiques des différents jeux de données sont résumés dans le Tableau \ref{carac}. Ainsi, la première colonne \emph{$\mid$Comm.$\mid$} indique le nombre minimum de sommets dans une communauté (\textit{i.e.} $minsupp_{u}$). La seconde colonne contient le nombre de sommets (\emph{$|$$\mathcal{X}$$|$}) de l'hypergraphe et la troisième le nombre d'hyperarêtes (\emph{$|$$\xi$$|$}). La quatrième colonne indique le nombre de traverses minimales (\emph{$|$$\mathcal{M}_H$$|$}) que renferme l'hypergraphe. L'avant-dernière colonne montre le nombre de \textsc{Tmm}s. La dernière colonne correspond à la taille des \textsc{Tmm}s, en termes de nombre de sommets ($\tau$(H)). Nous pouvons noter que plus la valeur de \emph{$\mid$Comm.$\mid$} est basse, plus le nombre de \textsc{Tmm}s (\emph{\#\textsc{Tmm}}) et leurs tailles (\emph{$\tau$(H)}) sont élevées, atteignant 21 pour \emph{Del4} et 20 pour \emph{Mov3}.

\begin{table}[htbp]
\begin{center}
\centering
\begin{tabular}{|r|r|r|r|r|r|r|}
\hline

   &\emph{$|$Comm.$|$} & \emph{$|$$\mathcal{X}$$|$} & \emph{$|$$\xi$$|$} & \emph{$|$$\mathcal{M}_H$$|$} & \emph{\#\textsc{Tmm}} & \emph{$\tau$(H)} \\
  \hline

    \textbf{Del1} & 6 & 51   & 38    & 13    & 4 & 5 \\\hline
    \textbf{Del2} & 5 & 119   & 91    & 52    & 10 & 6 \\\hline
    \textbf{Del3} & 3 & 165   & 157   & 1800  & 78 & 13 \\\hline
    \textbf{Del4} & 2 & 248   & 179   & 8976  & 201 & 21\\\hline

  \hline \hline \hline

    \textbf{Mov1} & 5 & 88   & 80    & 108 & 1 & 6\\\hline
    \textbf{Mov2} & 3 & 143   & 246  & 172  & 3 & 12\\\hline
    \textbf{Mov3} & 2 & 196   & 501   & 306  & 26 & 20\\\hline

  \end{tabular}

\end{center}
\caption{Caractéristiques des bases sociales [(\textbf{Haut}) \textsc{del.icio.us} (\textbf{Bas}) \textsc{MovieLens}]}\label{carac}
\end{table}

\textbf{Performances}: comme le montre le tableau \ref{del}, les différents tests confirment que l'algorithme \textsc{O-M2D} surpasse largement les algorithmes \textsc{M2D}, \textsc{MtMiner} et \textsc{ks}, pour l'ensemble des jeux de données. Par ailleurs, si \textsc{M2D} présente des temps d'exécution élevés, il est assez robuste pour venir à bout de tous les jeux de données, alors que la consommation mémoire élevée de \textsc{MtMiner} l'empêche de s'exécuter sur les jeux \emph{Del3}, \emph{Del4} et \emph{Mov3}.
En déterminant le nombre d'éléments dans une \textsc{Tmm}, \textsc{M2D} est capable d'élaguer de nombreux candidats qui sont générés et traités par \textsc{MtMiner}.
Dans les deux types de jeux de données, les résultats confirment, par ailleurs, que l'écart, en termes de temps d'exécution, des différents algorithmes est en faveur de \textsc{O-M2D}. Cet écart est plus conséquent quand les nombres de sommets et d'hyperarêtes sont grands.
En effet, la dernière colonne du tableau \ref{carac} montre que la taille des \textsc{Tmm}s est très large, atteignant respectivement, 13 pour \emph{Del3}, 21 pour \emph{Del4}, et 20 pour \emph{Mov3}. L'avantage principal de \textsc{O-M2D} se résume dans sa faculté à cibler directement ce niveau (appelé \textit{level} dans l'Algorithm 2 et $\tau$(H) dans la \textit{Définition} \ref{Deff}).

Par exemple, pour \emph{Del3} et \emph{Del4}, \textsc{MtMiner} ne peut pas extraire les traverses minimales quand l'hypergraphe renferme plus de $157$ hyperarêtes. Par ailleurs, sachant que le nombre de traverses minimales croît exponentiellement, l'avantage que présente l'algorithme \textsc{O-M2D} est sa capacité à fournir un résultat sans stocker les candidats en mémoire.
\begin{table}[!h]
\begin{center}
\begin{tabular}{|c|r|r|r|r|}
\hline
              & \textsc{ks} & \textsc{Mtminer} &  \textsc{M2D} & \textsc{O-M2D}\\
\hline
       \textbf{Del1}   &  88,26  &  77,45 &    276,65 &       \textbf{61,28}\\
\hline
       \textbf{Del2}  &  263,90 &  200,66 &   964,32  &      \textbf{112,50}\\
\hline
       \textbf{Del3} &   401,38  &   -  &    1920,12 &     \textbf{174,78}\\
\hline
       \textbf{Del4} &    793,08  &    - &   2880,84  &      \textbf{364,92} \\ \hline

\hline \hline \hline

       \textbf{Mov1}    &  72,00  & 53,28 &   335,71 &       \textbf{59,52}\\
\hline
       \textbf{Mov2}    & 262,09 & 185,56 &     1492,34 &      \textbf{131,84}\\
\hline
       \textbf{Mov3}   & 881,73 &       -  &   2655,63 &      \textbf{351,27}\\
\hline
\end{tabular}
\end{center}
\caption{Bases sociales [(\textbf{Haut}) \textsc{del.icio.us} (\textbf{Bas}) \textsc{MovieLens}]: Temps d'exécution (en secondes)}\label{del}
\end{table}

Dans le but d'analyser, en profondeur, les \textsc{Tmm}s calculées par nos algorithmes, considérons le jeu de données \emph{Del2} du Tableau \ref{carac} et les caractéristiques des sommets qui appartiennent à l'ensemble des \textsc{Tmm}s.

\begin{figure}[!h]
\begin{center}
\includegraphics[scale=0.65]{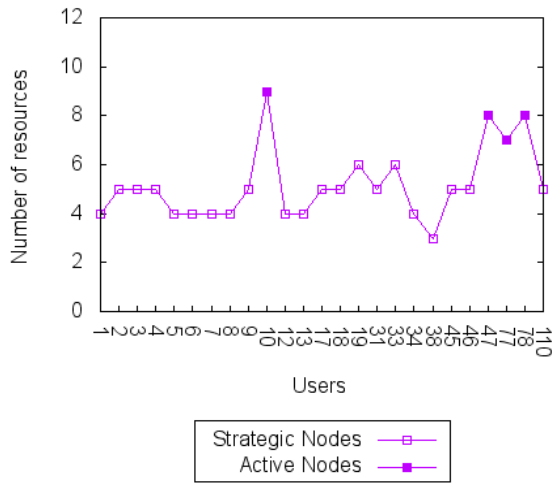}
\caption{Nombre de ressources partagées par les $25$ utlisateurs les plus actifs}\label{ress}
\end{center}
\end{figure}

\textsc{O-M2D} donne en sortie \emph{10} \textsc{Tmm}s de taille 6, i.e, composé de 6 sommets. Cela signifie que nous devons trouver au moins 6 utilisateurs pour représenter l'ensemble des communautés. Un examen de près de ces \textsc{Tmm}s montre que \emph{4} sommets (10, 47, 77, 78) appartiennent à tous les \textsc{Tmm}s extraites à partir de ce jeu de données. Nous les appelons "\textit{actifs}" car ils ont la plus importante activité de marquage (tagging) dans le jeu de données Del2 de \textsc{del.icio.us}, comme le montre la Figure \ref{ress} et la Figure \ref{tags}.

Les deux autres sommets (que nous appelons "\textit{stratégiques}") n'ont pas, au contraire, une activité de marquage exceptionnelle. Leur appartenance aux \textsc{Tmm}s s'explique par le fait qu'ils représentent des communautés qui ne renferment aucun sommet actif.

\begin{figure}[!h]
\begin{center}
\includegraphics[scale=0.65]{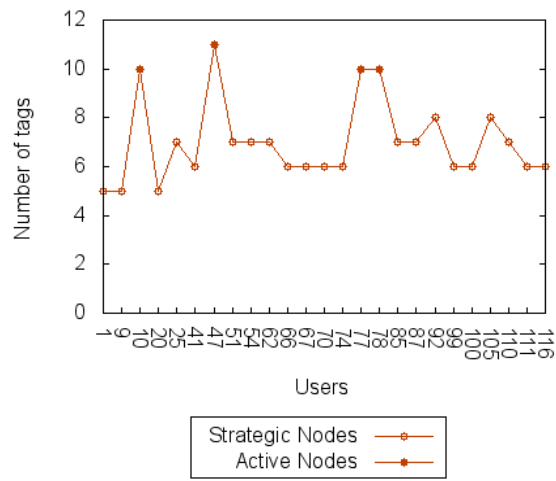}
\caption{Nombres de tags des $25$ utilisateurs les plus actifs}\label{tags}
\end{center}
\end{figure}

\textbf{Consommation mémoire}: les statistiques fournies par le Tableau \ref{Mdel} mettent en évidence la consommation en \textsc{ram} très faible des algorithmes \textsc{O-M2D} et \textsc{ks}. Lorsque \textsc{MtMiner} et \textsc{M2D} doivent générer et sauvegarder en mémoire tous les candidats de tous les niveaux balayés, \textsc{O-M2D} cible directement le niveau adéquat et teste la condition d'essentialité des candidats générés. Un examen attentif du nombre de transversalité des hypergraphes \emph{Del4} et \emph{Mov3}, dans le Tableau \ref{carac}, indique la quantité de candidats que \textsc{MtMiner} et \textsc{M2D} ont à traiter. En effet, ces derniers algorithmes doivent atteindre le $21^{ème}$ niveau pour le jeu de données Del4 et le $20^{ème}$ niveau pour Mov3. Ceci explique pourquoi \textsc{MtMiner} est dans l'incapacité d'atteindre le $20^{ème}$ niveau.

\begin{table}[!h]
\begin{center}
\begin{tabular}{|c|r|r|r|r|}
\hline
              & \textsc{ks} &  \textsc{Mtminer} & \textsc{M2D}& \textsc{O-M2D}\\
\hline
       \textbf{Del1}   & 4.335  &  4.334.102  &    1.807.980 &       \textbf{9.223}\\
\hline
       \textbf{Del2}  & 6.290  & 5.568.931 &    2.606.722  &      \textbf{11.810}\\
\hline
       \textbf{Del3}   & 9.603 &       -  &    3.723.119 &      \textbf{16.369}\\
\hline
       \textbf{Del4}   & 13.844  &       - &    4.458.656  &      \textbf{18.454}\\\hline

\hline \hline \hline

       \textbf{Mov1}  & 3.991  &    3.518.223 &    1.366.841 &       \textbf{8.604} \\
\hline
       \textbf{Mov2}  & 6.387  &    4.976.213 &    2.461.857 &      \textbf{10.611} \\
\hline
       \textbf{Mov3}  & 9.640  &          - &    3.004.886  &      \textbf{13.602} \\
\hline
\end{tabular}\end{center}
\caption{Bases sociales[(\textbf{Haut}) \textsc{del.icio.us} (\textbf{Bas}) \textsc{MovieLens}]: Consommation mémoire (en KO)}\label{Mdel}
\end{table}

\subsection*{Bases "pires des cas"}

Les bases "pire des cas" sont introduites pour étudier plus en profondeur les performances des algorithmes considérés au cours de cette étude expérimentale. Elles correspondent à une matrice d'incidence définie comme suit :
\begin{definition}\label{wc}
Un contexte "pire des cas" $IM_H = (\xi,\mathcal{X},\mathcal{R})$, où $\xi$ et $\mathcal{X}$ sont, respectivement, les ensembles finis d'hyperarêtes et de sommets de l'hypergraphe $H$, est une matrice dans laquelle toutes les hyperarêtes sont formées du même nombre d'éléments, égal à $n$, et où chaque sommet a un support égal à 1 ($Supp(x)$ = 1, $\forall x \in \mathcal{X}$).\\
\end{definition}

Par exemple, un contexte "pire des cas" pour $|\xi|=3$, $\mathcal{X}=9$ et $n=3$, est donné par le Tableau \ref{pire1}.
Les bases "pire des cas" nous permettent d'évaluer le comportement des algorithmes dans des cas extrêmes. Le test consiste à varier les valeurs de $n$ et $|\xi|$, jusqu'à ce que les algorithmes ne puissent plus s'exécuter correctement.
\begin{table}[!h]
 \centering

\begin{tabular}{|r|r|r|r|r|r|r|r|r|r|}
\hline
    {\bf } &   {\bf $x_1$} &   {\bf $x_2$} &   {\bf $x_3$} &   {\bf $x_4$} &   {\bf $x_5$}&   {\bf $x_6$}&   {\bf $x_7$}&   {\bf $x_8$}&   {\bf $x_9$}\\
\hline
  {\bf $e_1$} &     $\times$ &    $\times$        &     $\times$       &        &            &            &         &            &                        \\
\hline
 {\bf $e_2$} &      &            &            &    $\times$   &      $\times$        &    $\times$       &        &             &                        \\
\hline
 {\bf $e_3$} &     &            &             &        &            &            &    $\times$    &      $\times$      &       $\times$                  \\
\hline
\end{tabular}

 \caption{Base pire des cas pour $|\xi|=3$}\label{pire1}
\end{table}



\begin{table}[!h]
\begin{center}
\begin{tabular}{|c|c||r|r|r|r|}

\hline
      $n$ &          $|\xi|$  & \textsc{ks} &  \textsc{Mtminer} &  \textsc{M2D} & \textsc{O-M2D} \\

\hline
         4 &         11  & 174,06  &   180,23 &     257,97 &        87,54 \\
\hline
         4 &         12  & 215,73 & - &     295,43 &        111,21 \\
\hline
         4 &         19  & 498,68 & - &     1457,78 &        380,89 \\
\hline
         4 &         73  & 5527,90 & - &     - &        3234,30 \\
\hline \hline
         5 &         10  &  194,42 & 214,31 &      308,10&      90,12 \\
\hline
         5 &         11   & 229,14 & - &      352,10&      96,12 \\
\hline
         5 &         19  & 650,97 & - &      1866,42&      398,78 \\
\hline
         5 &         60  & 5139,02 & - &      - &      3094,17 \\
\hline \hline
         6 &          9  & 211,30  & 245,12 &      344,64&          97,38 \\
\hline
         6 &          10 & 228,74 &  - &      400,58&          101,09 \\
\hline
         6 &          16 & 616,51  & - &      2119,47&          472,39 \\
\hline
         6 &          53 & 5349,28  & - &      - &          3181,40 \\
\hline \hline
         7 &          9  & 248,59 & 299,46 &      400,49 &         100,72 \\
\hline
         7 &          10 &  262,93 &- &      468,18 &         119,72 \\
\hline
         7 &          16 & 668,31 & - &      2288,46 &         495,08 \\
\hline
         7 &          44 & 5172,55  &- &      - &         2802,89 \\
\hline \hline
         8 &          8  & 257,30 & 326,02 &      466,21 &        103,69 \\
\hline
         8 &          9  & 284,36  & - &      517,36 &        138,28 \\
\hline
         8 &          14 & 700,94 & - &      2557,73 &        531,44 \\
\hline
         8 &          29 & 5311,80 & - &      - &        2949,06 \\
\hline \hline
         9 &          7  & 248,27 & 387,66 &      511,92 &        104,09 \\
\hline
         9 &          8  & 270,686 & - &    534,84   &        168,03 \\
\hline
         9 &          13 & 753,04 & - &      2780,79 &        567,56 \\
\hline
         9 &          22 & 4868,93 & - &      - &        2354,98 \\
\hline \hline
         10 &          7 & 264,91 & 428,88 &      683,36 &         110,34 \\
\hline
         10 &          8 & 289,53 & - &      711,49 &         133,28 \\
\hline
         10 &          11  & 841,71 & - &      3283,36 &         624,66 \\
\hline
         10 &          13  & 4390,22 & - &      - &        1899,29 \\
\hline
\end{tabular}
\end{center}
  \caption{Bases pire des cas: Temps d'exécution (en secondes)}\label{worstdb}
\end{table}
\textbf{Performances}: le Tableau \ref{worstdb} montre pour chaque valeur de $n$, le nombre maximal d'hyperarêtes qui peuvent être traitées par les algorithmes considérés et les temps de traitement associés.
Grâce à sa capacité à déterminer le nombre d'éléments d'une \textsc{Tmm}, l'algorithme \textsc{O-M2D} présente un avantage indéniable par rapport à \textsc{MtMiner} et \textsc{M2D}. Selon les données du Tableau \ref{worstdb}, pour une valeur donnée de $n$, \textsc{O-M2D} présente des temps d'exécution nettement meilleurs que les trois autres algorithmes et est capable de traiter des bases "pire des cas" pour des valeurs élevées de $|\xi|$. A titre d'exemple, pour $n$ = 4, \textsc{MtMiner} s'arrête brusquement pour un nombre d'hyperarêtes égal à 12 alors que l'algorithme \textsc{M2D} résiste mieux et s'arrête à une valeur égale à 20. Dans le même temps, \textsc{O-M2D} et \textsc{ks} s'arrête pour une valeur de $|\xi|$ égal à $74$.

\begin{table}[!h]
\begin{center}
\begin{tabular}{|c|c||r|r|r|r|}
\hline
        $n$ &          $|\xi|$  & \textsc{ks}&    \textsc{MtMiner}  &    \textsc{M2D} & \textsc{O-M2D} \\
\hline
         4 &         11  & 4.285  &2.682.886 &  1.398.441 &      6.108 \\
\hline
         4 &         12  &  4.300 & - &  1.844.364 &      6.221 \\
\hline
         4 &         19  &  4.617 &- &  4.995.732 &      6.908 \\
\hline
         4 &         73  &  20.093 &- &  - &      32.805 \\
\hline \hline
         5 &         10  & 3.866  &2.811.429 &  1.470.266 &      5.294 \\
\hline
         5 &         11  & 4.077  &- &  1.719.498 &      5.565 \\
\hline
         5 &         19  &  4.902 &- &  5.338.173 &      6.403 \\
\hline
         5 &         60  & 18.620  &- &  - &      26.164 \\
\hline \hline
         6 &         9  & 3.996  &3.066.422 &  1.712.089 &      5.482 \\
\hline
         6 &         10 &  4.098 &- &  1.999.695 &      5.607 \\
\hline
         6 &         16 & 4.781  &- &  5.514.683 &      6.962 \\
\hline
         6 &         53 & 15.092  &- &  - &      22.197 \\
\hline \hline
         7 &          9 & 4.094  &3.185.089 &  2.085.366 &      5.165 \\
\hline
         7 &          10  &  4.168 &- &  2.473.281 &      5.537 \\
\hline
         7 &          16  &  4.830 &- &  5.800.962 &      6.308 \\
\hline
         7 &          44  & 10.982  &- &  - &      17.389 \\
\hline \hline
         8 &          8  & 3.841  &3.541.797 &  2.226.625 &      4.821 \\
\hline
         8 &          9  &  4.117 &- &  2.826.625 &      4.996 \\
\hline
         8 &          14 & 4.497  &- &  5.741.793 &      5.570 \\
\hline
         8 &          29 & 8.896  &- &  - &      11.411 \\
\hline \hline
         9 &          7  & 3.602  &3.922.008 &  2.677.026 &      4.757 \\
\hline
         9 &          8  & 3.997  &- &  3.168.442 &      4.886 \\
\hline
         9 &          13 &  4.406 &- &  5.694.223 &      5.101 \\
\hline
         9 &          22 & 7.911  &- &  - &      9.734 \\
\hline \hline
        10 &          7  &  3.802 &4.433.787 &  3.019.860 &      4.955 \\
\hline
        10 &          8  &  4.106 &- &  3.840.117 &      5.093 \\
\hline
        10 &          11 &  4.212 &- &  5.899.004 &      5.202 \\
\hline
        10 &          13 &  4.537 &- &  - &      5.817 \\
\hline
\end{tabular}
\end{center}
\caption{Bases pire des cas: Consommation mémoire (en KO)}\label{Mwc}
\end{table}

\textbf{Consommation mémoire}: pour ces bases "pire des cas", nous avons étudié aussi la consommation mémoire. Ainsi, le Tableau \ref{Mwc} montre, respectivement, la mémoire consommée par les cinq algorithmes. \textsc{O-M2D} et \textsc{ks} se montrent alors très performant par rapport aux algorithmes \textsc{MtMiner} et \textsc{M2D}. En effet, par rapport à la consommation mémoire des trois autres algorithmes, celle de \textsc{O-M2D}, tout comme celle de \textsc{ks}, est négligeable. \textsc{O-M2D} ne stocke en mémoire que l'hypergraphe d'entrée et les candidats, de taille égale au nombre de transversalité, générés sont traités sans être sauvegardés.

\section{Conclusion}
\label{concl}
Au cours de ce chapitre, nous avons introduit une nouvelle approche pour la détection d'une classe particulière des traverses minimales, que nous avons appelées traverses minimales multi-membres, à partir d'un système communautaire représenté par un hypergraphe. L'une de nos contributions se trouve dans la définition des \textsc{Tmm}s en se basant sur la notion d'ensemble de sommets essentiels. Ceci nous a permis de mettre en place un algorithme optimisé, qui cible directement le niveau qui renferme les \textsc{Tmm}s. Des expérimentations effectuées sur différents jeux de données ont montré que l'algorithme \textsc{O-M2D} présente des performances très intéressantes par rapport à celles obtenues avec des algorithmes classiques. Cette contribution a été publiée dans une revue internationale \cite{NidhalJSS} et deux conférences avec comité de lecture \cite{NidhalEGC2012, NidhalCAP2012}. Dans le chapitre suivant, nous allons étendre notre approche pour extraire l'ensemble de toutes les traverses minimales, en proposant une représentation concise et exacte de cet ensemble grâce à la notion d'irrédondance. Ceci a été motivé par le nombre de traverses minimales, qui peut être exponentiel même pour des hypergraphes simples.

\chapter{"Diviser pour régner" pour l'extraction des traverses minimales d'un hypergraphe}\label{chap5}
\section{Introduction}
Optimiser le calcul des traverses minimales en optant pour la décomposition de l'hypergraphe d'entrée peut présenter une solution intéressante à condition de bien choisir le nombre optimal d'hypergraphes partiels de manière à éliminer des tests de minimalité des traverses calculées. En se basant sur le nombre de transversalité, nous proposons une approche basée sur la stratégie "diviser pour régner" pour l'extraction des traverses minimales. Un hypergraphe peut, en effet, être décomposé en un certain nombre d'hypergraphes partiels, égal à la taille de la plus petite traverse minimale que renferme l'hypergraphe d'entrée. Les traverses minimales extraites à partir des hypergraphes partiels, que nous appellerons "locales", permettent de retrouver l'ensemble de toutes les traverses minimales de l'hypergraphe initial. Les traverses obtenues et dont la cardinalité est égale au nombre de transversalité seront considérées directement comme des traverses minimales et ce, sans vérifier par des tests, leurs minimalités. C'est ce que nous détaillons à travers ce chapitre via l'algorithme \textsc{local-generator} et à travers l'étude expérimentale dont il a fait l'objet.

\section{Objectifs de la décomposition}\label{intro}
La principale difficulté que pose l'extraction des traverses minimales réside dans le nombre exponentiel de ces dernières, même quand l'hypergraphe d'entrée est simple, comme le montre l'exemple \ref{expnb} du chapitre \ref{chap4} (page \pageref{expnb}).

\subsection{Diviser pour régner}

Les algorithmes d'extraction des traverses minimales les plus performants \cite{BMA03,KavvadiasS05,uno1} sont des améliorations de l'algorithme de Berge \cite{berge1989}. Ce dernier traite les hyperarêtes une à une en calculant à chaque itération \emph{i} les traverses minimales de l'hypergraphe constitué par les \emph{i-èmes } hyperarêtes considérées. Avec pour objectif d'optimiser le calcul des traverses minimales, notre approche repose sur cette idée en usant du paradigme "\emph{diviser pour régner}", présenté dans la Définition \ref{divpourreg}.

\begin{definition}\textsc{Diviser pour régner}\label{divpourreg}
Le paradigme \emph{diviser pour régner} se compose de trois étapes. La première est de \emph{diviser} le problème en un certain nombre de sous-problèmes. La deuxième est de \emph{régner} sur ces sous-problèmes en les résolvant de manière récursive ou directement. Enfin, \emph{combiner} les solutions des sous-problèmes en une solution finale du problème initial.
\end{definition}

Le principe consiste à réduire ce nombre d'itérations en décomposant l'hypergraphe en un nombre précis d'hypergraphes partiels, équivalent au nombre de transversalité de l'hypergraphe d'entrée $H$. A partir de chaque hypergraphe partiel $H_i$, nous calculons alors ce que nous appelons \emph{les traverses minimales locales} à $H_i$. Le produit cartésien de ces traverses minimales locales correspondra alors à un ensemble de traverses de $H$ qui seront soumises à une vérification de la minimalité pour être considérées comme des traverses minimales. En outre, pour un hypergraphe $H$ avec un nombre de transversalité égal à $k$, le fait de décomposer $H$ en $k$ hypergraphes $H_i$ permet d'éliminer le test de la minimalité pour les ensembles de sommets de taille $k$ qui seront considérés comme traverses minimales de $H$, sans aucun autre calcul supplémentaire.

Il est clair que cette approche ne saurait être efficace sur des hypergraphes, dont le nombre de tranversalité est très bas dans la mesure où le nombre d'hypergraphes partiels n'est pas conséquent et ne permet pas une optimisation intéressante du calcul des traverses minimales. Un profil du type d'hypergraphes sur lequel notre approche peut se montrer efficace est dressé au terme de l'étude expérimentale que nous détaillons à la fin de ce chapitre.


\subsection{Originalité de l'approche}\label{art}
Nous avons détaillé dans le chapitre $2$ les différents algorithmes dédiés à l'extraction des traverses minimales. Nous avons souligné que les algorithmes les plus performants étaient des améliorations de l'algorithme de Berge mais ils ne reposent pas sur le paradigme "diviser pour régner". En ce sens, notre approche est une extension complètement différente de l'algorithme de Berge. Alors que dans ce dernier, ainsi que dans les améliorations qui en ont été proposées, l'idée est de traiter les hyperarêtes une à une, nous nous proposons de traiter les hyperarêtes ensemble par ensemble. L'hypergraphe d'entrée $H$ se trouve alors décomposé en un nombre d'hypergraphes partiels égal au nombre de transversalité $k$ de $H$.

Chaque hypergraphe partiel renferme des traverses minimales locales et le produit cartésien, combiné à un test de la minimalité, permet de retrouver l'ensemble des traverses minimales de $H$. Le test de la minimalité est nécessaire dans la mesure où les traverses minimales locales sont effectivement minimales mais au sein de l'hypergraphe partiel à partir duquel elles sont extraites mais rien n'assure qu'elles seront minimales dans l'hypergraphe initial. Comme nous l'avons déjà souligné, seules les traverses de taille égale au nombre de transversalité sont effectivement minimales puisqu'il ne peut exister une traverse incluse dedans.

Un seul algorithme, parmi tout ceux présentés dans le chapitre $2$ met à profit cette notion de décomposition mais d'une façon différente. Il s'agit de celui de Bailey \textit{et al.}, qui consiste à décomposer les hyperarêtes formées par un nombre important de sommets de manière à n'avoir que des hyperarêtes de relativement petite taille. Dans notre algorithme, l'hypergraphe est décomposé indépendamment de la taille de ses hyperarêtes.

Le nombre de transversalité d'un hypergraphe est la notion-clé, autour de laquelle est bâtie notre approche. Le choix du nombre d'hypergraphes partiels n'est pas arbitraire puisqu'il garantit que les traverses, dont la taille est égale à $k$, peuvent être directement considérées comme des traverses minimales de $H$. C'est sur cette élimination de certains de ces tests inutiles de la minimalité que nous comptons pour optimiser les temps d'extraction des traverses minimales.

\section{Définitions et notations}
Dans cette section, nous proposons de présenter des définitions clés et notations que nous utiliserons tout au long des sections suivantes. Pour aboutir à notre approche d'extraction des traverses minimales, basée sur la notion de traverse minimale locale, nous avons encore combiné des concepts de la théorie des hypergraphes (union et produit cartésien d'hypergraphes) avec d'autres de la fouille de données (ensemble essentiel, support), présentés dans les chapitres précédents.

\begin{definition} \textsc{Union et produit cartésien} \cite{berge1989} \label{def1}\\
Soit $H$ = ($\mathcal{X}$, $\xi$) et $G$ = ($\mathcal{X'}$, $\xi'$) deux hypergraphes tels que $\xi$ = $\{\xi_1, \xi_2, \ldots, \xi_m\}$ et $\xi'$ = $\{ \xi'_1, \xi'_2, \ldots, \xi'_{m'}\}$.

$H$ $\cup$ $G$ représente l'union de $H$ et $G$. Le résultat de cette union est un hypergraphe dont l'ensemble des sommets est constitué de ceux de $H$ et de $G$, et l'ensemble des hyperarêtes contient celles de $H$ et $G$, qui par souci de simplification sera aussi noté $H$ $\cup$ $G$:

$H$ $\cup$ $G$   = $(\mathcal{X} \cup \mathcal{X'}, \xi \cup \xi')$

$H \times G$ représente le produit cartésien des deux hypergraphes dont le résultat est un hypergraphe dont l'ensemble des sommets contient ceux des deux hypergraphes. Quant à l'ensemble des hyperarêtes, il est aussi noté $H \times G$ et est égal au produit cartésien de $\xi$ et de $\xi'$ autrement dit à l'union de tous les couples possibles d'hyperarêtes tels que le premier élément appartient à $\xi$ et le deuxième à $\xi'$ :

$H \times G$ =  $(\mathcal{X} \cup \mathcal{X'}, \{(\xi_i \cup \xi'_j), i = 1, \ldots, m, j= 1, \ldots, m')$

\end{definition}


\begin{proposition} \cite{berge1989}\label{KSprop}\\
Soient $H$ et $G$, deux hypergraphes simples. Les traverses minimales de l'hypergraphe $H$ $\cup$ $G$ sont des couples, minimaux au sens de l'inclusion, générés par le produit cartésien des ensembles de traverses minimales de $H$ et de $G$ :

  $\mathcal{M}_{H \cup G}$ = Min\{ $\mathcal{M}_H$ $\times$ $\mathcal{M}_G$\}.
\end{proposition}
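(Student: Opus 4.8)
The plan is to prove the two inclusions $\mathcal{M}_{H \cup G} \subseteq \mathrm{Min}\{\mathcal{M}_H \times \mathcal{M}_G\}$ and $\mathrm{Min}\{\mathcal{M}_H \times \mathcal{M}_G\} \subseteq \mathcal{M}_{H \cup G}$ separately, working directly from Definition \ref{TMberge} (a transversal is a vertex set meeting every hyperedge, minimal in the inclusion sense). The key elementary observation, which I would establish first, is that $T$ is a transversal of $H \cup G$ if and only if $T$ meets every hyperedge of $H$ and every hyperedge of $G$, i.e. if and only if $T$ contains a transversal of $H$ and a transversal of $G$. This is immediate because $\xi(H \cup G) = \xi \cup \xi'$, so intersecting all hyperedges of the union is the same as intersecting all hyperedges of each component.

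For the inclusion $\mathcal{M}_{H \cup G} \subseteq \mathrm{Min}\{\mathcal{M}_H \times \mathcal{M}_G\}$: let $T \in \mathcal{M}_{H \cup G}$. By the observation above, $T$ contains some $T_H \in \gamma_H$; shrinking $T_H$ we may assume $T_H \in \mathcal{M}_H$, and likewise $T$ contains some $T_G \in \mathcal{M}_G$. Then $T_H \cup T_G \subseteq T$ and $T_H \cup T_G$ is itself a transversal of $H \cup G$ (it meets all hyperedges of both $H$ and $G$). By minimality of $T$ we get $T = T_H \cup T_G$, so $T$ is an element of the set $\mathcal{M}_H \times \mathcal{M}_G$ (using the hyperedge-union convention of Definition \ref{def1}, a ``pair'' $(T_H, T_G)$ is literally the set $T_H \cup T_G$). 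It remains to check $T$ is minimal in that family: if some $T'_H \cup T'_G \subsetneq T$ with $T'_H \in \mathcal{M}_H$, $T'_G \in \mathcal{M}_G$, then $T'_H \cup T'_G$ would be a transversal of $H \cup G$ strictly inside $T$, contradicting $T \in \mathcal{M}_{H \cup G}$. Hence $T \in \mathrm{Min}\{\mathcal{M}_H \times \mathcal{M}_G\}$.

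For the reverse inclusion: take $S = T_H \cup T_G$ with $T_H \in \mathcal{M}_H$, $T_G \in \mathcal{M}_G$, and suppose $S$ is minimal among all such unions. First, $S$ is a transversal of $H \cup G$ by the observation. To see it is a \emph{minimal} transversal, suppose $S' \subsetneq S$ is a transversal of $H \cup G$. Then $S'$ contains some $S'_H \in \mathcal{M}_H$ and some $S'_G \in \mathcal{M}_G$, so $S'_H \cup S'_G \subseteq S' \subsetneq S = T_H \cup T_G$, contradicting the minimality of $S$ in the family $\mathcal{M}_H \times \mathcal{M}_G$. Therefore $S \in \mathcal{M}_{H \cup G}$, which completes the proof.

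The main subtlety — the step I expect to require the most care in the write-up rather than genuine difficulty — is keeping the bookkeeping straight between the notion of ``minimal element of $\mathcal{M}_H \times \mathcal{M}_G$'' (minimality of the \emph{union} $T_H \cup T_G$ as a set, ranging over all choices of a minimal transversal on each side) and the notion of minimal transversal of $H \cup G$; the two minimality conditions are taken over a priori different families, and the argument hinges on showing each is cofinal/coinitial in the other. The hypothesis that $H$ and $G$ are \emph{simple} is not actually needed for this combinatorial identity (it is inherited from the ambient setting and used elsewhere), so I would not invoke Property \ref{sperner} or the duality lemma here.
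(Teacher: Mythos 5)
Your double-inclusion argument is correct. Note, however, that the paper never proves Proposition \ref{KSprop} itself: it is quoted from Berge with a citation, and the only place the manuscript comes close to an argument is in the section \emph{Etude de la complétude}, where this two-hypergraph identity is taken as the base case \textbf{(P1)} of an induction extending it to $k$ partial hypergraphs, and where a separate direct proof (relying on the pairwise disjointness of the $\xi_i$) shows that minimal elements of the product are minimal transversals of $H$. Your write-up therefore supplies exactly the missing base case: the key observation that a set is a transversal of $H \cup G$ iff it contains a (minimal) transversal of each component, the use of minimality of $T$ to force $T = T_H \cup T_G$ in the forward inclusion, and the cofinality argument in the reverse inclusion are all sound, and together they subsume the paper's later one-directional argument. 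One small caution on bookkeeping: a transversal $T$ of $H\cup G$ meeting every hyperedge of $H$ contains a transversal of $H$ only after intersecting with $\mathcal{X}(H)$ (since transversals of $H$ must be subsets of $\mathcal{X}(H)$); this is harmless because $T\cap e = (T\cap\mathcal{X}(H))\cap e$ for every $e\in\xi(H)$, but it is worth one line in the final write-up. Your remark that simplicity of $H$ and $G$ is not needed is also correct, since $\mathcal{M}_H$ depends only on the inclusion-minimal hyperedges.
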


\begin{definition} \textsc{Hypergraphe partiel} \cite{berge1989}\\
Un hypergraphe partiel $H'$ est la restriction d'un hypergraphe $H$ à un sous-ensemble d'hyperarêtes $\xi'$ incluses dans $\xi$ et aux sommets contenus dans ces hyperarêtes.
\end{definition}

Dans le cadre de notre approche, nous proposons d'étendre la proposition \ref{KSprop} en considérant plus de deux hypergraphes. Plus précisément, à partir d'un hypergraphe $H$=($\mathcal{X}$, $\xi$), dont le nombre de transversalité $\tau$($H$) est égal à $k$, et d'une traverse minimale $T$ = $\{x_1, x_2, \ldots, x_k\}$  de  $\mathcal{M}_H$ de taille $k$ dont les sommets sont triés par ordre de support décroissant de sorte que $x_1$ est le sommet qui appartient au plus grand nombre d'hyperarêtes, nous proposons de construire $k$ hypergraphes partiels $H_i$= ($\mathcal{X}_i$, $\xi_i$), $i =1$, \ldots, $k$ tels que :\\


\begin{itemize}
 \item $\xi_1$ = $\{e \in \xi \mid x_1 \in  e\}$
\item $\mathcal{X}_1$ = \{$x$ $\in$ $\mathcal{X} \mid x \in e, \forall e \in \xi_1$ $\}$
\item ..
\item ..
\item $\xi_i$ = $\{e \in \xi - \bigcup\limits_{j=1}^{i-1} \xi_{j} \mid x_i \in  e\}$
\item $\mathcal{X}_i$ = \{$x$ $\in$ $\mathcal{X} \mid x \in e, \forall e \in \xi_i$ $\}$
\end{itemize}
On peut remarquer que les hypergraphes partiels $H_i$ vérifient de façon évidente les propriétés suivantes :
\begin{itemize}
 \item $\xi_i \subseteq \xi$
  \item $\bigcup\limits_{i=1}^k \xi_{i}$ = $\xi$.
  \item $\nexists e \in \xi$ tel que $e \in \xi_i \cap \xi_j$, $i \neq j$.
\end{itemize}

Les traverses minimales de l'hypergraphe partiel $H_i$ sont appelées \emph{traverses minimales locales} à $H_i$ et leur ensemble est noté par $\mathcal{M}_{H_i}$.

\begin{figure}[htbp]
\begin{minipage}{.4\textwidth}\centering
\includegraphics[scale=0.7]{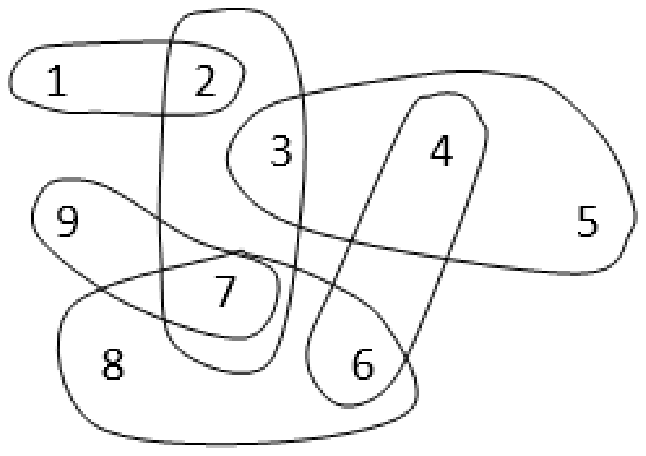}
\end{minipage}\hfill
\begin{minipage}{.6\textwidth}\centering
\small{
\begin{tabular}{|r|r|r|r|r|r|r|r|r|r|}
\hline
    {\bf } &   {\bf $\textbf{1}$} &   {\bf $\textbf{2}$} &   {\bf $\textbf{3}$} &   {\bf $\textbf{4}$} &   {\bf $\textbf{5}$} &   {\bf $\textbf{6}$} &   {\bf $\textbf{7}$} &   {\bf $\textbf{8}$}&   {\bf $\textbf{9}$}\\
\hline
  {\bf $e_1=\textbf{\{1, 2\}}$} &     $1$ &       $1$     &       0     &  0  &0 &0  & 0 &  0   & 0     \\
\hline
  {\bf $e_2=\textbf{\{2, 3, 7\}}$} &     0 &       $1$     &       $1$     &  0 &0 &0  & $1$ &  0   & 0 \\
\hline
  {\bf $e_3=\textbf{\{3, 4, 5\}}$} &     0 &       0     &       $1$     &  $1$  &$1$ & 0  & 0 &  0   & 0  \\
\hline
  {\bf $e_4=\textbf{\{4, 6\}}$} &     0 &       0     &       0     &  $1$  & 0 & $1$  & 0 &  0   & 0    \\
  \hline
    {\bf $e_5=\textbf{\{6, 7, 8\}}$} &     0 &       0     &       0     &  0  &0 & $1$  & $1$ &  $1$   & 0    \\
  \hline
    {\bf $e_6=\textbf{\{7, 9\}}$} &     0 &       0     &       0     &  0  &0 &0  & $1$ &  0   & $1$    \\
  \hline
\end{tabular}}
\end{minipage}
\caption{Un exemple d'hypergraphe $H = (\mathcal{X}, \xi)$ et la matrice d'incidence $IM_H$ correspondante}\label{hyp1}
\end{figure}

\begin{example}
La figure \ref{hyp1} illustre un hypergraphe simple $H = (\mathcal{X}, \xi)$ tel que $\mathcal{X} = \{1, 2, 3, 4, 5, 6, 7, 8, 9\}$ et $\xi$ = $\{$ $e_1$, $e_2$, $e_3$, $e_4$, $e_5$, $e_6$ $\}$ avec $e_1$ = $\{$1, 2$\}$, $e_2$ = $\{$2, 3, 7$\}$, $e_3$ = $\{$3, 4, 5$\}$, $e_4$ $\{$4, 6$\}$, $e_5$ = $\{$6, 7, 8$\}$ et $e_6$ = $\{$7, 9$\}$. $H$ a un nombre de transversalité égal à $3$. $H$ possède $2$ traverses minimales de cardinalité minimale égale à $3$ : $\{1, 4, 7\}$ et $\{2, 4, 7\}$. Prenons, par exemple, la traverse minimale $\{1, 4, 7\}$. Après avoir ordonné les trois sommets le composant, selon un ordre décroissant de support, nous obtenons les trois hypergraphes partiels, présentés par la Figure \ref{hyp_part}, tel que $H_1$ ne contient que les hyperarêtes auxquelles appartient le sommet $7$ (dont le support est égal à $3$), $H_2$ ne contient que celles auxquelles appartient $4$ (dont le support est égal à $2$) et $H3$ contient les hyperarêtes restantes, i.e., celles qui renferment le sommet $1$. Il importe de noter qu'en choisissant $\{2, 4, 7\}$, au lieu de $\{1, 4, 7\}$, le résultat reste le même.

\begin{figure}
\begin{minipage}{.33\textwidth}\centering
\centering
\includegraphics[scale=0.7]{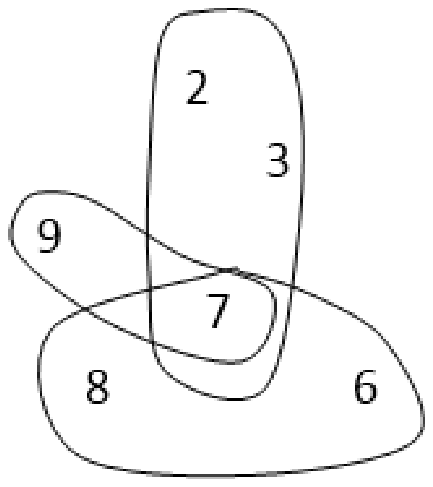}
\end{minipage}\hfill
\begin{minipage}{.3\textwidth}\centering
\centering
\includegraphics[scale=0.7]{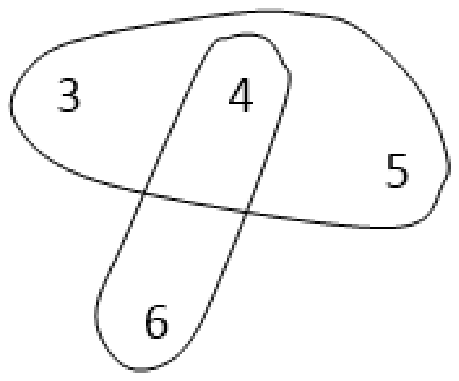}
\end{minipage}\hfill
\begin{minipage}{.36\textwidth}\centering
\centering
\includegraphics[scale=0.7]{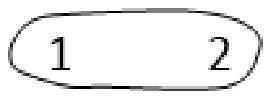}
\end{minipage}
\caption{Les 3 hypergraphes partiels dérivés de $H$ : $H_1$, $H_2$ et $H3$}\label{hyp_part}
 \end{figure}

\end{example}

\section{Traverses minimales locales : approche et algorithme}
Optimiser le calcul de ces traverses minimales revient donc principalement à réduire le nombre de candidats traités. Ceci passe par la réduction de la taille de l'hypergraphe d'entrée.
L'approche que nous proposons consiste à construire, à partir de l'hypergraphe d'entrée $H$, $k$ hypergraphes partiels ($H_1$, $H_2$, \ldots, $H_k$). Le calcul de l'ensemble des traverses minimales locales, $\mathcal{M}_{H_i}$ de chaque hypergraphe partiel $H_i$ s'en trouve amélioré puisque la taille de $H_i$ est relativement petite par rapport à celle de $H$. Ainsi, nous proposons d'effectuer l'union des hypergraphes partiels de façon à déterminer l'ensemble des traverses minimales $\mathcal{M}_{H}$ de $H$ à partir des k-uplets, minimaux au sens de l'inclusion, issus du produits cartésien des ensembles de traverses minimales locales déterminées pour les hypergraphes partiels $\mathcal{M}_{H_i}$. Dans ce qui suit, nous présentons l'algorithme \textsc{local-generator} dédié au calcul des traverses minimales et basé essentiellement sur les notions de nombre de transversalité et d'hypergraphe partiel.

\begin{algorithm}[!h]
\LinesNumbered
\Entree{Une matrice d'incidence $IM_H$ associée à $H$ = ($\mathcal{X}$, $\xi$)}
 \Sortie{$\mathcal{M}_{H}$, ensemble des traverses minimales de $H$}
 \Deb{
       $T$ = \textsc{GetMinTransversality2}($IM_H$);\\
       Ordonner les éléments de $T$ par ordre décroissant du support;\\
       $k$ = $\mid T \mid$;\\
       $i$ = $1$;\\
       \Tq{$i$ $\leq$ $k$}
       {
       $\xi_i$ = $\{e \in \xi \mid T[i] \in e\}$;\\
       $\mathcal{X}_i$ = $\mathcal{X} \cap \xi_i$;\\
       $\mathcal{M}_{H_i}$ = \textsc{mtminer}($H_i$);\\
       $i$ = $i$ + 1;\\
       }

       $\gamma_H$ = $\mathcal{M}_{H_1}$ $\times$ $\mathcal{M}_{H_2}$ $\times$ $\ldots$ $\times$ $\mathcal{M}_{H_k}$;\\
       \PourCh{$X$ $\in \gamma_H$}{
       \eSi{$\mid X \mid$ = $k$}{$\mathcal{M}_{H}$ = $\mathcal{M}_{H}$ $\cup$ \{$X$\};}{
       \Si{$\not\exists$ $x\in X:Supp(X)=Supp(X\backslash x)$}{$\mathcal{M}_{H}$ = $\mathcal{M}_{H}$ $\cup$ \{$X$\};\\}
       }
       }
    \Retour{($\mathcal{M}_{H}$)}
    }
     \caption{\textsc{local-generator}}\label{algo1}
\end{algorithm}
L'algorithme \textsc{local-generator}, dont le pseudo-code, est décrit par l'Algorithme \ref{algo1} prend en entrée une matrice d'incidence (correspondant à l'hypergraphe d'entrée) et fournit en sortie l'ensemble des traverses minimales. On suppose que les sommets de l'hypergraphe sont triés par ordre lexicographique. \textsc{local-generator} démarre par un appel à la fonction \textsc{GetMinTransversality}, dont le pseudo-code est décrit par l'Algorithme \ref{proc} du chapitre \ref{chap3} (page \pageref{proc}). Comme déjà mentionné, cette fonction calcule et retourne une traverse minimale dont la taille est minimale et le nombre de transversalité $k$ de l'hypergraphe. Ce dernier correspond désormais à la cardinalité de la traverse minimale retournée par la fonction. C'est à partir de cette traverse minimale retournée par \textsc{GetMinTransversality} que notre algorithme décompose l'hypergraphe d'entrée en des hypergraphes partiels.

Une fois la construction des $k$ hypergraphes partiels (lignes 7-8) effectuée, l'algorithme \textsc{local-generator} fait appel à un algorithme d'extraction des traverses minimales pour calculer leurs traverses minimales locales \footnote[7]{Dans les expérimentations, nous avons utilisé l'algorithme \textsc{MtMiner} pour accomplir cette tâche et nous remercions les auteurs de nous en avoir fourni une version.}, stockées dans $\mathcal{M}_{H_i}$ (ligne $9$). Etant donné que les hypergraphes $H_i$ sont relativement de petite taille, n'importe quel algorithme existant peut calculer les traverses minimales $\mathcal{M}_{H_i}$ en des temps très courts. Cet algorithme prend donc en entrée un hypergraphe partiel $H_i$ de $H$, dont l'ensemble des sommets $\mathcal{X}_i$ et l'ensemble des hyperarêtes $\xi_i$ ont été déjà calculés (lignes $7$ - $8$) et extrait, par niveaux, l'ensemble des traverses minimales locales à $H_i$ selon la définition \ref{Traverse2}.
A la fin de la boucle de la ligne $6$, \textsc{local-generator} a déjà préparé les ensembles des traverses minimales locales. Le produit cartésien (ligne $11$) de ces ensembles $\mathcal{M}_{H_i}$, permet de construire l'ensemble $\gamma_H$. Chaque élément de $\gamma_H$ issu de ce produit cartésien représente une traverse. Il reste à vérifier sa minimalité. Un des intérêts de notre décomposition de l'hypergraphe initial est d'éviter de tester la minimalité des éléments de $\gamma_H$ dont la cardinalité est égale à $k$. En effet, ces derniers représentent des traverses minimales de $H$ puisqu'il ne peut pas exister une traverse minimale de taille inférieure au nombre de transversalité de $H$. Pour les traverses de taille supérieure à $k$, \textsc{local-generator} teste la minimalité (lignes $15-16$) suivant la Proposition \ref{Traverse2}. Si le support d'un candidat $X$ est strictement supérieur au maximum des supports de ses sous-ensembles directs alors $X$ est une traverse minimale et est ajouté à $\mathcal{M}_{H}$.

\section{Etude de la complétude}
Le résultat du produit cartésien de deux ensembles de traverses minimales, $\mathcal{M}_H$ et $\mathcal{M}_G$, représente un ensemble de traverses. La minimalité est ensuite vérifiée par le biais de la condition d'essentialité de la Définition \ref{defessentiel}.
Dans le but de vérifier la complétude de notre approche, nous nous proposons de prouver par récurrence que $\mathcal{M}_{H}$ = Min \{$\mathcal{M}_{H_1}$ $\times$ $\mathcal{M}_{H_2}$ $\times$ $\ldots$ $\times$ $\mathcal{M}_{H_k}$\}, en s'inspirant de la Proposition \ref{KSprop}.
\begin{proposition}
Toute traverse minimale de l'hypergraphe $H$ peut être déduite à partir des ensembles de traverses minimaux des $H_i$, $i$ = $1 \ldots k$.
\end{proposition}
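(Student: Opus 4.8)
The plan is to establish the stronger equality $\mathcal{M}_H = \mathrm{Min}\{\mathcal{M}_{H_1} \times \mathcal{M}_{H_2} \times \cdots \times \mathcal{M}_{H_k}\}$, from which the proposition (the inclusion $\mathcal{M}_H \subseteq \mathcal{M}_{H_1}\times\cdots\times\mathcal{M}_{H_k}$) is immediate. The argument is by induction on $k$, with Proposition \ref{KSprop} as the two-hypergraph base of the recursion. For $k=1$ the transversal $\{x_1\}$ forces $x_1\in e$ for every $e\in\xi$, hence $\xi_1=\xi$, $H_1=H$, and there is nothing to prove. For the inductive step I set $G = H_2 \cup \cdots \cup H_k$; since the families $\xi_i$ are pairwise disjoint with $\bigcup_{i=1}^k \xi_i = \xi$, one has $\xi_G = \xi \setminus \xi_1$, so that $H = H_1 \cup G$ in the sense of Definition \ref{def1}.

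First I would check that Proposition \ref{KSprop} is applicable at this step, i.e. that $H_1$ and $G$ are both \emph{simple}: this holds because their hyperedge families are subfamilies of the simple family $\xi$, and no member of a simple family is contained in another. This gives $\mathcal{M}_H = \mathrm{Min}\{\mathcal{M}_{H_1} \times \mathcal{M}_G\}$. Next I would verify that the induction hypothesis applies to $G$: the set $\{x_2, \ldots, x_k\}$, kept in the same relative order, is a transversal of $G$, and re-running the partial-hypergraph construction on $G$ along this transversal reproduces exactly $H_2, \ldots, H_k$, since $\{e \in \xi_G \setminus \bigcup_{j=2}^{i-1}\xi_j : x_i \in e\} = \{e \in \xi \setminus \bigcup_{j=1}^{i-1}\xi_j : x_i \in e\} = \xi_i$ because $\xi_G = \xi\setminus\xi_1$. (When $T$ is a minimal transversal of $H$ one checks in addition that none of the $\xi_i$ is empty and that $\{x_2,\dots,x_k\}$ is a minimal transversal of $G$, so the recursive object is well-formed.) Hence, by the induction hypothesis, $\mathcal{M}_G = \mathrm{Min}\{\mathcal{M}_{H_2} \times \cdots \times \mathcal{M}_{H_k}\}$.

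It then remains to combine the two displays, i.e. to show $\mathrm{Min}\{\mathcal{M}_{H_1} \times \mathrm{Min}\{\mathcal{M}_{H_2}\times\cdots\times\mathcal{M}_{H_k}\}\} = \mathrm{Min}\{\mathcal{M}_{H_1}\times\cdots\times\mathcal{M}_{H_k}\}$. This is an instance of the absorption identity $\mathrm{Min}\{A \times \mathrm{Min}\{B\}\} = \mathrm{Min}\{A \times B\}$ for the union-product of set families, whose short proof uses only that every member of $B$ contains a member of $\mathrm{Min}\{B\}$ and that $A \times \mathrm{Min}\{B\} \subseteq A \times B$; I would prove it as a standalone lemma. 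Chaining the base case, this absorption lemma and the induction hypothesis yields the equality. Finally, since $\tau(H)=k$, no transversal of $H$ has fewer than $k$ vertices, so every element of the cartesian product of size exactly $k$ is automatically a minimal transversal — which is precisely what justifies the corresponding shortcut at lines 13--14 of Algorithm \ref{algo1}.

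I expect the main obstacle to be organisational rather than deep: making sure the recursive call is applied to an object of the same form — verifying that $G$'s partial hypergraphs really are $H_2,\dots,H_k$, that the truncated vertex set is still a transversal, and dispatching the degenerate empty pieces — together with a clean statement and proof of the Min-absorption lemma. A fully self-contained alternative, which I would fall back on if the recursion becomes awkward to bookkeep, is to prove the equality directly: given $T \in \mathcal{M}_H$, extract from $T$ a minimal transversal $T_i \in \mathcal{M}_{H_i}$ of each $H_i$ (possible since $\xi_i\subseteq\xi$ makes $T$ a transversal of $H_i$), observe that $T_1\cup\cdots\cup T_k \subseteq T$ is again a transversal of $H$ because $\bigcup\xi_i = \xi$, and invoke minimality of $T$ to conclude $T = T_1\cup\cdots\cup T_k$; the reverse inclusion and the minimality claim follow by the symmetric argument. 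This bypasses the induction altogether, at the cost of essentially reproving Proposition \ref{KSprop} in its $k$-ary form.
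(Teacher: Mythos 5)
Your proof follows essentially the same route as the paper's: induction on the number of partial hypergraphs, splitting $H$ as $H_1 \cup (H_2 \cup \cdots \cup H_k)$ and reducing to Berge's binary result (Proposition \ref{KSprop}). You are in fact somewhat more careful than the paper, which silently chains (P1) and (P2) without stating the $\mathrm{Min}$-absorption identity or checking that the recursive call is applied to a well-formed object of the same shape; those additions are welcome but do not change the argument.
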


\begin{proof}
Soient $H_1 = (\mathcal{X}_1, \xi_1)$ et $H_2 = (\mathcal{X}_2, \xi_2)$ deux hypergraphes et $H$  = $H_1 \cup H_2$, avec $H = (\mathcal{X}, \xi)$ tel que $\mathcal{X} = \mathcal{X}_1 \cup \mathcal{X}_2$ et $\xi = \xi_1 \cup \xi_2$. D'après Berge \cite{berge1989}, l'ensemble des traverses minimales de $H$ est égal aux éléments, minimaux au sens de l'inclusion, appartenant au produit cartésien des ensembles minimaux de $H_1$ et $H_2$ :

$\mathcal{M}_{H}$ = $\mathcal{M}_{H_1 \cup H_2}$ = Min \{ $T$ $\mid$ $T \in$ $\mathcal{M}_{H_1}$ $\times$ $\mathcal{M}_{H_2}$ \} \textbf{(P1)}\\

Supposons que cette propriété est vraie pour l'union de $k$ hypergraphes :
$H'$ = $H_1 \cup H_2 \cup \ldots \cup H_k$ avec $H' = (\mathcal{X'}, \xi')$ et $H_i$ = ($\mathcal{X}_i$, $\xi_i$), $i$ = $1 \ldots k$.\\

Par hypothèse, nous avons alors :

$\mathcal{M}_{H'}$ = Min \{ $T$ $\mid$ $T \in$ $\mathcal{M}_{H_1}$ $\times$ $\mathcal{M}_{H_2} \times \ldots \times \mathcal{M}_{H_k}$ \}. \textbf{(P2)}\\
Montrons, à présent, que la propriété est vraie pour l'union de $k+1$ hypergraphes tel que $H$ = $H_1 \cup H_2 \cup \ldots \cup H_{k+1}$ où $\mathcal{X} = \mathcal{X}_1 \cup \mathcal{X}_2 \cup \ldots \cup \mathcal{X}_{k+1}$ et $\xi = \xi_1 \cup \xi_2 \cup \ldots \cup \xi_{k+1}$.\\
On a :

$H$ = $H_1 \cup H_2 \cup \ldots \cup H_{k+1}$ \\
$\Leftrightarrow$ $H$ = $H' \cup H_{k+1}$ où $H'$ = $\cup_{i=1}^k H_i$ et $H'$ = ($\mathcal{X'}$, $\xi'$) avec $\mathcal{X'} = \cup_{i=1}^k \mathcal{X}_i$ et $\xi' = \cup_{i=1}^k \xi_i$.\\

D'après \textbf{(P1)}, on a $\mathcal{M}_{H}$ = Min \{ $T$ $\mid$ $T \in$ $\mathcal{M}_{H'}$ $\times$ $\mathcal{M}_{H_{k+1}}$ \} et d'après \textbf{(P2)}, nous avons $\mathcal{M}_{H'}$ = Min \{ $T$ $\mid$ $T \in$ $\mathcal{M}_{H_1}$ $\times$ $\mathcal{M}_{H_2} \times \ldots \times \mathcal{M}_{H_k}$ \}. Nous obtenons, au final donc : $\mathcal{M}_{H}$ = Min \{ $T$ $\mid$ $T \in$ $\mathcal{M}_{H_1}$ $\times$ $\mathcal{M}_{H_2} \times \ldots \times \mathcal{M}_{H_{k+1}}$ \}.\\

Ceci prouve donc que toute traverse minimale de $H$, i.e., de l'ensemble $\mathcal{M}_{H}$, peut être calculée à partir des traverses minimales locales des $k$ hypergraphes partiels $H_i$, $1 \leq i \leq k$.
\end{proof}

\begin{proposition}
Toute traverse minimale $T$ déduite à partir des $\mathcal{M}_{H_i}$, tel que $T$ = min \{$T_1 \cup T_2 \cup T_3 \ldots \cup T_k$\} et $T_i$ $\in$ $\mathcal{M}_{H_i}$, est une traverse minimale de $H$.
\end{proposition}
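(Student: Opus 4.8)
The plan is to prove the converse of the previous proposition: every inclusion-minimal element of the family $\{T_1\cup T_2\cup\cdots\cup T_k \mid T_i\in\mathcal{M}_{H_i}\}$ is a minimal traverse of $H$. Combined with the previous proposition this yields $\mathcal{M}_{H}=\mathrm{Min}\{\mathcal{M}_{H_1}\times\mathcal{M}_{H_2}\times\cdots\times\mathcal{M}_{H_k}\}$, hence the completeness and correctness of \textsc{local-generator} (Algorithm~\ref{algo1}). I will use only the three structural facts already recorded for the decomposition: $\xi_i\subseteq\xi$, $\bigcup_{i=1}^k\xi_i=\xi$, and the $\xi_i$ pairwise disjoint; together with the fact that each $e\in\xi_i$ is contained in $\mathcal{X}_i$, so that $U\cap e=(U\cap\mathcal{X}_i)\cap e$ for every $U\subseteq\mathcal{X}$.

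First I would record a \textbf{localisation lemma}: a set $U\subseteq\mathcal{X}$ is a traverse of $H$ if and only if $U\cap\mathcal{X}_i$ is a traverse of $H_i$ for every $i=1,\dots,k$. The forward implication restricts a hyperedge-hitting set to $\mathcal{X}_i$ (legitimate because $e\subseteq\mathcal{X}_i$ whenever $e\in\xi_i$); the backward implication uses $\bigcup_i\xi_i=\xi$, since any $e\in\xi$ lies in some $\xi_j$ and is hit by $U\cap\mathcal{X}_j\subseteq U$. It is exactly here that the construction matters: the hyperedges of $H_i$ are genuine hyperedges of $H$, not the traces $e\cap\mathcal{X}_i$, which is what makes the equivalence hold.

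Next I would check that any $T=T_1\cup\cdots\cup T_k$ with $T_i\in\mathcal{M}_{H_i}$ is a traverse of $H$: since $T_i\subseteq T\cap\mathcal{X}_i$ and $T_i$ already hits every hyperedge of $\xi_i$, the set $T\cap\mathcal{X}_i$ is a traverse of $H_i$ for each $i$, so $T$ is a traverse of $H$ by the localisation lemma; equivalently $\mathrm{Supp}(T)=|\xi|$. The crux is then minimality. Assume $T$ is inclusion-minimal in the family and, for contradiction, that some $S\subsetneq T$ is a traverse of $H$. By the localisation lemma $S\cap\mathcal{X}_i$ is a traverse of $H_i$, hence by finiteness of $\mathcal{X}_i$ it contains a minimal traverse $T'_i\in\mathcal{M}_{H_i}$. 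Put $T'=T'_1\cup\cdots\cup T'_k$: then $T'\subseteq\bigcup_i(S\cap\mathcal{X}_i)\subseteq S\subsetneq T$, so $T'$ is a strict subset of $T$ lying in the same family --- contradicting the minimality of $T$. Hence no proper subset of $T$ is a traverse, i.e. $T\in\mathcal{M}_H$. Alternatively, once $\mathrm{Supp}(T)=|\xi|$ is established one may invoke Proposition~\ref{Traverse2}: it then suffices to verify that $T$ is essential in the sense of Definition~\ref{defessentiel} --- which is precisely the essentiality test performed in Algorithm~\ref{algo1} --- and the argument above shows that the inclusion-minimal elements of the family pass that test.

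There is no genuinely difficult step; the one point needing care is the backward direction of the localisation lemma and, inside the contradiction, the observation that $\bigcup_i T'_i$ is again a member of $\mathcal{M}_{H_1}\times\cdots\times\mathcal{M}_{H_k}$, which relies on the hyperedges of the $H_i$ partitioning $\xi$ (so that one local minimal traverse per $H_i$ always reassembles into an element of the product). Finally I would spell out the consequence exploited by \textsc{local-generator}: if a tuple $T$ arising from the Cartesian product has $|T|=k=\tau(H)$, then $T\in\mathcal{M}_H$ automatically, since by definition of the transversality number no traverse of $H$ can have fewer than $k$ vertices; hence the minimality test may be skipped on such tuples, which is the origin of the claimed speed-up.
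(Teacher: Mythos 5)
Your proof is correct, and for the transversality half it follows the same route as the paper: both arguments use the facts that the $\xi_i$ partition $\xi$ and that each $T_i$ hits every hyperedge of $\xi_i$ to conclude that $Supp(T)=|\xi|$. Where you genuinely diverge is on minimality, which is the delicate half. The paper disposes of it in one line by appealing to the essentiality test of the Définition \ref{defessentiel} performed at line $16$ of the Algorithm \ref{algo1}; in effect it shows (via the Proposition \ref{Traverse2}) that the sets the algorithm \emph{retains} are minimal traverses, rather than showing that the inclusion-minimal elements of the Cartesian product are minimal traverses. Your localisation lemma, combined with the extraction of local minimal traverses $T'_i\subseteq S\cap\mathcal{X}_i$ from a hypothetical smaller traverse $S\subsetneq T$, closes exactly that gap: it proves that inclusion-minimality within the family $\{T_1\cup\cdots\cup T_k\}$ already forces minimality in $H$, hence that $\mathcal{M}_{H}=\mbox{Min}\{\mathcal{M}_{H_1}\times\cdots\times\mathcal{M}_{H_k}\}$ holds as a purely combinatorial statement, independent of the algorithm, and that the essentiality test and inclusion-minimality select the same elements of the product. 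The one point to make explicit is that your forward localisation step needs $e\subseteq\mathcal{X}_i$ for every $e\in\xi_i$, i.e., you read $\mathcal{X}_i$ as $\bigcup_{e\in\xi_i}e$; this is consistent with the paper's definition of a partial hypergraph, even though the displayed formula defining $\mathcal{X}_i$ in the decomposition is written with a universal quantifier. Your closing observation that tuples of cardinality $k=\tau(H)$ need no minimality test is precisely the paper's justification for the claimed speed-up, so nothing is lost by your more self-contained route.
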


\begin{proof}
$T_i \in \mathcal{M}_{H_i}$ $\Rightarrow$ $T_i \cap e_{ji} \neq \emptyset$ $\forall e_{ji} \in \xi_i$, $1\leq i\leq k$.\\
	
Étant donné que $\bigcup\limits_{i=1}^k \xi_{i}$ = $\xi$ et $\forall (\xi_i$, $\xi_j$) $\in$ $\xi \times \xi \backslash \{\xi_i\}$, $\xi_i \cap \xi_j$ = $\emptyset$, donc $\forall e \in \xi$, $\exists$ $T_i \subset T$ tel que $T_i \cap e \neq \emptyset$. Ainsi, $T$ est une traverse de $H$. (1).\\

Par vérification de la condition de minimalité de la Définition \ref{defessentiel} (ligne $16$ de l'algorithme \ref{algo1}), $T$ = min \{$T_1 \cup \ldots \cup T_k$\} donc $T$ est minimal dans $H$ (2).\\

(1) et (2) permettent de considérer $T$ comme une traverse minimale de $H$.

\end{proof}

\section{Etude Expérimentale}
Différentes expérimentations ont été réalisées sur des jeux de données variés afin d'évaluer l'algorithme \textsc{local-generator}.
 Le premier lot de jeux de données considérés correspond aux hypergraphes générés à partir des bases de données "\textit{Accidents}" \footnote[8]{\emph{http://archive.ics.uci.edu/ml}} et "\textit{Connect-4}" \footnote[9]{\emph{http://fimi.cs.helsinki.fi/data/}}, également utilisés dans le chapitre $4$. Le deuxième lot contient des hypergraphes aléatoires générés (à l'aide du générateur "\textit{random hypergraph generator}" implementé par Boros et \textit{al.} \cite{Boros03anefficient}), en fonction du nombre de sommets, du nombre d'hyperarêtes et de la taille minimale des hyperarêtes. De plus, au cours de notre étude expérimentale, nous avons pris soin de vérifir que la borne maximale retournée par la fonction \textsc{GetMinTransversality} est bien égale au nombre de transversalité pour chaque hypergraphe traité. Ceci implique que les traverses générées par un produit cartésien des différents ensembles de traverses minimales locales et composées d'un nombre de sommets égal à la valeur retournée par \textsc{GetMinTransversality} sont bien des traverses minimales.

\begin{table}[htbp]
\begin{center}
\centering
\begin{tabular}{|r|r|r||r|r||r|r|r|}
\hline
   &\emph{$|$$\mathcal{X}$$|$} & \emph{$|\xi|$}   & \emph{$\tau$(H)} & \emph{$|$$\mathcal{M}_H$$|$}  & \textsc{mmcs} & \textsc{ks} & \textsc{local-generator}\\
  \hline
    Accidents1  & 81   & 990    &  1 & 1 961 & 0,30 & 8,620 & 1,52 \\\hline
    Accidents2  & 336   & 10968    &  2   & 17 486 & 0,81 & - & 2,47\\\hline
    Connect-Win  & 79   & 12800  &  3 &  4 869 431 & 100,59 & - & 294,50 \\\hline
  \hline
  \end{tabular}
\end{center}
\caption{Caractéristiques et temps de traitement des hypergraphes Accidents et Connect (en secondes)}\label{carac10}
\end{table}

Les caractéristiques de chacun des hypergraphes du premier lot considéré sont rappellées dans le tableau \ref{carac10}. La première et la seconde colonne correspond, respectivement, au nombre de sommets et au nombre d'hyperarêtes des différents hypergraphes. La troisième colonne indique le nombre de transversalité, alors que la quatrième colonne indique le nombre de traverses minimales que renferme chaque hypergraphe.

Ces trois hypergraphes ont été traités par trois algorithmes : l'algorithme \textsc{mmcs} de \cite{uno1} l'algorithme \textsc{ks} de \cite{KavvadiasS05} et notre algorithme \textsc{local-generator}. Le tableau \ref{carac10} récapitule aussi les temps d'exécution de chaque algorithme sur chaque hypergraphe. L'algorithme \textsc{mmcs} étant déjà le plus rapide parmi tout ceux proposés dans la littérature, il l'est aussi sur ces trois jeux de données. \textsc{local-generator} est moins rapide alors que \textsc{ks} ne parvient pas à traiter les hypergraphes \emph{Accidents2} et \emph{Connect-Win}. Le fait que \textsc{local-generator} ait des temps de traitement plus grands que \textsc{mmcs} s'explique par le faible nombre de transversalité des 3 hypergraphes qui varie entre 1 et 3 comme indiqué dans le tableau \ref{carac10}. Dans ce cas, la décomposition de l'hypergraphe d'entrée en hypergraphes partiels, ne permet pas d'optimiser convenablement le calcul des traverses minimales. La stratégie "diviser pour régner" n'est pas pertinente lorsque la taille de la plus petite traverse minimale, d'un hypergraphe donné, est très petite. Extraire directement les traverses minimales sur l'hypergraphe considéré s'avère plus judicieux que de passer par les traverses minimales locales.

\begin{table}[htbp]
\begin{center}
\centering
\begin{tabular}{|r|r|r||r|r||r|r|r|}
\hline

    &\emph{$|$$\mathcal{X}$$|$} & \emph{$|\xi|$}  & \emph{$\tau$(H)} & \emph{$|$$\mathcal{M}_H$$|$} & \textsc{mmcs} & \textsc{ks} & \textsc{local-generator}\\
  \hline

    $H1$  & 96   & 52    &  8  & 832 564 740  & 2804,64 & 3911,431  & 1004,269\\ \hline
    $H2$  & 95   & 51  &  9   & 5 040 431 550 & 3608,182 & - & 2899,088\\ \hline
    $H3$  & 119   & 91    &   4 & 4 186 560 000 &  3115,226 & - & 1918,101\\ \hline
    $H4$  & 159   & 142   &   20  &  7 158 203 125 &  5509,455 & - &4775,364\\ \hline

  \hline
  \end{tabular}
\end{center}
\caption{Caractéristiques et temps de traitement des hypergraphes aléatoires (en secondes)}\label{carac30}
\end{table}

Le Tableau \ref{carac30} récapitule les caractéristiques des différents hypergraphes que nous avons générés. Ces données synthétiques ont été générées en fonction des probabilités minimale et maximale d'appartenance d'un sommet aux hyperarêtes dans l'hypergraphe. Si les nombres de sommets et d'hyperarêtes ne sont pas très élévés, ces hypergraphes renferment néanmoins un très grand nombre de traverses minimales qui varie entre $832$ $564$ $740$ et $7$ $158$ $203$ $125$. Le nombre de transversalité, $\tau$(H) variant de $4$ à $20$, est aussi élevé en comparaison avec les hypergraphes du Tableau \ref{carac10}. Ceci favorise donc notre approche puisque les hypergraphes d'entrée sont décomposés en un nombre important de petits hypergraphes partiels et, de ce fait, les traverses minimales de taille égale à $\tau$(H) y sont plus nombreuses épargnant ainsi à notre algorithme le test de la minimalité.

Les temps de traitement en secondes, récapitulés dans le tableau \ref{carac30}, montrent que l'algorithme \textsc{local-generator} présente des temps plus intéressants que ceux obtenus par les algorithmes \textsc{ks} et \textsc{mmcs}. Notons que l'algorithme de \cite{KavvadiasS05} ne parvient à extraire les traverses minimales que sur $H1$. Les temps d'exécution supérieurs à \emph{1500} secondes peuvent s'expliquer par le nombre élevé de traverses minimales calculées. L'écart entre \textsc{local-generator} et \textsc{mmcs} varie entre \emph{709} et \emph{1800} secondes. La différence de performances entre les tableaux \ref{carac10} et \ref{carac30} permet de dresser un profil des types d'hypergraphes sur lesquels notre approche est plus performante. En effet, \textsc{local-generator} présente des temps intéressants dès que la taille des plus petites traverses minimales de l'hypergraphe d'entrée est élevée ce qui lui permet de décomposer l'hypergraphe en plusieurs hypergraphes partiels.

De plus, le nombre de traverses minimales doit être important. Sur des hypergraphes renfermant peu de traverses minimales, \textsc{local-generator} peine à se montrer efficace puisque le nombre de traverses minimales devient négligeable par rapport au nombre de candidats traités et testés. De plus, le gain en temps de traitement est aussi conditionné par le nombre des plus petites traverses minimales. En effet, pour ces dernières, notre algorithme n'effectue pas de test de la minimalité et permet donc d'optimiser les temps de traitements nécessaires pour le calcul de toutes les traverses minimales.
\section{Conclusion}
Dans ce chapitre, nous avons introduit une nouvelle approche pour le calcul des traverses minimales d'un hypergraphe. Cette approche repose sur le paradigme "\emph{diviser pour régner}" afin de décomposer l'hypergraphe d'entrée en hypergraphes partiels, en fonction du nombre de transversalité. Le calcul des traverses minimales locales, correspondantes à ces hypergraphes partiels, permet de retrouver l'ensemble des traverses minimales à travers un produit cartésien combiné à un test de la minimalité. Ceci nous a permis d'introduire un nouvel algorithme \textsc{local-generator} pour l'extraction des traverses minimales. L'étude expérimentale a confirmé l'intérêt de notre approche sur un type précis d'hypergraphes renfermant des propriétés données. Cette approche a été publiée dans une conférence avec comité de lecture \cite{NidhalEGC2014}.

\chapter*{Conclusion générale }
\markboth{Conclusion générale}{\textsc{Conclusion générale}}
\addcontentsline{toc}{chapter}{Conclusion générale} \indent Les
notations semi formelles permettent de spécifier le système en
fournissant ainsi un bon support de communication avec
l'utilisateur alors que les notations formelles apportent une
précision à la spécification. Une précision primordiale pour tout
raisonnement de vérification. L'idée d'intégrer les méthodes semi
formelles et les méthodes formelles permet de bénéficier de la
sémantique des premières et de rendre plus accessibles l'accès des
utilisateurs aux deuxièmes. \\
\indent Parmi les travaux, qui ont vu le jour ces dernières
années, intégrant les méthodes formelles aux méthodes semi
formelles, nous nous sommes intéressés à ceux mettant en jeu la
méthode \textsc{b} et le langage \textsc{uml}. La traduction des
modèles \textsc{uml} en des spécifications \textsc{b} engendre des
machines \textsc{b} compliquées et difficiles à comprendre. Dans
ce contexte, les méthodes semi-formelles peuvent apporter des
solutions à des problèmes qui n'ont pas été résolus par les
méthodes formelles. En effet, une des faiblesses de la méthode
\textsc{b} consiste en son manque de moyens de structuration. Ceci
rend bien évidemment difficile la compréhension des modèles
\textsc{b} surtout pour un utilisateur non
habitué à ses notations.\\
\indent Il paraît donc utile de prévoir des méthodes de conception
où les méthodes formelles et semi-formelles coexistent en offrant
deux vues complémentaires d'un modèle commun sous-jacent et en
permettant au concepteur de travailler sur la vue la plus adaptée.
L'approche proposée dans \cite{FKIH} est une démarche dans ce sens
dans le mesure où elle permet de générer des diagrammes de classes
à partir des machines \textsc{b}. Cette approche \cite{FKIH}
consiste en un nombre de règles de transformation. Dans le but
d'automatiser la traduction d'un modèle \textsc{b} en un diagramme
de classes \textsc{uml}, nous avons mis en oeuvre un nouveau
moteur d'inférence, nommé \textsc{Thinker}, qui à partir d'un
modèle \textsc{b} initial permet de générer le diagramme de
classes correspondant. Cependant, cet objectif d'automatisation
est conditionné par quelques contraintes
auquel \textsc{Thinker} a dû faire face.\\
\indent La principale originalité de notre outil constitue donc à
donner la main à l'utilisateur chaque fois que ceci est
nécessaire. En effet, certaines règles de transformation de
l'approche \cite{FKIH} nécessite l'intervention de l'utilisateur
pour choisir un chemin de traduction parmi plusieurs. Au cours du
processus de transformation, il existe des règles qui offrent
plusieurs representations possibles et une seule peut être
appliquée. Le choix d'une representation est tributaire du seul
utilisateur et \textsc{Thinker} intègre cette option dans son
mécanisme d'inférence. A la rencontre d'une règle ambiguë qui
offre plusieurs représentations possibles, notre moteur
d'inférence donne immédiatement la main à l'utilisateur. Ce
dernier, en s'aidant des représentations graphiques en
\textsc{uml} des différents traductions possibles de la règle en
question, choisit celle qui correspond le mieux à ses besoins.
Autre particularité de \textsc{Thinker}, dans le cas où
l'utilisateur  exprimerait le souhait de revenir sur un choix
antérieur pour le changer, le moteur d'inférence offre la
possibilité de rebrousser chemin et revenir à un choix deja pris
pour en essayer un autre, et ce à n'importe quel moment du
processus d'inférence. Ceci permet à l'utilisateur de diriger lui
même l'application des règles en contrôlant le résultat final de
la base des faits. Les résultats obtenus sont cohérents et sont
conformes à la volonté de l'utilisateur, d'autant plus, dans le
cas d'une transformation de \textsc{b} vers \textsc{uml}.\\
\indent D'autre part, \textsc{Thinker} intègre un parseur qui
permet d'extraire la base des faits initiale directement à partir
d'un modèle \textsc{b} donné et ce sans passer par l'utilisateur.
Ainsi, n'importe quel personne peut utiliser notre outil sans
forcément beaucoup y connaître en méthode \textsc{b}. En outre,
\textsc{Thinker} peut se targuer d'être générique. Outre le cas de
la transformation des méthodes \textsc{b} en des diagrammes
\textsc{uml}, notre moteur d'inférence est capable de déclencher
un mécanisme d'inférence à partir de n'importe quelles base des
faits et bases des règles. Nous avons ainsi crée un nouveau moteur
d'inférence qui se distingue de ses prédécesseurs par un haut
degré d'interactivité avec l'utilisateur. Cette interactivité
garantit un résultat final cohérent parfaitement dirigé par
l'utilisateur tout au long du processus d'inférence. De plus, le
fait que les différentes représentations possibles en \textsc{uml}
soient dessinées par notre moteur d'inférence donne une idée
encore plus précise, à l'utilisateur, des conclusions offertes par
chaque règle
constituant une ambiguïté donnée.\\
\indent Grâce à \textsc{Thinker}, notre objectif d'automatisation
de l'approche de transformation des méthodes \textsc{b} en des
diagrammes des classes \textsc{uml} a été atteint et une étude de
cas est présentée dans le dernier chapitre.\\
\indent Notre travail a dégagé quelques perspectives. En effet,
nous n'avons pris en compte dans notre travail que la traduction
d'un modèle \textsc{b} en un diagramme de classes \textsc{uml}.
Nous pouvons étendre les fonctions de \textsc{Thinker} pour qu'il
puisse générer en plus un diagramme de séquences à partir d'une
spécification \textsc{b} donnée. Par la suite, nous  pouvons
envisager de mettre en place une aide à la décision pour le choix
d'une solution en cas d'ambiguïté. Cette extension permettrait à
l'utilisateur de se faire conseiller un choix qui ressemble le
plus à ses choix antérieurs. Cette idée rentre dans l'optique
d'une volonté d'automatiser encore plus le processus de
transformation de \textsc{b} vers \textsc{uml}.

\nocite{ref1}
\nocite{ref2}
\nocite{ref3}
\nocite{ref4}
\nocite{ref5}
\nocite{ref6}
\nocite{ref7}
\nocite{ref8}
\nocite{ref9}
\nocite{ref10}
\nocite{ref11}
\nocite{ref12}
\nocite{ref13}
\nocite{ref14}
\nocite{ref15}
\nocite{ref16}
\nocite{ref17}
\nocite{ref18}
\nocite{ref19}
\nocite{ref20}

\bibliographystyle{alpha}
\bibliography{2014Jelassi}

\newpage
\include{abstract}
\end{document}